\newcommand{\blind}{1}
\newcommand{\indep}{\perp \!\!\! \perp}
\tikzstyle{node.box} = [rectangle, rounded corners, minimum width=3cm, minimum height=1cm, align=center, draw=black]
\tikzstyle{node.text} = [minimum width=3cm, minimum height=1cm, align=center]
\tikzstyle{arrow} = [thick,->,>=stealth]
\newtheorem{theorem}{Theorem}
\newtheorem{proposition}{Proposition}
\newtheorem{corollary}{Corollary}
\newtheorem{lemma}{Lemma}
\newtheorem{assumption}{Assumption}
\newtheorem{condition}{Condition}
\theoremstyle{remark}
\newtheorem{remark}{Remark}
\begin{document}

\def\spacingset#1{\renewcommand{\baselinestretch}%
{#1}\small\normalsize} \spacingset{1}


\if1\blind
{
  \title{\bf Efficient and robust transfer learning of optimal individualized treatment regimes with right-censored survival data}
  \author{Pan Zhao\thanks{Email: \href{mailto:pan.zhao@inria.fr}{pan.zhao@inria.fr}} , Julie Josse \thanks{Email: \href{mailto:julie.josse@inria.fr}{julie.josse@inria.fr}}\hspace{.2cm}\\
    PreMeDICaL, Inria-Inserm, Montpellier, France\\
    and \\
    Shu Yang\thanks{Email: \href{mailto:syang24@ncsu.edu}{syang24@ncsu.edu}} \\
    Department of Statistics, North Carolina State University, U.S.A.}
  \maketitle
} \fi

\if0\blind
{
  \bigskip
  \bigskip
  \bigskip
  \begin{center}
    {\LARGE\bf Efficient and robust transfer learning of optimal individualized treatment regimes with right-censored survival data}
\end{center}
  \medskip
} \fi

\bigskip
\begin{abstract}
An individualized treatment regime (ITR) is a decision rule that assigns treatments based on patients' characteristics. The value function of an ITR is the expected outcome in a counterfactual world had this ITR been implemented. Recently, there has been increasing interest in combining heterogeneous data sources, such as leveraging the complementary features of randomized controlled trial (RCT) data and a large observational study (OS). Usually, a covariate shift exists between the source and target population, rendering the source-optimal ITR unnecessarily optimal for the target population. We present an efficient and robust transfer learning framework for estimating the optimal ITR with right-censored survival data that generalizes well to the target population. The value function accommodates a broad class of functionals of survival distributions, including survival probabilities and restrictive mean survival times (RMSTs). We propose a doubly robust estimator of the value function, and the optimal ITR is learned by maximizing the value function within a pre-specified class of ITRs. We establish the $N^{-1/3}$ rate of convergence for the estimated parameter indexing the optimal ITR, and show that the proposed optimal value estimator is consistent and asymptotically normal even with flexible machine learning methods for nuisance parameter estimation. We evaluate the empirical performance of the proposed method by simulation studies and a real data application of sodium bicarbonate therapy for patients with severe metabolic acidaemia in the intensive care unit (ICU), combining a RCT and an observational study with heterogeneity.
\end{abstract}

\noindent%
{\it Keywords:} Policy learning, Semiparametric theory, Covariate shift, Transportability, Data integration
\vfill

\newpage
\spacingset{1.9} 
\section{Introduction}
\label{sec:intro}

Data-driven individualized decision making has recently received increasing interest in many fields, such as precision medicine \citep{kosorok2019precision,tsiatis2019dynamic}, mobile health \citep{trella2022designing}, precision public health \citep{rasmussen2020precision} and econometrics \citep{athey2021policy}. The goal of optimal ITR estimation is to learn a decision rule that assigns the best treatment among possible options to each patient based on their individual characteristics in order to optimize some functional of the counterfactual outcome distribution in the population of interest, also known as the value function. The optimal ITR is the one with the maximal value function, and the value function of the optimal ITR is the optimal value function.

For completely observed data without censoring, one prevailing line of work in the statistical and biomedical literature uses model-based methods to solve the optimal ITR problem, such as Q-learning \citep{robins2004optimal,qian2011performance,laber2014interactive} and A-learning \citep{murphy2003optimal,schulte2014q,shi2018high}. Alternatively, direct model-free or policy search methods have been proposed recently, including the classification perspective \citep{zhang2012estimating,zhang2012robust,zhao2012estimating,rubin2012statistical} and interpretable tree or list-based ITRs \citep{laber2015tree,zhang2015using,zhang2018interpretable}, among others. In clinical studies, right-censored survival data are frequently observed as primary outcomes. Recent extensions of optimal ITR with survival data have been established in \cite{goldberg2012q,cui2017tree,jiang2017estimation,bai2017optimal,diaz2018targeted,zhou2022transformation}.

Researchers have investigated using machine learning algorithms to estimate the optimal ITR from large classes, which cannot be indexed by a finite-dimensional parameter \citep{luedtke2016statistical,luedtke2016super}. One typical instance is that the optimal ITR can be learned from the blip function, which is defined as the additive effect of a blip in treatment on a counterfactual outcome, conditional on baseline covariates \citep{robins2004optimal}; and most existing regression or supervised learning methods can be directly applied \citep{kunzel2019metalearners}. However, the ITRs learned by machine learning methods can be too complex to inform policy-making and clinical practice; to facilitate the integration of data-driven ITRs into practice, it is crucial that estimated ITRs be interpretable and parsimonious \citep{zhang2015using}.

Recently, there has been increasing interest in combining heterogeneous data sources, such as leveraging the complementary features of RCT data and a large OS. For example, in biomedical studies and policy research, RCTs are deemed as the gold standard for treatment effects evaluation. However, due to inclusion or exclusion criteria, data availability, and study design, the enrolled participants in RCT who form the source sample may have systematically different characteristics from the target population. Therefore, findings from RCTs cannot be directly extended to the target population of interest \citep{cole2010generalizing,dahabreh2019extending}. See also \citet{colnet2020causal} and \citet{degtiar2021review} for detailed reviews. Heterogeneity in the populations is of great relevance, and a \emph{covariate shift} usually exists where the covariate distributions differ between the source and target populations; thus, the optimal ITR for the source population is not necessarily optimal for the target population. \cite{zhao2019robustifying} uses data from a single trial study and proposes a two-stage procedure to derive a robust and parsimonious rule for the target population; \cite{mo2021learning} proposes a distributionally robust framework that maximizes the worst-case value function under a set of distributions that are ``close" to the training distribution; \cite{kallus2021more} tackles the lack of overlap for different actions in policy learning based on retargeting; \cite{wu2021transfer} and \cite{chu2022targeted} develop a calibration weighting framework that tailors a targeted optimal ITR by leveraging the individual covariate data or summary statistics from a target population; \cite{sahoo2022learning} uses distributionally robust optimization and sensitivity analysis tools to learn a decision rule that minimizes the worst-case risk incurred under a family of test distributions. However, these methods focus on continuous or binary outcomes and only consider a single sample for worst-case risk minimization; the extension to right-censored survival outcomes within the data integration context has not been studied.

In this paper, we propose a new transfer learning method of finding an optimal ITR from a restricted ITR class under the super population framework where the source sample is subject to selection bias and the target sample is representative of the target population with a known sampling mechanism. Specifically, in our value search method, the value function accommodates a broad class of functionals of survival distributions, including survival probabilities and RMSTs. We characterize the efficient influence function (EIF) of the value function and propose the augmented estimator, which involves models for the survival outcome, propensity score, censoring and sampling processes. The proposed estimator is doubly robust in the sense that it is consistent if either the survival outcome model or the models of the propensity score, censoring, and sampling are correctly specified and is locally efficient when all models are correct. We also consider flexible data-adaptive machine learning algorithms to estimate the nuisance parameters and use the cross-fitting procedure to draw valid inferences under mild regularity conditions and a certain rate of convergence conditions. As we consider a restricted class of ITRs indexed by a Euclidean parameter $\eta$, we also establish the $N^{-1/3}$ convergence rate of $\hat{\eta}$, even though its resultant limiting distribution is not standard, and thus very challenging to characterize. Based on this rate of convergence, we show that the proposed estimator for the target value function is consistent and asymptotically normal, even with flexible machine learning methods for nuisance parameter estimation. Interestingly, when the covariate distributions of the source and target populations are the same, i.e., no covariate shift, the semiparametric efficiency bounds of our method and the standard doubly robust method \citep{bai2017optimal} are equal. Moreover, if the true optimal ITR belongs to the restricted class of ITRs, the standard doubly robust method can still learn the optimal ITR despite the covariate shift, but only our method provides valid statistical inference for the value function.

The rest of our paper is organized as follows. In Section~\ref{sec:statfram}, we introduce the statistical framework of causal survival analysis and transfer learning of optimal ITR. Section~\ref{sec:meth} develops the main methodology of learning the value function and associated optimal ITR. Section~\ref{sec:asym} establishes the asymptotic properties of the proposed value estimator. Extensive simulations are reported in Section~\ref{sec:simu} to demonstrate the empirical performance of the proposed method, followed by a real data application given in Section~\ref{sec:rdat}. The article concludes in Section~\ref{sec:disc} with a discussion of some remarks and future work. All proofs and additional results are provided in the Supplementary Material. 

\section{Statistical Framework}
\label{sec:statfram}

\subsection{Causal survival analysis}
\label{subs:csa}

Let $X$ denote the $p$-dimensional vector of covariates that belongs to a covariate space $\mathcal{X} \subset \mathbb{R}^p$, $A \in \mathcal{A} = \{0, 1\}$ denote the binary treatment, and $T \in \mathbb{R}^+$ denote the \emph{survival time} to the event of interest. In the presence of right censoring, the outcome $T$ may not be observed. Let $C \in \mathbb{R}^+$ denote the censoring time and $\Delta = I\{T \leq C\}$ where $I\{\cdot\}$ is the indicator function. Let $U = \min\{T, C\}$ be the observed outcome, $N(t) = I\{U \leq t, \Delta = 1\}$ the counting process, and $Y(t) = I\{U \geq t\}$ the at-risk process.

We use the potential outcomes framework \citep{neyman1923applications,rubin1974estimating}, where for $a \in \mathcal{A} = \{0, 1\}$, $T(a)$ is the survival time had the subject received treatment $a$. The common goal in causal survival analysis is to identify and estimate the counterfactual quantity $\mathbb{E}[y(T(a))]$ for some deterministic transformation function $y(\cdot)$. Such transformations include $y(T) = \min(T, L)$ for the RMST with some pre-specified maximal time horizon $L$, and $y(T) = I\{T \geq t\}$ for the survival probability at time $t$.

Under the standard assumptions (a) consistency: $T = T(A)$, (b) positivity: $Pr(A = a \,|\, X) > 0$ for every $a \in \mathcal{A}$ \emph{almost surely}, (c) unconfoundedness: $A \indep \{T(1), T(0)\} \,|\, X$, (d) conditionally independent censoring: $C \indep \{T(1), T(0)\} \,|\, \{X, A\}$, we can nonparametrically identify $\mathbb{E}[y(T(a))]$ by the outcome regression (OR) formula or the inverse probability weighting (IPW) formula \citep{van2003unified}.

\subsection{ITR and value function}

Without loss of generality, we assume that larger values of $T$ are more desirable. Typically we aim to identify and estimate an ITR $d(x):\mathcal{X} \rightarrow \mathcal{A}$, which is a mapping from the covariate space $\mathcal{X}$ to the treatment space $\mathcal{A} = \{0, 1\}$, that maximizes the expected outcome in a counterfactual world had this ITR been implemented. Suppose $\mathcal{D}$ is the class of candidate ITRs of interest, then define the potential outcome $T(d)$ under any $d \in \mathcal{D}$ by $T(d) = d(X) T(1) + (1 - d(X)) T(0)$, and the value function \citep{manski2004statistical} of $d$ is defined by $V(d) = \mathbb{E}[y(T(d))]$. Then by maximizing $V(d)$ over $\mathcal{D}$, the optimal ITR is defined by $d^{\text{opt}} = \arg\max_{d \in \mathcal{D}} V(d)$. See \cite{qian2011performance} for more details. 

To estimate the value function, we can use the OR or IPW formulas, and also a doubly robust method \citep{bai2017optimal}:
\begin{equation}\label{eq:dr.orig}
\begin{split}
V_{DR} (d) = & \mathbb{E} \bigg[ \frac{I\{A = d(X)\}\Delta\, y(U)}{Pr(A = d(X) \,|\, X) S_C(U \,|\, A, X)} \\
& \quad + \left(1 - \frac{I\{A = d(X)\}}{Pr(A = d(X) \,|\, X)} \right) \mathbb{E}[y(T) \,|\, A= d(X), X] \\
& \quad + \frac{I\{A = d(X)\}}{Pr(A = d(X) \,|\, X)} \int_0^\infty \frac{\mathrm{d}M_C(u \,|\, A, X)}{S_C(u \,|\, A, X)} \mathbb{E}[y(T) \,|\, T \geq u, A, X] \bigg],
\end{split}
\end{equation}
where $S_C(t\,|\,a,x) = Pr(C>t\,|\,A = a, X = x)$ is the conditional survival function for the censoring process, $\mathrm{d} M_C(u \,|\, A = a, X) = \mathrm{d}N_C(u) - Y(u)\mathrm{d}\Lambda_C(u\,|\,A=a,X)$ is the martingale increment for the censoring process, $N_C(u) = I\{U \leq u, \Delta = 0\}$ and $\Lambda_C(u\,|\,A=a,X) = -\log(S_C(u \,|\, A = a, X))$. The first term in \eqref{eq:dr.orig} is the IPW formula, and the augmentation terms capture additional information from the subjects who do not receive treatment $d$, and who receive treatment $d$ but are censored.

In (clinical) practice, it is usually desirable to consider a class of ITRs indexed by a Euclidean parameter $\eta = (\eta_1,\ldots,\eta_{p+1})^T \in \mathbb{R}^{p+1}$ for feasibility and interpretability. Let $V(\eta) = V(d_\eta)$. Throughout, we focus on such a class of linear ITRs:
\begin{equation*}
    \mathcal{D}_\eta = \{d_\eta: d_\eta(X) = I\{\eta^T \tilde{X} \geq 0\}, |\eta_{p+1}| = 1\},
\end{equation*}
where $\tilde{X} = (1, X^T)^T$, and for identifiability we assume there exists a continuous covariate whose coefficient has absolute value one \citep{zhou2022transformation}; without loss of generality, we assume $|\eta_{p+1}| = 1$. Therefore, the population parameter $\eta^\ast$ indexing the optimal ITR is $\eta^\ast = \arg\max_{\eta \in \{\eta \in \mathbb{R}^{p + 1}: |\eta_{p+1}| = 1\}} V(\eta)$, and the optimal value function is $V(\eta^\ast)$.

\subsection{Transfer learning}
The performance of such a learned ITR may suffer from a covariate shift in which the population distributions differ \citep{sugiyama2012machine}. Instead of minimizing the worst-case risk, here we consider a super population framework. Suppose that a source sample of size $n$ and a target sample of size $m$ are sampled independently from the target super population with different mechanisms. Let $I_S$ and $I_T$ denote the indicator of sampling from source and target populations, respectively. A covariate shift means that $Pr(I_S = 1 \,|\, X) \neq Pr(I_T = 1 \,|\, X)$. In the source sample, independent and identically distributed (i.i.d.) data $\mathcal{O}_s = \{X_i, A_i, U_i, \Delta_i, I_{S,i} = 1, I_{T,i} = 0\}_{i=1}^n$ are observed from $n$ subjects; in the target sample, it is common that only the covariates information is available, so i.i.d. data $\mathcal{O}_t = \{X_i, I_{S,i} = 0, I_{T,i} = 1\}_{i=n+1}^{n+m}$ are observed from $m$ subjects. The sampling mechanism and data structure are illustrated in Figure~\ref{fig:data_structure}.

\begin{figure}[ht]
\centering
\caption{Schematic of the data structure of the source and target samples within the target super population framework.}
\begin{tikzpicture}[node distance=2cm]
\node (super_target_popu) [node.box] {Target super population};

\node (finite_source_popu) [node.box, below of=super_target_popu, xshift=-4cm] {Finite population $\{T(1), T(0), X\}$};
\node (finite_target_popu) [node.box, below of=super_target_popu, xshift=4cm] {Finite population $\{T(1), T(0), X\}$};

\node (source_sampling) [node.text, below of=finite_source_popu] {Source sampling $I_S$};
\node (target_sampling) [node.text, below of=finite_target_popu] {Target sampling $I_T$};

\node (source_sample) [node.box, below of=source_sampling] {Complete source sample \\ $\left\{T_i(1), T_i(0), X_i, I_{S,i} = 1, I_{T,i} = 0\right\}_{i=1}^{n}$};
\node (target_sample) [node.box, below of=target_sampling] {Complete target sample \\ $\left\{T_i(1), T_i(0), X_i, I_{S,i} = 0, I_{T,i} = 1\right\}_{i=n + 1}^{n + m}$};

\node (source_mecha) [node.text, below of=source_sample] {Treatment assignment $A$ \\ Censoring $C$};
\node (target_mecha) [node.text, below of=target_sample] {Only observe covariates $X$};

\node (observed_source_sample) [node.box, below of=source_mecha] {Observed source sample \\ $\left\{X_i, A_i, U_i, \Delta_i, I_{S,i} = 1, I_{T,i} = 0\right\}_{i=1}^{n}$};
\node (observed_target_sample) [node.box, below of=target_mecha] {Observed target sample \\ $\left\{X_i, I_{S,i} = 0, I_{T,i} = 1\right\}_{i=n + 1}^{n + m}$};

\draw [arrow] (super_target_popu) -- (finite_source_popu);
\draw [arrow] (super_target_popu) -- (finite_target_popu);
\draw [arrow] (finite_source_popu) -- (source_sampling);
\draw [arrow] (finite_target_popu) -- (target_sampling);
\draw [arrow] (source_sampling) -- (source_sample);
\draw [arrow] (target_sampling) -- (target_sample);
\draw [arrow] (source_sample) -- (source_mecha);
\draw [arrow] (target_sample) -- (target_mecha);
\draw [arrow] (source_mecha) -- (observed_source_sample);
\draw [arrow] (target_mecha) -- (observed_target_sample);

\end{tikzpicture}
\label{fig:data_structure}
\end{figure}
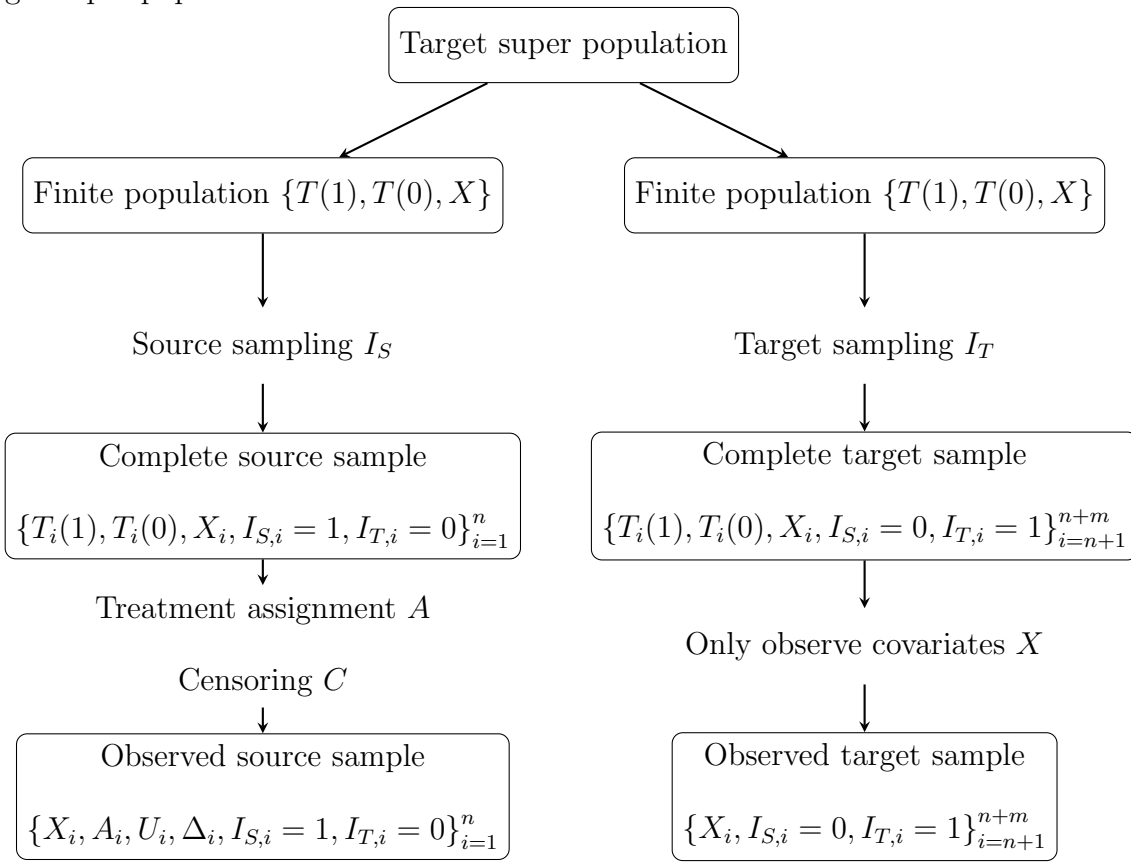

In this framework, we assume that the source and target sampling mechanisms are independent, which holds if two separate studies are conducted independently by different research projects in different locations or in two separate time periods, and the target population is sufficiently large. In the context of combining the RCT and observational study, this framework corresponds to the \emph{non-nested} study design \citep{dahabreh2021study}.

\begin{remark}
In the framework illustrated in Figure~\ref{fig:data_structure}, we also assume the existence of the finite population of size $N$, which helps us clarify the sampling mechanism and identification strategy. The two separate finite populations exemplify the independence of the source and target sampling processes. We present the identification formulas in Section~\ref{sec:meth}; however, we do not require $N$ to be fixed and known. Equivalently, it is also possible to assume a pooled population consisting of a source population and a target population, and similar identification formulas can be proposed based on the density ratio of the two populations.
\end{remark}

\section{Methodology}
\label{sec:meth}

\subsection{Identification and semiparametric efficiency}

To identify the causal effects from the observed data, we make the following assumptions.

\begin{assumption}\label{asmp:cnpc}
(a) $T = T(A)$ almost surely. (b) $Pr(A = a \,|\, X, I_S = 1) > 0$ for every $a$ almost surely. (c) $A \indep \{T(1), T(0)\} \,|\, \{X, I_S = 1\}$. (d) $C \indep \{T(1), T(0)\} \,|\, \{X, A, I_S = 1\}$.
\end{assumption}

Assumption~\ref{asmp:cnpc} includes the standard assumptions as we have introduced in Section~\ref{subs:csa}. Here we only assume them in the source population. Assumption ~\ref{asmp:cnpc}\textit{(a)} implies that the observed outcome is the potential outcome under the actual assigned treatment. Assumption ~\ref{asmp:cnpc}\textit{(b)} states that each subject has a positive probability of receiving both treatments. Assumption ~\ref{asmp:cnpc}\textit{(c)} requires that all confounding factors are measured so that treatment assignment is as good as random conditionally on $X$. Assumption ~\ref{asmp:cnpc}\textit{(d)} essentially states that the censoring process is non-informative conditionally on $X$. Furthermore, we require additional assumptions for the source and target populations.

\begin{assumption}[Survival mean exchangeability]\label{asmp:sme}
$\mathbb{E}[y(T(a)) \,|\, X, I_S = 1] = \mathbb{E}[y(T(a)) \,|\, X]$ for every $a \in \mathcal{A}$.
\end{assumption}

\begin{assumption}[Positivity of Source Inclusion]\label{asmp:psi}
$0 < Pr(I_S = 1 \,|\, X) < 1$ almost surely.
\end{assumption}

\begin{assumption}[Known target design]\label{asmp:ktd}
The target sample design weight $e(x) = \pi_T^{-1}(x) = 1 / Pr(I_T = 1 \,|\, X = x)$ is known by design.
\end{assumption}

Assumption \ref{asmp:sme} is similar to the mean exchangeability over trial participation \citep{dahabreh2019generalizing}, and is weaker than the ignorablility assumption \citep{stuart2011use}, i.e., $I_S \indep \{T(1), T(0)\} \,|\, X$. Assumption \ref{asmp:psi} states that each subject has a positive probability to be included in the source sample, and implies adequate \emph{overlap} of covariate distributions between the source and target populations. Assumption \ref{asmp:ktd} is commonly assumed in the survey sampling literature; thus the design-weighted target sample is representative of the target population. In an observational study with simple random sampling, we have $e(x) = N / m$, where $N$ is the target population size.

Under this framework, we have the following key identity that for any $g(X)$ 
\begin{equation}\label{eq:cw.iden}
\mathbb{E} \left[\frac{I_S}{\pi_S (X)} g(X) \right] = \mathbb{E} [I_T \, e(X) g(X)] = \mathbb{E} [g(X)],
\end{equation}
where $\pi_S(X) = Pr(I_S = 1 \,|\, X)$ is the sampling score.

\begin{proposition}[Identification formulas]\label{prop:iden}
Under Assumptions \ref{asmp:cnpc} - \ref{asmp:ktd}, the value function $V(d)$ can be identified by the outcome regression formula: 
\begin{equation}\label{eq:or}
V(d) = \mathbb{E}[I_T \,e(X)\mathbb{E}[y(T) \,|\, A = d(X), X, I_S = 1]],
\end{equation}
and the IPW formula:
\begin{equation}\label{eq:ipw}
V(d) = \mathbb{E}\left[\frac{I_S}{\pi_S(X)} \frac{I\{A = d(X)\}}{\pi_d(X)} \frac{\Delta\, y(U)}{S_C(U\,|\,A,X)}\right],
\end{equation}
where $\pi_d (X) = d(X) \pi_A(X) + (1 - d(X)) (1 - \pi_A(X))$ with the propensity score $\pi_A(X) = Pr(A = 1 \,|\, X, I_S = 1)$, and $S_C(t \,|\, a, x) = Pr(C>t\,|\,A = a, X = x, I_S = 1)$.
\end{proposition}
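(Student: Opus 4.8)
The plan is to prove both formulas by reducing the target-population value $V(d)=\mathbb{E}[y(T(d))]$ to a target average of a single source-population regression function, and then invoking the key weighting identity \eqref{eq:cw.iden}. Since $d(X)\in\{0,1\}$, I would first write $y(T(d)) = d(X)\,y(T(1)) + (1-d(X))\,y(T(0))$, and take iterated expectations over the target covariate law to get $V(d) = \mathbb{E}\big[\mathbb{E}[y(T(d))\mid X]\big]$. The goal of the first stage is to show that the inner conditional mean equals $g(X) := \mathbb{E}[y(T)\mid A=d(X), X, I_S=1]$, after which both formulas drop out of \eqref{eq:cw.iden}.

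To identify the inner conditional mean I would chain two steps. First, the survival mean exchangeability (Assumption~\ref{asmp:sme}) transports the super-population conditional mean into the source population: $\mathbb{E}[y(T(a))\mid X] = \mathbb{E}[y(T(a))\mid X, I_S=1]$ for each $a$. Second, inside the source population the standard causal assumptions apply, giving $\mathbb{E}[y(T(a))\mid X, I_S=1] = \mathbb{E}[y(T)\mid A=a, X, I_S=1]$, where the first equality uses unconfoundedness (Assumption~\ref{asmp:cnpc}(c)) to add the conditioning event $\{A=a\}$ without changing the law of $T(a)$, and consistency (Assumption~\ref{asmp:cnpc}(a)) replaces $T(a)$ by the observed $T$ on $\{A=a\}$. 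Combining the two treatment arms yields $\mathbb{E}[y(T(d))\mid X] = g(X)$, hence $V(d) = \mathbb{E}[g(X)]$. The outcome regression formula \eqref{eq:or} then follows immediately from the identity $\mathbb{E}[g(X)] = \mathbb{E}[I_T\,e(X)\,g(X)]$ in \eqref{eq:cw.iden}, valid under Assumptions~\ref{asmp:psi} and \ref{asmp:ktd}.

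For the IPW formula \eqref{eq:ipw} I would instead use the other half of \eqref{eq:cw.iden}, $\mathbb{E}[g(X)] = \mathbb{E}\big[\tfrac{I_S}{\pi_S(X)}\,g(X)\big]$, and then recover $g(X)$ inside the source population by peeling off the treatment and censoring weights in two nested conditional expectations. The treatment-weighting stage uses $\pi_d(X) = Pr(A=d(X)\mid X, I_S=1)$ together with the tower property over $A$ to show $\mathbb{E}\big[\tfrac{I\{A=d(X)\}}{\pi_d(X)}\,W \,\big|\, X, I_S=1\big] = \mathbb{E}[W\mid A=d(X), X, I_S=1]$ for any integrable $W$; applying this with $W = \Delta\,y(U)/S_C(U\mid A,X)$ reduces the claim to the inverse-probability-of-censoring-weighting (IPCW) identity $\mathbb{E}\big[\tfrac{\Delta\,y(U)}{S_C(U\mid A,X)} \,\big|\, A=a, X, I_S=1\big] = \mathbb{E}[y(T)\mid A=a, X, I_S=1]$.

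I expect this last IPCW step to be the main obstacle. The key observation is that $\Delta=1$ forces $U=T$, so $\Delta\,y(U)=\Delta\,y(T)$, and conditioning further on $T$ reduces matters to evaluating $\mathbb{E}[\Delta\mid T, A=a, X, I_S=1]$. Here I would combine consistency with the conditionally independent censoring assumption (Assumption~\ref{asmp:cnpc}(d)) to obtain $C \indep T \mid A=a, X, I_S=1$, so that $\mathbb{E}[\Delta\mid T=t, A=a, X, I_S=1] = Pr(C\ge t\mid A=a, X, I_S=1)$, which equals $S_C(t\mid a,X)$ and cancels the denominator, leaving exactly $\mathbb{E}[y(T)\mid A=a, X, I_S=1]$. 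The only delicate point is the $\{C\ge t\}$ versus $\{C>t\}$ distinction in matching $\Delta$ to $S_C$; this is resolved by the usual continuity convention on the censoring time, or equivalently by the martingale formulation (the increment $\int_0^\infty \mathrm{d}M_C/S_C$ has conditional mean zero) that the paper already employs in \eqref{eq:dr.orig}. Substituting $a=d(X)$ and reassembling the sampling, treatment, and censoring weights then gives \eqref{eq:ipw}.
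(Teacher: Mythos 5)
Your proposal is correct and follows essentially the same route as the paper's own proof: iterated expectations plus the decomposition $y(T(d)) = d(X)y(T(1)) + (1-d(X))y(T(0))$, then Assumption~\ref{asmp:sme} to transport into the source population, unconfoundedness and consistency to identify the conditional mean, and finally the weighting identity \eqref{eq:cw.iden} to yield both formulas. The only difference is that you spell out the IPTW--IPCW cancellation (including the $\{C \geq t\}$ versus $\{C > t\}$ convention) where the paper simply cites it as the standard formula from \citet{van2003unified}; your expanded version is accurate.
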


Based on the identification formulas~\eqref{eq:or} and \eqref{eq:ipw}, we can construct plug-in estimators for $V(d)$, using the sampling score $\pi_S(X)$ or design weights $e(X)$ to account for the sampling bias. By the identity~\eqref{eq:cw.iden}, the design weights $I_T \, e(X)$ in the OR formula~\eqref{eq:or} with the target sample can also be replaced by the inverse of sampling score $I_S / \pi_{S}(X)$ using the source sample. However, these estimators are biased if the posited models are misspecified, and extreme weights from $\pi_S, \pi_A$ and $S_C$ usually lead to large variability. Therefore, we consider a more efficient and robust approach, motivated by the efficient influence function for $V(d)$.

\begin{proposition}\label{prop:eif}
Under Assumptions \ref{asmp:cnpc} - \ref{asmp:ktd}, the efficient influence function of $V(d)$ is
\begin{align}\label{eq:eif}
\begin{split}
\phi_d = & \frac{I_S}{\pi_S(X)} \frac{I\{A = d(X)\}}{\pi_d(X)} \frac{\Delta\, y(U)}{S_C(U \,|\, A,X)} - V(d) \\ 
& + \left( I_T \,e(X) - \frac{I_S}{\pi_S(X)} \frac{I\{A = d(X)\}}{\pi_d(X)}\right) \mu(d(X), X) \\
& + \frac{I_S}{\pi_S(X)} \frac{I\{A = d(X)\}}{\pi_d(X)} \int_0^\infty \frac{\mathrm{d}M_C(u\,|\,A,X)}{S_C(u\,|\,A,X)} Q(u, A, X).
\end{split}
\end{align}
where $\mu(a, x) = \mathbb{E}[y(T) \,|\, A = a, X = x, I_S =1]$ and $Q(u, a, x) = \mathbb{E}[y(T) \,|\, T \geq u, A = a, X = x, I_S = 1]$ \footnote{Note that $\mathbb{E}[y(T) \,|\, T \geq u, A, X] = -\int_u^\infty y(s) \, \mathrm{d}S(s \,|\, A, X) / S(u\,|\, A, X)$. For instance, when $y(T) = I\{T \geq t\}$, we have $\mathbb{E}[y(T) \,|\, T \geq u, A, X] = S(t\,|\, A, X)/S(u \,|\, A, X)$ for $u \leq t$.}.
\end{proposition}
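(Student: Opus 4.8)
The plan is to derive \eqref{eq:eif} through the projection machinery of semiparametric theory and then verify that the resulting gradient is the one displayed. First I would fix the observed-data unit $O = (I_S, I_T, X, I_S A, I_S U, I_S \Delta)$ and factorize its likelihood under the non-nested design into variation-independent pieces: the covariate law of the target super population, the source/target sampling mechanism (with the target design weight $e(x)$ known by Assumption~\ref{asmp:ktd}), the propensity score $\pi_A(X)$, and -- through the counting processes $N(t)$ and $N_C(t)$ together with the at-risk process $Y(t)$ -- the conditional event hazard $\mathrm{d}\Lambda_T(t\mid a,x)$ and censoring hazard $\mathrm{d}\Lambda_C(t\mid a,x)$ in the source. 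This factorization yields an orthogonal decomposition of the nuisance tangent space, $\mathcal{T} = \mathcal{T}_X \oplus \mathcal{T}_S \oplus \mathcal{T}_A \oplus \mathcal{T}_T \oplus \mathcal{T}_C$, where $\mathcal{T}_S$ is constrained by the known design, and $\mathcal{T}_T$, $\mathcal{T}_C$ are the familiar spaces of counting-process martingale scores of the form $I_S\int g(u,A,X)\,\mathrm{d}M_T$ and $I_S\int h(u,A,X)\,\mathrm{d}M_C$, respectively.

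Second, I would compute the pathwise derivative of $V(d)$ along a regular parametric submodel indexed by $\theta$. Differentiating the identification formulas of Proposition~\ref{prop:iden} -- and repeatedly using the weighting identity \eqref{eq:cw.iden}, which exchanges $I_T\,e(X)$ for $I_S/\pi_S(X)$ -- exhibits the inverse-probability-weighted quantity in \eqref{eq:ipw} as one (inefficient) gradient of $V(d)$. The efficient influence function is then the projection of this gradient onto $\mathcal{T}$. Carrying out the projection onto $\mathcal{T}_X$, $\mathcal{T}_S$ and $\mathcal{T}_A$ produces the augmentation on the second line of \eqref{eq:eif}, namely $\bigl(I_T\,e(X) - \tfrac{I_S}{\pi_S(X)}\tfrac{I\{A=d(X)\}}{\pi_d(X)}\bigr)\mu(d(X),X)$; this term is mean-zero because $\mathbb{E}[I_T\,e(X)\,\mu(d(X),X)] = \mathbb{E}[\mu(d(X),X)] = V(d)$ under Assumptions~\ref{asmp:sme} and~\ref{asmp:ktd}, and it is precisely the covariate- and treatment-orthogonalization that delivers double robustness.

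Third, and this is the step I expect to be the main obstacle, I would handle the censoring through counting-process martingale theory. The inverse-probability-of-censoring weight $\Delta\,y(U)/S_C(U\mid A,X)$ has a pathwise derivative in the censoring-hazard direction that is not orthogonal to $\mathcal{T}_C$; projecting it off (equivalently, invoking the Tsiatis representation of IPCW estimating functions) replaces the residual dependence on $\Lambda_C$ by the martingale integral on the third line of \eqref{eq:eif}, $\tfrac{I_S}{\pi_S(X)}\tfrac{I\{A=d(X)\}}{\pi_d(X)}\int_0^\infty \tfrac{\mathrm{d}M_C(u\mid A,X)}{S_C(u\mid A,X)}\,Q(u,A,X)$, in which $Q(u,a,x) = \mathbb{E}[y(T)\mid T\ge u, A=a, X=x, I_S=1]$ emerges exactly as the integrand that annihilates the censoring-score contribution. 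The technical crux is verifying that $\mathrm{d}M_C(u\mid A,X)$ is a martingale increment with respect to the observed filtration under Assumption~\ref{asmp:cnpc}(d) and that this choice of integrand is the optimal (variance-minimizing) one.

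Finally, I would assemble the three projections, confirm that $\mathbb{E}[\phi_d] = 0$ and that $\phi_d \in \mathcal{T}$, and appeal to uniqueness of the gradient lying in the tangent space to conclude that $\phi_d$ is the efficient influence function. As an independent cross-check on the algebra, I would verify the double-robustness property directly: the estimating function $\phi_d + V(d)$ has expectation $V(d)$ whenever either the outcome model $\mu$ or the triple of weighting models $(\pi_S, \pi_A, S_C)$ is correctly specified, even if the other is not, which simultaneously confirms the derivation and anticipates the robustness of the resulting estimator.
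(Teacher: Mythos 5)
Your proposal is sound and would reach \eqref{eq:eif}, but it is organized differently from the paper's argument. The paper works in two stages: it first derives the \emph{full-data} efficient influence function $\phi_d^F = I_T\,e(X)\mu(d(X),X) + \tfrac{I_S}{\pi_S(X)}\tfrac{I\{A=d(X)\}}{\pi_d(X)}(y(T)-\mu(A,X)) - V(d)$ by directly computing the pathwise derivative $\partial_\epsilon V_\epsilon(d)\vert_{\epsilon=0}$ under the factorization $p(Z)=\{p(X)\pi_S(X)p(A\mid X,I_S=1)p(T\mid A,X,I_S=1)\}^{I_S}\{p(X)\}^{I_T}$ and rearranging it into the form $\mathbb{E}[\phi_d^F S(Z)]$; it then passes to the censored observed data by \emph{citing} the coarsened-data representation of Tsiatis (2006, Section 10.4), $\phi_d = \Delta\,\phi_d^F/S_C(U\mid A,X) + \int_0^\infty L(u,A,X)\,\mathrm{d}M_C(u\mid A,X)/S_C(u\mid A,X)$ with $L(u,A,X)=\mathbb{E}[\phi_d^F\mid T\geq u,A,X]$, and finally simplifies using the identity $\int_0^\infty \mathrm{d}M_C(u\mid A,X)/S_C(u\mid A,X) = 1-\Delta/S_C(U\mid A,X)$. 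You instead propose to work entirely at the observed-data level: decompose the nuisance tangent space orthogonally, take the IPW quantity as an initial gradient, and project. That is a legitimate constructive route, and your censoring step (projecting off the censoring-score direction so that $Q(u,a,x)=\mathbb{E}[y(T)\mid T\geq u,A=a,X=x,I_S=1]$ emerges as the variance-minimizing integrand) is precisely the content of the Tsiatis formula the paper invokes, so you would be re-deriving rather than citing it. Two small remarks: establishing that the IPW quantity is in fact a gradient still requires essentially the same pathwise-derivative computation the paper performs, so your route relocates rather than avoids that algebra; and your double-robustness cross-check is carried out in the paper, but as part of the proof of Theorem~\ref{thm:para} rather than of Proposition~\ref{prop:eif}.
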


The semiparametric EIF guides us in constructing efficient estimators combining the source and target samples. Compared to \eqref{eq:dr.orig}, this EIF captures additional covariates information from the target population via the outcome model and thus removes the sampling bias. An efficient estimation procedure is proposed in the next section, and we show that it enjoys the double robustness property, i.e., it is consistent if either the survival outcome models $\mu(a, x), Q(u, a, x)$ or the models of propensity score $\pi_A(x)$, sampling score $\pi_S(x)$ and censoring process $S_C(t \,|\, a, x)$ are correct. Moreover, this EIF is Neyman orthogonal in the sense discussed in \citet{chernozhukov2018double}. Therefore, a cross-fitting procedure is also proposed, allowing flexible machine learning methods for the nuisance parameters estimation, and $\sqrt{N}$ rate of convergence can be achieved.

\subsection{An efficient and robust estimation procedure}
\label{subsec:erp}

In this section, we focus on estimating the survival function $S_d(t) = Pr(T(d) > t)$ as the value function under ITR $d$. Following the asymptotic linear characterization of survival estimands in \cite{yang2021smim}, our results are readily extended to a broad class of functionals of survival distributions. For instance, the value function of the RMST under ITR $d$ is simply $\int_0^L S_d(t) \mathrm{d}t$. 

Based on the EIF~\eqref{eq:eif}, we propose an estimator for the survival function 
\begin{align}\label{eq:acw}
\begin{split}
\hat{S}_d (t) = & \frac{1}{N} \sum_{i=1}^{N} \bigg\{ \frac{I_{S,i}}{\hat{\pi}_S(X_i)} \frac{I\{A_i = d(X_i)\}}{\hat{\pi}_{d}(X_i)} \frac{\Delta_i \, Y_i(t)}{\hat{S}_C(t \,|\, A_i,X_i)} \\
& \qquad\quad + \left(I_{T,i} \, e(X_i) - \frac{I_{S,i}}{\hat{\pi}_S(X_i)} \frac{I\{A_i = d(X_i)\}}{\hat{\pi}_{d}(X_i)}\right) \hat{S}(t \,|\, A = d(X_i), X_i) \\
& \qquad\quad + \frac{I_{S,i}}{\hat{\pi}_S(X_i)} \frac{I\{A_i = d(X_i)\}}{\hat{\pi}_{d}(X_i)} \int_0^\infty \frac{\hat{S}(t\,|\,A_i,X_i) \mathrm{d}\hat{M}_C(u\,|\,A_i,X_i)}{\hat{S}(u\,|\,A_i,X_i) \hat{S}_C(u\,|\,A_i,X_i)} \bigg\},
\end{split}
\end{align}
where $S(t \,|\, a, x) = Pr(T > t \,|\, A = a, X = x, I_S = 1)$ is the treatment-specific conditional survival function. We posit (semi)parametric models for the nuisance parameters. Let $\pi_A(X;\theta)$ be the posited propensity score model, for example, using logistic regression $\rm{logit}\{\pi_A(X;\theta)\} = \theta^T \tilde{X}$, where $\rm{logit}(x) = \log\{x/(1 - x)\}$. We use the Cox proportional hazard model $\Lambda(t \,|\, A = a, X = x) = \Lambda_{0,a}(t) \exp(\beta_a^T x)$ to estimate the survival functions $S(t \,|\, a, x) = \exp\{-\Lambda(t \,|\, a, x)\}$ and the cumulative baseline hazard function $\Lambda_{0,a}(t) = \int_0^t \lambda_{0,a}(u) \mathrm{d}u$ can be estimated by the Breslow estimator \citep{breslow1972contribution}. Similarly, we posit a Cox proportional hazard model for the censoring process $\Lambda_C(t \,|\, A = a, X = x) = \Lambda_{C0,a}(t) \exp(\alpha_a^T x)$, and the cumulative baseline hazard function $\Lambda_{C0,a}(t)$ is estimated by the Breslow estimator. The sampling score estimation is discussed in the next section.

Let $\hat{S} (t; \eta) = \hat{S}_{d_\eta} (t)$ be the estimated value function for the ITR class $\mathcal{D}_\eta$, then the optimal ITR is given by $d_{\hat{\eta}}(x)$, where $\hat{\eta} = \arg\max_\eta \hat{S} (t; \eta)$.

\subsection{Calibration weighting}

To correct the bias due to the covariate shift between populations, most existing methods directly model the sampling score \citep{cole2010generalizing}, i.e., inverse probability of sampling weighting (IPSW). However, the IPSW method requires the sampling score model to be correctly specified, and it could also be numerically unstable. Alternatively, we introduce the calibration weighting (CW) approach motivated by the identity \eqref{eq:cw.iden}, which is similar to the entropy balancing method \citep{hainmueller2012entropy}. 

Let $\mathbf{g}(X)$ be a vector of functions of $X$ to be calibrated, such as the moments, interactions, and non-linear transformations of $X$. Each subject $i$ in the source sample is assigned a weight $q_i$ by solving the following optimization task:
\begin{align}
\min_{q1,\ldots,q_n} & \sum_{i=1}^n q_i \log q_i, \label{eq:eb} \\
\text{subject to } & q_i \geq 0, \, \sum_{i=1}^n q_i = 1, \, \sum_{i=1}^n q_i \mathbf{g}(X_i) = \tilde{\mathbf{g}}, \label{eq:cw.cst}
\end{align}
where $\tilde{\mathbf{g}} = \sum_{i=n+1}^{n+m} e(X_i) \mathbf{g}(X_i) / \sum_{i=n+1}^{n+m} e(X_i)$ is a design-weighted estimate of $\mathbb{E}[\mathbf{g}(X)]$. The objective function~\eqref{eq:eb} is the negative entropy of the calibration weights, which ensures that the empirical distribution of the weights is not too far away from the uniform, such that it minimizes the variability due to heterogeneous weights. The final balancing constraint in \eqref{eq:cw.cst} calibrates the covariate distribution of the weighted source sample to the target population in terms of $\mathbf{g}(X)$. By introducing the Lagrange multiplier $\lambda$, the minimizer of the optimization task is $q_i = \exp\{\hat{\lambda}^T \mathbf{g}(X_i)\} / \sum_{i=1}^n \exp\{\hat{\lambda}^T \mathbf{g}(X_i)\}$, where $\hat{\lambda}$ solves the estimating equation $\sum_{i=1}^n \exp\{\lambda^T \mathbf{g}(X_i)\} \{\mathbf{g}(X_i) - \tilde{\mathbf{g}}\} = 0$. Since we only require specifying $\mathbf{g}(X)$, calibration weighting avoids explicitly modeling the sampling score and evades extreme weights. 

Moreover, suppose that the sampling score follows a loglinear model $\pi_S (X;\lambda) = \exp\{\lambda^T \tilde{X}\}$, \cite{lee2021improving,lee2022generalizable} show that there is a direct correspondence between the calibration weights and the estimated sampling score, i.e., $q_i = \{N \pi_S(X_i;\hat{\lambda)}\}^{-1} + o_p(N^{-1})$. We also note that if the fraction $n/N$ is small, the loglinear model is close to the widely used logistic regression model; our simulation studies show the robustness of calibration weights.

\begin{remark}
Other objective functions can also be used for calibration weights estimation. \cite{chu2022targeted} considers a generic convex distance function $h(q)$ from the Cressie and Read family of discrepancies \citep{cressie1984multinomial}. Thus the optimization task is $\min_{q1,\ldots,q_n} \sum_{i=1}^n h(q_i)$ under the constraints~\eqref{eq:cw.cst}, and the correspondence between the sampling score model $\pi_S$ and the objective function $h$ has also been established.
\end{remark}

\subsection{Cross-fitting}
\label{subsec:cf}

Utilizing the Neyman orthogonality of EIF \eqref{eq:eif}, we consider flexible machine learning methods for estimating the nuisance parameters, where we want to remain agnostic on modeling assumptions for the complex treatment assignment, survival, and censoring processes. There is extensive recent literature on nonparametric methods for heterogeneous treatment effect estimation with survival outcomes. \cite{cui2020estimating} extends the generalized random forests \citep{athey2019generalized} to estimate heterogeneous treatment effects in a survival and observational setting. See \cite{xu2022treatment} for details and practical considerations. A description of the proposed cross-fitting procedure is given below \citep{schick1986asymptotically,chernozhukov2018double}. Throughout, we use the subscript $CF$ to denote the cross-fitted version. 
\begin{algorithm}\label{acw.crossfit}
\caption{Pseudo algorithm for the cross-fitting procedure}
\medskip
\begin{description}
\item[Step 1] Randomly split the datasets $\mathcal{O}_s$ and $\mathcal{O}_t$ respectively into $K$-folds with equal size such that $\mathcal{O}_s = \cup_{k=1}^K \mathcal{O}_{s,k}, \mathcal{O}_t = \cup_{k=1}^K \mathcal{O}_{t,k}$. For each $k \in \{1,\ldots,K\}$, let $\mathcal{O}_{s,k}^c = \mathcal{O}_s \backslash \mathcal{O}_{s,k}, \mathcal{O}_{t,k}^c = \mathcal{O}_s \backslash \mathcal{O}_{t,k}$.
\item[Step 2] For each $k \in \{1,\ldots,K\}$, estimate the nuisance parameters only using data $\mathcal{O}_{s,k}^c$ and $\mathcal{O}_{t,k}^c$; then obtain an estimate of the value function $\hat{V}_{CF,k} (\eta)$ using data $\mathcal{O}_{s,k}$.
\item[Step 3] Aggregate the estimates from $K$ folds: $\hat{V}_{CF}(\eta) = \frac{1}{K} \sum_{k=1}^K \hat{V}_{CF,k} (\eta)$.
\item[Step 4] The estimated optimal ITR is indexed by $\hat{\eta} = \arg\max_\eta \hat{V}_{CF} (\eta)$.
\end{description}
\end{algorithm}

\section{Asymptotic properties}
\label{sec:asym}

In this section, we present the asymptotic properties of the proposed methods. To establish the asymptotic properties, we require the following assumptions.
\begin{assumption}\label{asmp:regu}
(i) The value function $V(\eta)$ is twice continuously differentiable in a neighborhood of $\eta^\ast$. (ii) There exists some constant $\delta_0 > 0$ such that $Pr(0 < |\tilde{X}^T \eta| < \delta) = O(\delta)$, where the big-$O$ term is uniform in $0 < \delta < \delta_0$.
\end{assumption}

Condition \textit{(i)} is a standard regularity condition to establish uniform convergence. Similar margin conditions as \textit{(ii)}, which state that $Pr(0 < |\gamma (X)| < \delta) = O(\delta^\alpha)$ \footnote{Let $\gamma(X) = \mathbb{E}[T \,|\, A = 1, X] - \mathbb{E}[T \,|\, A = 0, X]$ denote the conditional average treatment effect, then the optimal ITR in an unrestricted class is given by $d(X) = I\{\gamma (X) > 0\}$.}, are often assumed in the literature of classification \citep{tsybakov2004optimal,audibert2007fast}, reinforcement
learning \citep{farahmand2011action,hu2021fast} and optimal treatment regimes \citep{luedtke2016statistical,luedtke2020performance}, to guarantee a fast convergence rate. Note that $\alpha = 0$ imposes no restriction, which allows $\gamma(X) = 0$ almost surely, i.e., the challenging setting of exceptional laws where the optimal ITR is not uniquely defined \citep{robins2004optimal,robins2014discussion}, while the case $\alpha = 1$ is of particular interest and would hold if $\gamma(X)$ is absolutely continuous with bounded density.

\begin{theorem}\label{thm:para}
Under Assumptions \ref{asmp:cnpc} - \ref{asmp:regu} and standard regularity conditions provided in the Supplementary Material, if either the survival outcome model, or the models of the propensity score, the sampling score and the censoring process are correct, we have that as $N \rightarrow \infty$, (i) $\hat{S}(t;\eta) \to S(t;\eta)$ for any $\eta$ and $0 < t \leq L$; (ii) $\sqrt{N} \left\{\hat{S}(t;\eta) - S(t;\eta)\right\}$ converges weakly to a mean zero Gaussian process for any $\eta$; (iii) $N^{1/3} \left\|\hat{\eta} - \eta^\ast\right\|_2 = O_p(1)$; (iv) $\sqrt{N} \left\{\hat{S}(t; \hat{\eta}) - S(t; \eta^\ast)\right\} \to \mathcal{N}(0, \sigma_{t,1}^2)$, where $\sigma_{t,1}$ is given in the Supplementary Material.
\end{theorem}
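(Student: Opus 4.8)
The plan is to treat $\hat S(t;\eta)$ as the value of a non-smooth M-estimation objective and to base the four claims on a single uniform asymptotic linear representation. First I would show that cross-fitting together with the Neyman orthogonality of the EIF \eqref{eq:eif} yields, uniformly over $\eta$ and $t\in(0,L]$,
\begin{equation*}
\hat S(t;\eta) - S(t;\eta) = \mathbb{P}_N\,\phi_{d_\eta}(\,\cdot\,;t) + R_N(t;\eta),
\end{equation*}
where $\mathbb{P}_N$ is the empirical average, $\phi_{d_\eta}$ is the influence function evaluated at the probability limits of the estimated nuisances, and $R_N$ is uniformly $o_p(N^{-1/2})$. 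Orthogonality removes the first-order sensitivity to nuisance error, cross-fitting removes the own-sample empirical-process bias, and the stated rate conditions leave only a product of nuisance estimation errors in $R_N$; double robustness ensures that, whenever either the outcome model or the $(\pi_A,\pi_S,S_C)$ triple is correct, the augmented functional still identifies $S(t;\eta)$ (cf.\ Proposition \ref{prop:iden}), so $\phi_{d_\eta}$ has mean zero at the nuisance limits and the centering is valid. Part (i) then follows from a uniform law of large numbers over the VC class of half-space indicators $\{\,I\{\eta^\top\tilde X\ge 0\}\,\}$, and part (ii) from the functional central limit theorem: for fixed $\eta$ the maps $t\mapsto\phi_{d_\eta}(\,\cdot\,;t)$ form a Donsker class (bounded and of bounded variation in $t$), so $\sqrt N\{\hat S(\,\cdot\,;\eta)-S(\,\cdot\,;\eta)\}$ converges weakly to a tight mean-zero Gaussian process.

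For the rate (iii) I would invoke the cube-root M-estimation machinery (Kim--Pollard; van der Vaart and Wellner, Theorem 3.2.5), which requires two inputs. The \emph{curvature} input is that $\eta^\ast$ is a well-separated maximizer with $V(\eta^\ast)-V(\eta)\gtrsim\|\eta-\eta^\ast\|_2^2$ on a neighborhood, which I read off from Assumption \ref{asmp:regu}(i) (twice differentiability and a negative-definite Hessian at the interior maximizer). The \emph{modulus} input concerns the centered process $\eta\mapsto(\mathbb{P}_N-P)\phi_{d_\eta}(\,\cdot\,;t)$: since $\phi_{d_\eta}-\phi_{d_{\eta^\ast}}$ is supported on the disagreement set $\{d_\eta(X)\ne d_{\eta^\ast}(X)\}$, a thin slab about the decision boundary, the margin condition Assumption \ref{asmp:regu}(ii) bounds its probability by $O(\|\eta-\eta^\ast\|_2)$, whence $\|\phi_{d_\eta}-\phi_{d_{\eta^\ast}}\|_{L_2(P)}=O(\|\eta-\eta^\ast\|_2^{1/2})$. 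Combining this $L_2$ bound with the VC property of the half-space indicators, a maximal inequality gives $E^\ast\sup_{\|\eta-\eta^\ast\|_2<\delta}|\mathbb{G}_N(\phi_{d_\eta}-\phi_{d_{\eta^\ast}})|\lesssim\delta^{1/2}$. Feeding the modulus $\phi_N(\delta)\sim\delta^{1/2}$ and curvature exponent $2$ into the rate equation $r_N^2\,\phi_N(1/r_N)\lesssim\sqrt N$ gives $r_N^{3/2}\lesssim\sqrt N$, i.e.\ $r_N\sim N^{1/3}$, which is exactly $N^{1/3}\|\hat\eta-\eta^\ast\|_2=O_p(1)$.

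Part (iv) then follows from the vanishing gradient at $\eta^\ast$. I would split
\begin{equation*}
\hat S(t;\hat\eta)-S(t;\eta^\ast)=\big\{\hat S(t;\hat\eta)-S(t;\hat\eta)\big\}+\big\{S(t;\hat\eta)-S(t;\eta^\ast)\big\}.
\end{equation*}
The deterministic second bracket is $O_p(\|\hat\eta-\eta^\ast\|_2^2)=O_p(N^{-2/3})=o_p(N^{-1/2})$, because $\nabla_\eta V(\eta^\ast)=0$ and the remainder is quadratic by Assumption \ref{asmp:regu}(i). For the first bracket, the modulus from (iii) delivers stochastic equicontinuity, so that $\sqrt N\{\hat S(t;\hat\eta)-S(t;\hat\eta)\}-\sqrt N\{\hat S(t;\eta^\ast)-S(t;\eta^\ast)\}=\mathbb{G}_N(\phi_{d_{\hat\eta}}-\phi_{d_{\eta^\ast}})+o_p(1)=O_p(\|\hat\eta-\eta^\ast\|_2^{1/2})+o_p(1)=O_p(N^{-1/6})=o_p(1)$. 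Hence $\sqrt N\{\hat S(t;\hat\eta)-S(t;\eta^\ast)\}=\sqrt N\{\hat S(t;\eta^\ast)-S(t;\eta^\ast)\}+o_p(1)$, and part (ii) evaluated at $\eta^\ast$ yields the limit $\mathcal N(0,\sigma_{t,1}^2)$, with $\sigma_{t,1}^2$ the variance of the influence function at $\eta^\ast$ after incorporating the extra contribution of the calibration-weighting estimation of $\lambda$.

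The main obstacle is the modulus bound underlying (iii) in the \emph{feasible}, cross-fitted problem. Because $\phi_{d_\eta}$ carries estimated, possibly machine-learned, nuisances, I must show that the nuisance-induced part of the increment $\phi_{d_{\hat\eta}}-\phi_{d_{\eta^\ast}}$ does not degrade the $\delta^{1/2}$ modulus near $\eta^\ast$. This means carefully decoupling the \emph{indicator} fluctuation, governed by the VC/margin argument, from the \emph{nuisance} fluctuation, governed by Neyman orthogonality, cross-fitting, and the product-rate conditions, and verifying that the latter is negligible relative to the $N^{-1/3}$ curvature scale, so that the Kim--Pollard rate theorem applies to the cross-fitted objective and not merely to its oracle counterpart.
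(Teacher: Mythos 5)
There is a genuine gap: you have written, in essence, a proof of Theorem~\ref{thm:np} and applied it to Theorem~\ref{thm:para}. Theorem~\ref{thm:para} concerns the \emph{non-cross-fitted} estimator with posited (semi)parametric nuisance models (logistic propensity score, Cox models for survival and censoring, calibration/loglinear sampling score), and its hypothesis explicitly allows one of the two sets of models to be \emph{misspecified}. Your central step --- that cross-fitting plus Neyman orthogonality of \eqref{eq:eif} yields $\hat S(t;\eta)-S(t;\eta)=\mathbb{P}_N\phi_{d_\eta}+o_p(N^{-1/2})$ with $\phi_{d_\eta}$ evaluated at the nuisance limits --- fails here on two counts. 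First, cross-fitting and the product-rate conditions of Assumption~\ref{asmp:nonp} are not among the hypotheses of Theorem~\ref{thm:para}. Second, and more fundamentally, Neyman orthogonality is a property of the EIF \emph{at the truth}; at misspecified probability limits $(\theta^\ast,\beta_a^\ast,\alpha_a^\ast,\lambda^\ast)$ the derivative of the moment functional with respect to the nuisance parameters does not vanish, so the first-order contributions of $\hat\theta-\theta^\ast$, $\hat\beta_a-\beta_a^\ast$, $\hat\alpha_a-\alpha_a^\ast$, $\hat\lambda-\lambda^\ast$ survive. The paper's proof therefore Taylor-expands in these finite-dimensional parameters, invokes the counting-process asymptotics for the Cox/Breslow estimators, and carries the resulting terms $H_\lambda^T(\hat\lambda-\lambda^\ast)+H_\theta^T(\hat\theta-\theta^\ast)+\sum_a H_{\beta_a}^T(\hat\beta_a-\beta_a^\ast)+\sum_a H_{\alpha_a}^T(\hat\alpha_a-\alpha_a^\ast)$ into the influence function as a second component $\xi_{2,i}$, so that $\sigma_{t,1}^2=\mathbb{E}[(\xi_{1,i}+\xi_{2,i})^2]$. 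Your variance (``the variance of the influence function at $\eta^\ast$'' plus only the calibration-weighting correction) omits these contributions and is wrong in general; the mean-zero centering itself must instead be justified by the double-robustness computation (that $\mathbb{E}[V^\ast(d)]=V(d)$ when either set of limits is correct), not by orthogonality.

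The same issue propagates into your rate argument for (iii). Your modulus bound for the \emph{oracle} increment (VC half-spaces, envelope supported on the slab $\{|\tilde X^T\eta^\ast|\le k_0\delta\}$, margin condition giving $\|F\|_{P,2}\lesssim\delta^{1/2}$, hence $\mathbb{E}^\ast\sup|\mathbb{G}_N(\cdot)|\lesssim\delta^{1/2}$, and $r_N=N^{1/3}$ from $r_N^2\phi_N(1/r_N)\lesssim\sqrt N$) coincides with the paper's treatment of the term it calls $(II)$, and your decomposition for (iv) is an equivalent variant of the paper's. But the paper must also bound the term $(I)$, the increment of $\hat S - S_N^\ast$ over the $\delta$-ball, and it does so not by orthogonality but by expanding in the nuisance parameters, exhibiting seven further function classes $\mathcal{F}_\eta^2,\dots,\mathcal{F}_\eta^8$ with the same $\delta^{1/2}$ envelopes, and pairing each with the $\sqrt N$-consistency of the corresponding parametric estimator via Cauchy--Schwarz. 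You correctly flag this decoupling as ``the main obstacle,'' but the resolution you propose (orthogonality, cross-fitting, product rates) is the Theorem~\ref{thm:np} resolution, not one available under the double-robustness hypothesis of Theorem~\ref{thm:para}. To repair the proof you would need to replace the orthogonality step throughout by the explicit parametric expansion and retain the induced terms in both the modulus bound and the limiting variance.
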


Next, to characterize the asymptotic behavior of the estimator with the nonparametric estimation of nuisance parameters, we assume the following consistency and convergence rate conditions of the nonparametric plug-in nuisance estimators.
\begin{assumption}\label{asmp:nonp}
Assume the following convergences in probability: $\sup_{x \in \mathcal{X}} |\hat{\pi}_A(x) - \pi_A(x)| \to 0$, $\sup_{x \in \mathcal{X}} |\hat{\pi}_S(x) - \pi_S(x)| \to 0$, and for $a = 0, 1$,
\begin{gather*}
\sup_{x \in \mathcal{X}, u \leq h} |\hat{S}_C(u \,|\, a, x) - S_C(u \,|\, a, x)| \to 0, \sup_{x \in \mathcal{X}, u \leq h} \left|\frac{\hat{\lambda}_C(u \,|\, a, x)}{\hat{S}_C(u \,|\, a, x)} - \frac{\lambda_C(u \,|\, a, x)}{S_C(u \,|\, a, x)}\right| \to 0, \\
\sup_{x \in \mathcal{X}} |\hat{\mu}(a, x) - \mu(a, x)| \to 0, \sup_{x \in \mathcal{X}, u \leq h} |\hat{Q}(u, a, x) - Q(u, a, x)| \to 0;
\end{gather*}
and the following rates of convergence: $\mathbb{E}\left[\sup_{x \in \mathcal{X}} |\hat{\pi}_A(x) - \pi_A(x)|\right] = o_p(n^{-1/4})$, \\
$\mathbb{E}\left[\sup_{x \in \mathcal{X}} |\hat{\pi}_S(x) - \pi_S(x)|\right] = o_p(n^{-1/4})$, and for $a = 0, 1$,
\begin{gather*}
\sup_{u \leq h} \mathbb{E}\left[\sup_{x \in \mathcal{X}} \left|\hat{S}_C(u \,|\, a, x) - S_C(u \,|\, a, x)\right|\right] = o_p(n^{-1/4}), \\
\sup_{u \leq h} \mathbb{E}\left[\sup_{x \in \mathcal{X}} \left|\frac{\hat{\lambda}_C(u \,|\, a, x)}{\hat{S}_C(u \,|\, a, x)} - \frac{\lambda_C(u \,|\, a, x)}{S_C(u \,|\, a, x)}\right|\right] = o_p(n^{-1/4}), \\
\mathbb{E}\left[\sup_{x \in \mathcal{X}} |\hat{\mu}(a, x) - \mu(a, x)|\right] = o(n^{-1/4}), \sup_{u \leq h} \mathbb{E}\left[\sup_{x \in \mathcal{X}} |\hat{Q}(u, a, x) - Q(u, a, x)|\right] = o(n^{-1/4}).
\end{gather*}

\end{assumption}

The rate conditions in Assumption~\ref{asmp:nonp} are generally assumed in the literature \citep{kennedy2022semiparametric}. This rate can be achieved by many existing methods under certain structural assumptions on the nuisance parameters. Note that the nuisance parameters do not necessarily need to be estimated at the same rates $n^{-1/4}$ for our theorems to hold; it would suffice that the product of rates of any combination of two nuisance parameters is $n^{-1/2}$. 

\begin{theorem}\label{thm:np}
Under Assumptions \ref{asmp:cnpc} - \ref{asmp:nonp}, we have that as $N \to \infty$, (i) $\hat{S}_{CF}(t;\eta) \to S(t;\eta)$ for any $\eta$ and $0 < t \leq L$; (ii) $\sqrt{N} \left\{\hat{S}_{CF}(t;\eta) - S(t;\eta)\right\}$ converges weakly to a mean zero Gaussian process for any $\eta$; (iii) $N^{1/3} \|\hat{\eta} - \eta^\ast\|_2 = O_p(1)$; (iv) $\sqrt{N} \left\{\hat{S}_{CF}(t;\hat{\eta}) - S(t;\eta^\ast)\right\} \to \mathcal{N}(0, \sigma_{t,2}^2)$, where $\sigma_{t,2}$ is given in the Supplementary Material.
\end{theorem}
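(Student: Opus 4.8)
The plan is to reduce the cross-fitted estimator to the \emph{oracle} estimator that uses the true nuisance functions, and then to transport the arguments underlying Theorem~\ref{thm:para} to this oracle process. I would write $\hat{S}_{CF}(t;\eta)$ as the average, over the $K$ evaluation folds, of the plugged-in EIF \eqref{eq:eif} with estimated nuisances, and decompose $\hat{S}_{CF}(t;\eta) - S(t;\eta)$ into three pieces: (a) the empirical average of the true EIF $\phi_{d_\eta}$ (the oracle, linear term); (b) an empirical-process term capturing the fluctuation of the estimated-minus-true nuisances over the evaluation fold; and (c) a deterministic second-order remainder, i.e.\ the bias of the estimating function evaluated at the estimated nuisances. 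The entire argument rests on showing that (b) and (c) are $o_p(N^{-1/2})$ uniformly in $(\eta,t)$, so that $\hat{S}_{CF}$ is first-order equivalent to the oracle.

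For term (b), cross-fitting is essential: conditioning on the training folds $\mathcal{O}_{s,k}^c, \mathcal{O}_{t,k}^c$, the nuisance estimates are fixed and the evaluation fold is independent, so (b) is a centered sample average whose conditional variance is bounded by the squared $L_2$-distance between estimated and true nuisances, which vanishes by the consistency statements of Assumption~\ref{asmp:nonp}; a conditional Chebyshev and maximal-inequality argument then yields $o_p(N^{-1/2})$, uniformly in $(\eta,t)$ because the indicators $I\{\eta^T\tilde{X}\ge 0\}$ form a VC class. For term (c), I would invoke the Neyman orthogonality of \eqref{eq:eif} in the sense of \citet{chernozhukov2018double}: the first-order Gateaux derivative of the bias functional with respect to each nuisance vanishes at the truth, so (c) is a quadratic form in the nuisance errors. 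Bounding it by Cauchy--Schwarz, every surviving term pairs two nuisance errors, each controlled at the $o_p(n^{-1/4})$ rate of Assumption~\ref{asmp:nonp}; hence each product is $o_p(n^{-1/2})$, and therefore $o_p(N^{-1/2})$ under the assumed sampling proportions, uniformly over the bounded horizon $[0,L]$ and the ITR class.

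Having reduced to the oracle, I would establish a functional central limit theorem for $\sqrt{N}$ times the empirical average of $\phi_{d_\eta}$, viewed as a process indexed by $(\eta,t)\in\{|\eta_{p+1}|=1\}\times(0,L]$. Pointwise convergence together with the law of large numbers gives consistency (i); stochastic equicontinuity, which follows from the Donsker property of the EIF class (again using the VC indicators and the boundedness of $y(U)$ and of the inverse weights under positivity, Assumptions~\ref{asmp:cnpc}(b) and \ref{asmp:psi}), upgrades this to weak convergence to a mean-zero Gaussian process (ii). Because all nuisances are now consistently estimated, the limiting covariance $\sigma_{t,2}^2$ is the variance of the efficient influence function evaluated at $\eta^\ast$, i.e.\ the semiparametric efficiency bound, which in general differs from the parametric-model variance $\sigma_{t,1}^2$ of Theorem~\ref{thm:para}.

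For the cube-root rate (iii) I would apply cube-root M-estimation theory (Kim and Pollard): Assumption~\ref{asmp:regu}(i) furnishes the negative-definite curvature of $V(\eta)$ at $\eta^\ast$, giving a quadratic drift, while the non-smooth indicator component of the criterion, together with the margin condition Assumption~\ref{asmp:regu}(ii), governs the modulus of continuity of the centered empirical criterion in shrinking neighborhoods of $\eta^\ast$; balancing the quadratic drift against this local fluctuation yields $N^{1/3}\|\hat{\eta}-\eta^\ast\|_2 = O_p(1)$. For (iv) I would exploit the plug-in property: since $\eta^\ast$ maximizes the smooth limit $V(\eta)$ its gradient vanishes, so a second-order expansion gives $\hat{S}_{CF}(t;\hat{\eta}) - \hat{S}_{CF}(t;\eta^\ast) = O_p(\|\hat{\eta}-\eta^\ast\|_2^2) = O_p(N^{-2/3}) = o_p(N^{-1/2})$; hence $\sqrt{N}\{\hat{S}_{CF}(t;\hat{\eta}) - S(t;\eta^\ast)\}$ shares the limit of $\sqrt{N}\{\hat{S}_{CF}(t;\eta^\ast) - S(t;\eta^\ast)\}$, which is asymptotically normal by (ii) at $\eta^\ast$. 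The main obstacle is part (iii): one must show that the non-smoothness of $I\{\eta^T\tilde{X}\ge 0\}$ produces exactly the $N^{1/3}$ rate via a careful chaining and maximal-inequality argument over local neighborhoods, and crucially verify that the cross-fitted nuisance errors do not contaminate the local empirical process determining this rate. This is what makes the cube-root analysis genuinely harder than in the parametric setting, and it must be carried out so that the argument for (iv) can legitimately upgrade the slow cube-root rate of $\hat{\eta}$ to the $\sqrt{N}$ rate of the value.
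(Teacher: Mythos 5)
Your proposal follows essentially the same route as the paper's proof: reduce the cross-fitted estimator to the oracle linearization by splitting off (b) a centered empirical-process term killed by the independence of training and evaluation folds (the paper's Lemma~\ref{lem:cf}, i.e.\ conditional Chebyshev) and (c) a second-order remainder killed by Neyman orthogonality plus Cauchy--Schwarz on products of $o_p(n^{-1/4})$ nuisance errors; then obtain the cube-root rate from the general M-estimation rate theorem (the paper uses Theorem 14.4 of Kosorok rather than Kim--Pollard, but the three conditions you describe --- quadratic drift from Assumption~\ref{asmp:regu}(i), a $\delta^{1/2}$ modulus of continuity from the VC/bracketing bound with the envelope controlled by the margin condition Assumption~\ref{asmp:regu}(ii), and the rate balancing $r_N=N^{1/3}$ --- are exactly the ones verified there); and finally transfer the limit from $\eta^\ast$ to $\hat\eta$.

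One step in part (iv) is imprecise as written and would fail literally: you cannot obtain $\hat{S}_{CF}(t;\hat{\eta}) - \hat{S}_{CF}(t;\eta^\ast) = O_p(\|\hat{\eta}-\eta^\ast\|_2^2)$ by ``a second-order expansion,'' because $\hat{S}_{CF}(t;\cdot)$ is a step function of $\eta$ through the indicators $I\{\eta^T\tilde{X}_i\ge 0\}$ and is not differentiable. The correct argument --- and the one the paper gives --- splits the difference into the smooth population part, $\sqrt{N}\{S(t;\hat{\eta})-S(t;\eta^\ast)\} = \sqrt{N}\,O_p(\|\hat{\eta}-\eta^\ast\|_2^2) = o_p(1)$ by the Taylor expansion of $S(t;\cdot)$ at its maximizer, plus the centered empirical fluctuation $\sqrt{N}\bigl(\hat{S}_{CF}(t;\hat{\eta})-S(t;\hat{\eta})-\{\hat{S}_{CF}(t;\eta^\ast)-S(t;\eta^\ast)\}\bigr)$, which is bounded by the same $\tilde{c}\,\delta^{1/2}$ local maximal inequality used for the rate result, evaluated at $\delta \asymp N^{-1/3}$, giving $O(N^{-1/6}) = o_p(1)$. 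You already have both ingredients (you establish the modulus of continuity for (iii) and you note that the cube-root analysis must feed into (iv)), so this is a matter of routing the argument correctly rather than a missing idea; but as stated the expansion is applied to the wrong object.
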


Besides the survival functions, another common measure of particular interest in survival analysis is the RMST. Let $V_{\text{RMST}} (\eta) = \mathbb{E}[\min(T(d_\eta), L)]$. We present two corollaries.

\begin{corollary}\label{cor:para.rmst}
Under Assumptions \ref{asmp:cnpc} - \ref{asmp:regu} and standard regularity conditions provided in the Supplementary material, if either the survival outcome model or the models of the propensity score, the censoring and sampling processes are correct, we have that as $N \to \infty$, (i) $\hat{V}_{\text{RMST}}(\eta) \to V_{\text{RMST}}(\eta)$ for any $\eta$; (ii) $N^{1/3} \|\hat{\eta} - \eta^\ast\|_2 = O_p(1)$; (iii) $\sqrt{N} \left\{\hat{V}_{\text{RMST}}(\hat{\eta}) - V_{\text{RMST}}(\eta^\ast)\right\} \to \mathcal{N}(0, \sigma_{3}^2)$, where $\sigma_{3}$ is given in the Supplementary Material.
\end{corollary}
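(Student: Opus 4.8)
The plan is to exploit the identity $V_{\text{RMST}}(\eta)=\int_0^L S(t;\eta)\,\mathrm{d}t$ established earlier, so that $\hat{V}_{\text{RMST}}(\eta)=\int_0^L \hat{S}(t;\eta)\,\mathrm{d}t$ is a bounded linear functional of the survival-function estimator, and then to transport each conclusion of Theorem~\ref{thm:para} through this integral. For part (i), I would first upgrade the pointwise consistency $\hat{S}(t;\eta)\to S(t;\eta)$ to uniform consistency over the compact interval $t\in[0,L]$---the underlying empirical-process bounds are uniform in $t$ and the integrands are uniformly bounded---and then integrate, giving $|\hat{V}_{\text{RMST}}(\eta)-V_{\text{RMST}}(\eta)|\le\int_0^L|\hat{S}(t;\eta)-S(t;\eta)|\,\mathrm{d}t\to 0$ by dominated convergence.

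For part (ii), the $N^{1/3}$ rate of $\hat\eta=\arg\max_\eta\hat{V}_{\text{RMST}}(\eta)$ must be re-established because $\hat\eta$ now maximizes the $t$-integrated objective rather than a fixed-$t$ one, but the M-estimation argument is structurally identical to that of Theorem~\ref{thm:para}(iii). Concretely, Assumption~\ref{asmp:regu}(i) applied to $y(T)=\min(T,L)$ gives twice continuous differentiability of $V_{\text{RMST}}$ near $\eta^*$, hence a local curvature bound $V_{\text{RMST}}(\eta^*)-V_{\text{RMST}}(\eta)\gtrsim\|\eta-\eta^*\|_2^2$ on the constraint manifold $\{|\eta_{p+1}|=1\}$; combining this with the uniform convergence of $\hat{V}_{\text{RMST}}$ and the margin condition Assumption~\ref{asmp:regu}(ii), which controls the modulus of the centered empirical process indexed by the indicator class $\{I\{\eta^T\tilde{X}\ge0\}\}$, the standard rate theorem for M-estimators yields $N^{1/3}\|\hat\eta-\eta^*\|_2=O_p(1)$.

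The crux is part (iii). I would decompose
\begin{equation*}
\hat{V}_{\text{RMST}}(\hat\eta)-V_{\text{RMST}}(\eta^*)=\underbrace{\big[\hat{V}_{\text{RMST}}(\hat\eta)-V_{\text{RMST}}(\hat\eta)\big]}_{\mathrm{(I)}}+\underbrace{\big[V_{\text{RMST}}(\hat\eta)-V_{\text{RMST}}(\eta^*)\big]}_{\mathrm{(II)}}.
\end{equation*}
Since $\eta^*$ is an interior maximizer (in the free coordinates $\eta_1,\dots,\eta_p$) of the smooth population objective, its gradient there vanishes, so by twice differentiability term $\mathrm{(II)}$ is $O_p(\|\hat\eta-\eta^*\|_2^2)=O_p(N^{-2/3})=o_p(N^{-1/2})$ and is negligible after scaling by $\sqrt{N}$. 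For term $\mathrm{(I)}$, a stochastic equicontinuity argument shows $\sqrt{N}\,\mathrm{(I)}=\sqrt{N}\{\hat{V}_{\text{RMST}}(\eta^*)-V_{\text{RMST}}(\eta^*)\}+o_p(1)$. Finally, because the map $S\mapsto\int_0^L S(t)\,\mathrm{d}t$ is bounded and linear---hence Hadamard differentiable with itself as derivative---the functional delta method applied to the process limit of Theorem~\ref{thm:para}(ii) at $\eta^*$ gives $\sqrt{N}\{\hat{V}_{\text{RMST}}(\eta^*)-V_{\text{RMST}}(\eta^*)\}\rightsquigarrow\int_0^L\mathbb{G}(t)\,\mathrm{d}t\sim\mathcal{N}(0,\sigma_3^2)$, where $\mathbb{G}$ is the limiting Gaussian process and $\sigma_3^2=\int_0^L\int_0^L\mathrm{Cov}(\mathbb{G}(s),\mathbb{G}(t))\,\mathrm{d}s\,\mathrm{d}t$.

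The main obstacle I anticipate is justifying the reduction in term $\mathrm{(I)}$, i.e.\ replacing the plugged-in $\hat\eta$ by $\eta^*$ inside the stochastic term. The difficulty is that $\hat\eta$ converges only at the non-standard $N^{1/3}$ rate with an intractable limit law, and the objective depends on $\eta$ solely through the discontinuous indicator $I\{\eta^T\tilde{X}\ge0\}$, so it is nowhere differentiable in $\eta$ in the classical sense. The resolution rests on asymptotic equicontinuity: the indicator class is VC and therefore Donsker, so the centered empirical process satisfies $\mathbb{G}_N(\psi_{\hat\eta}-\psi_{\eta^*})=o_p(1)$ as soon as $\|\psi_{\hat\eta}-\psi_{\eta^*}\|_{L^2(P)}\to0$, where $\psi_\eta$ denotes the summand defining $\hat{V}_{\text{RMST}}(\eta)$. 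The margin condition Assumption~\ref{asmp:regu}(ii) is precisely what certifies this $L^2$-continuity---it bounds the covariate mass within $\delta$ of the decision boundary, yielding $\|\psi_\eta-\psi_{\eta^*}\|_{L^2(P)}^2=O(\|\eta-\eta^*\|_2)$---which, together with the consistency $\hat\eta\to\eta^*$ from part (ii), legitimizes the substitution without ever requiring the limiting distribution of $\hat\eta$.
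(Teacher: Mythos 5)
Your plan is correct and lands on essentially the same machinery the paper uses: the paper's proof of Corollary~\ref{cor:para.rmst} is literally ``follow the same three-part argument as Theorem~\ref{thm:para}'' (double robustness for consistency; Kosorok's Theorem~14.4 with the margin condition in Assumption~\ref{asmp:regu}(ii) and a VC/bracketing-entropy bound giving modulus $\delta^{1/2}$, hence the $N^{1/3}$ rate; then the two-term decomposition with a second-order Taylor expansion killing the population increment at rate $O_p(N^{-2/3})=o_p(N^{-1/2})$ and a stochastic-equicontinuity bound, evaluated at $\delta\sim N^{-1/3}$ to get $O(N^{-1/6})=o_p(1)$, killing the centered increment). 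Where you differ is in packaging: rather than rerunning the argument for $y(T)=\min(T,L)$, you transport Theorem~\ref{thm:para}'s conclusions through the linear functional $S\mapsto\int_0^L S(t)\,\mathrm{d}t$ and obtain part~(iii) by the functional delta method applied to the limiting Gaussian process, yielding the explicit variance $\sigma_3^2=\int_0^L\int_0^L\mathrm{Cov}(\mathbb{G}(s),\mathbb{G}(t))\,\mathrm{d}s\,\mathrm{d}t$. That route is legitimate because Theorem~\ref{thm:para}(ii) is a process-level (uniform in $t$) statement, and it buys a cleaner derivation of $\sigma_3$; your decomposition of part~(iii) into $\mathrm{(I)}+\mathrm{(II)}$ is an equivalent rearrangement of the paper's $[\hat{S}(\hat\eta)-\hat{S}(\eta^\ast)]+[\hat{S}(\eta^\ast)-S(\eta^\ast)]$ and needs exactly the same two lemmas. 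The one place your sketch is thinner than the paper: the equicontinuity step cannot be run on the oracle summand $\psi_\eta$ alone, because $\hat{V}_{\text{RMST}}(\eta)-V_{\text{RMST}}(\eta)$ also carries the estimated nuisance parameters $(\hat\lambda,\hat\theta,\hat\beta_a,\hat\alpha_a)$; the paper's Condition~2.2 Taylor-expands in these, introduces the seven auxiliary classes $\mathcal{F}_\eta^2,\dots,\mathcal{F}_\eta^8$ with envelopes of $L_2$-size $\delta^{1/2}$, and absorbs their contribution via Cauchy--Schwarz against the $\sqrt{N}$-consistency of the parametric nuisance estimators. You should make that step explicit, but it does not change the structure or validity of your argument.
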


\begin{corollary}\label{cor:np.rmst}
Under Assumptions \ref{asmp:cnpc} - \ref{asmp:nonp}, we have that as $N \to \infty$, (i) $\hat{V}_{\text{RMST},CF}(\eta) \to V_{\text{RMST}}(\eta)$ for any $\eta$; (ii) $N^{1/3} \|\hat{\eta} - \eta^\ast\|_2 = O_p(1)$; (iii) $\sqrt{N} \left\{\hat{V}_{\text{RMST},CF} (\hat{\eta}) - V_{\text{RMST}} (\eta^\ast)\right\} \to \mathcal{N}(0, \sigma_{4}^2)$, where $\sigma_{4}$ is given in the Supplementary Material..
\end{corollary}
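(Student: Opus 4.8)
The plan is to reduce everything to Theorem~\ref{thm:np} through the representation $V_{\text{RMST}}(\eta) = \mathbb{E}[\min(T(d_\eta),L)] = \int_0^L S(t;\eta)\,\mathrm{d}t$ and its empirical analogue $\hat{V}_{\text{RMST},CF}(\eta) = \int_0^L \hat{S}_{CF}(t;\eta)\,\mathrm{d}t$, so that the RMST value is a continuous linear functional of the survival-probability process already controlled in Theorem~\ref{thm:np}. For part (i), Theorem~\ref{thm:np}(ii) yields $\sup_{0<t\le L}|\hat{S}_{CF}(t;\eta)-S(t;\eta)| = O_p(N^{-1/2})\to 0$ for each fixed $\eta$; since the integrand is bounded on the compact interval $[0,L]$, integrating this uniform bound gives $\hat{V}_{\text{RMST},CF}(\eta)\to V_{\text{RMST}}(\eta)$, which is part (i).

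For part (ii) I would re-run the cube-root M-estimation argument behind Theorem~\ref{thm:np}(iii) with the fixed-$t$ objective $\hat{S}_{CF}(t;\cdot)$ replaced by the integrated objective $\hat{V}_{\text{RMST},CF}(\cdot)$, whose maximizer is exactly $\hat\eta$. The three ingredients transfer: (a) the population objective $V_{\text{RMST}}$ has a well-separated interior maximum at $\eta^\ast$ with nonsingular negative-definite Hessian by Assumption~\ref{asmp:regu}(i); (b) the indicator class $\{I\{\eta^T\tilde{X}\ge 0\}\}$ remains a VC class, so the centered process $\sqrt{N}\{\hat{V}_{\text{RMST},CF}(\eta)-V_{\text{RMST}}(\eta)\}$ satisfies the same local maximal inequality, with modulus of order $N^{-1/2}\|\eta-\eta^\ast\|_2^{1/2}$ over shrinking balls by the margin condition Assumption~\ref{asmp:regu}(ii); and (c) the nuisance-estimation error enters only at second order because the influence function is Neyman orthogonal, and under the $o_p(N^{-1/4})$ rates of Assumption~\ref{asmp:nonp} combined with cross-fitting, the products of nuisance errors are $o_p(N^{-1/2})$ uniformly in $\eta$ and $t$. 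Balancing the quadratic curvature against the $\|\eta-\eta^\ast\|_2^{1/2}$ fluctuation via the cube-root rate theorem of Kim and Pollard then delivers $N^{1/3}\|\hat\eta - \eta^\ast\|_2 = O_p(1)$. The one genuinely new point is interchanging the $t$-integration with the supremum over the shrinking $\eta$-neighborhood, which is licensed because all entropy and margin bounds hold uniformly in $t\in[0,L]$.

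For part (iii) I would decompose $\hat{V}_{\text{RMST},CF}(\hat\eta)-V_{\text{RMST}}(\eta^\ast)$ into $\{\hat{V}_{\text{RMST},CF}(\hat\eta)-V_{\text{RMST}}(\hat\eta)\}+\{V_{\text{RMST}}(\hat\eta)-V_{\text{RMST}}(\eta^\ast)\}$. The second bracket is $O_p(\|\hat\eta-\eta^\ast\|_2^2)=O_p(N^{-2/3})=o_p(N^{-1/2})$ by a second-order Taylor expansion of the smooth $V_{\text{RMST}}$ about its interior maximizer, where the first-order term vanishes. For the first bracket, asymptotic equicontinuity of the centered process together with the $N^{-1/3}$ rate shows its increment over the ball of radius $\|\hat\eta-\eta^\ast\|_2$ is of order $N^{-1/2}(N^{-1/3})^{1/2}=N^{-2/3}=o_p(N^{-1/2})$, so $\hat\eta$ may be replaced by $\eta^\ast$ at the $\sqrt{N}$ scale. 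It then remains to treat $\sqrt{N}\{\hat{V}_{\text{RMST},CF}(\eta^\ast)-V_{\text{RMST}}(\eta^\ast)\} = \sqrt{N}\int_0^L\{\hat{S}_{CF}(t;\eta^\ast)-S(t;\eta^\ast)\}\,\mathrm{d}t$. Since Theorem~\ref{thm:np}(ii) gives weak convergence of $\sqrt{N}\{\hat{S}_{CF}(\cdot;\eta^\ast)-S(\cdot;\eta^\ast)\}$ to a mean-zero Gaussian process $\mathbb{G}$, and $f\mapsto\int_0^L f\,\mathrm{d}t$ is a bounded linear functional, the continuous mapping theorem yields the limit $\int_0^L\mathbb{G}(t)\,\mathrm{d}t$, which is Gaussian as a linear functional of a Gaussian process. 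This establishes asymptotic normality with $\sigma_4^2 = \mathrm{Var}\{\int_0^L\mathbb{G}(t)\,\mathrm{d}t\}$, equal to the variance of the integrated efficient influence function $\int_0^L\phi_{d_{\eta^\ast}}(t)\,\mathrm{d}t$.

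I expect the main obstacle to be part (ii): transporting the cube-root asymptotics to the integrated objective while simultaneously handling the machine-learning nuisances. The delicate step is showing that the second-order nuisance remainder is $o_p(N^{-1/2})$ \emph{uniformly} over the $N^{-1/3}$-neighborhood of $\eta^\ast$ and over $t\in[0,L]$, so it cannot disturb the Kim--Pollard balance; this is exactly where Neyman orthogonality, the product-rate structure of Assumption~\ref{asmp:nonp}, and the independence afforded by cross-fitting must be combined. Once part (ii) is in hand, parts (i) and (iii) follow comparatively routinely from Theorem~\ref{thm:np} and the continuity of integration.
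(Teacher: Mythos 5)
Your proposal is correct, and its overall architecture (cross-fitting plus Neyman orthogonality for the $o_p(N^{-1/2})$ linearization; a Kim--Pollard/Kosorok rate theorem with a VC indicator class and the margin condition giving the $\delta^{1/2}$ modulus and hence the $N^{1/3}$ rate; the decomposition $\hat{V}(\hat\eta)-V(\eta^\ast)=\{\hat{V}(\hat\eta)-\hat{V}(\eta^\ast)\}+\{\hat{V}(\eta^\ast)-V(\eta^\ast)\}$ for normality) is exactly the paper's. Where you differ is in how the corollary is deduced from Theorem~\ref{thm:np}: you represent $\hat{V}_{\text{RMST},CF}(\eta)=\int_0^L\hat{S}_{CF}(t;\eta)\,\mathrm{d}t$ and push every estimate through the $t$-integral, invoking continuity of the bounded linear functional $f\mapsto\int_0^L f\,\mathrm{d}t$ on the limiting Gaussian process and interchanging the integral with the suprema in the maximal inequalities (legitimate, since the envelope and entropy bounds are uniform in $t\in[0,L]$ once $S_C$ is bounded away from zero there). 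The paper instead writes the entire proof of Theorem~\ref{thm:np} for a generic transformation $y(\cdot)$ --- the cross-fitted EIF estimator, the function classes $\mathcal{F}_\eta$, and the variance $\sigma^2=\mathbb{E}[\phi_{d_{\eta^\ast}}^2]$ all carry $y(U_i)$ --- so Corollary~\ref{cor:np.rmst} is obtained verbatim by setting $y(T)=\min(T,L)$, with no $t$-integration step at all. Your route buys a clean modular reduction to the already-proved survival-curve theorem and makes explicit that $\sigma_4^2$ is the variance of the integrated influence function $\int_0^L\phi_{d_{\eta^\ast}}(t)\,\mathrm{d}t$ (a double integral of the covariance kernel of $\mathbb{G}$); the paper's route avoids having to justify the process-level (uniform-in-$t$) strengthening of Theorem~\ref{thm:np}(ii) and the interchange of integral and supremum, at the cost of carrying the generic $y$ through every display. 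The two variance expressions coincide by linearity of the EIF in $y$ and $\min(T,L)=\int_0^L I\{T>t\}\,\mathrm{d}t$, so both arguments land on the same $\sigma_4^2$.
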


Finally, we show that when the covariate distributions of the source and target populations are the same, the semiparametric efficiency bounds of $\hat{V}_{DR} (\eta)$ and $\hat{V}_{CF} (\eta)$ are equal.

\begin{theorem}\label{thm:st}
Under Assumptions \ref{asmp:cnpc} - \ref{asmp:nonp}, when the covariate distributions of the source and target populations are the same, both $\sqrt{N} \{\hat{V}_{DR} (\eta) - V (\eta)\}$ and $\sqrt{N} \{\hat{V}_{CF} (\eta) - V (\eta)\}$ are asymptotically normal with mean zero and same variance.
\end{theorem}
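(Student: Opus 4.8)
The plan is to prove the result by comparing the asymptotic variances of the two estimators through their influence functions, exploiting that at a fixed $\eta$ both $\hat{V}_{DR}(\eta)$ and $\hat{V}_{CF}(\eta)$ are regular and asymptotically linear estimators of the same scalar $V(\eta)$. Since $\eta$ is held fixed here, none of the cube-root-rate machinery of Theorems \ref{thm:para}--\ref{thm:np} is needed: asymptotic normality of $\hat{V}_{CF}(\eta)$ comes from Theorem \ref{thm:np}(ii), and that of $\hat{V}_{DR}(\eta)$ from the analogous result in \cite{bai2017optimal}. What remains is to identify the two influence functions and show their variances coincide.

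First I would make precise what ``no covariate shift'' gives us. Taking the super-population marginal to be that of the target, the assumption $P(X \mid I_S = 1) = P(X \mid I_T = 1)$ forces, by Bayes' rule, the sampling score $\pi_S(X)$ and the design weight $e(X) = \pi_T^{-1}(X)$ to be free of $X$, i.e.\ constant; I would record this as a preliminary lemma. Combined with Assumption \ref{asmp:sme}, no shift implies that the source-population value estimated by $\hat{V}_{DR}$ equals the target value $V(\eta)$, so the two estimators share an estimand and a comparison of their variances is meaningful.

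Next I would write down both influence functions against a common law. For $\hat{V}_{CF}(\eta)$ this is the EIF $\phi_d$ of Proposition \ref{prop:eif} with the constant weights substituted; grouping the terms carrying $I_S$ and using that the augmented inverse-probability-of-censoring pseudo-outcome has conditional mean $\mu(A,X)$, $\phi_d$ splits into a conditionally mean-zero ``source correction'' piece and an outcome-regression piece carried by $I_T\,e(X)\mu$. For $\hat{V}_{DR}(\eta)$ I would take the influence function underlying \eqref{eq:dr.orig} and embed it in the combined-sample probability space, so that both variances are evaluated under the same distribution. The decisive step is then a direct second-moment computation: I would decompose $\mathrm{Var}(\phi_d)$ into (a) the treatment/censoring-correction part, which is identical for the two estimators, and (b) the marginal outcome-regression part, and aim to show that the target-augmentation contribution in $\phi_d$ reproduces the source outcome-regression contribution in the standard doubly robust influence function once the weights are constant and both samples' covariates follow the common law $P_X$.

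The hard part is exactly this variance bookkeeping across the two samples. The cross terms should be manageable --- they vanish by the conditional mean-zero property of the censoring/IPW correction and, for the target-source cross term, by $I_S I_T = 0$ --- so the crux reduces to matching the remaining second moments of the outcome-regression terms. Establishing that this matching holds under the framework's sampling-rate conventions, and thereby that the two influence functions have equal variance, is where I expect the real difficulty to lie, since it is precisely the way the source and target sampling fractions enter these terms that must be tracked carefully for the two bounds to coincide.
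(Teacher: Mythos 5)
Your plan is essentially the paper's proof: at fixed $\eta$ both estimators are asymptotically linear, so the whole argument reduces to writing the two influence functions on the common probability space, expanding the two second moments, killing the cross terms, and matching what remains. The paper does exactly this, with $\sigma_{DR}^2 = \mathbb{E}\bigl[\tfrac{I_S}{\mathbb{P}(I_S=1)}\bigl(\mu(d(X),X)+\tfrac{I\{A=d(X)\}}{\pi_d(X)}K(A,X)-V(\eta)\bigr)^2\bigr]$ and $\sigma_{CF}^2=\mathbb{E}\bigl[\bigl(I_T e(X)\mu(d(X),X)+\tfrac{I_S I\{A=d(X)\}}{\pi_S(X)\pi_d(X)}K(A,X)-V(\eta)\bigr)^2\bigr]$, where $K$ is the censoring-augmented pseudo-outcome residual; the cross term between the $\mu$ piece and the $K$ piece vanishes because $\mathbb{E}[K(A,X)\mid A,X,I_S=1]=0$, and the target--source cross term in $\sigma_{CF}^2$ vanishes because $I_S I_T=0$, exactly as you anticipate.

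The one step you leave open --- the ``variance bookkeeping'' matching the remaining second moments --- is where your write-up stops short, but it is also less delicate than you fear: once no covariate shift forces $\pi_S(X)\equiv \mathbb{P}(I_S=1)$ and $e(X)$ constant, every surviving term in either expansion has the form $\mathbb{E}[\tfrac{I_S}{\mathbb{P}(I_S=1)}B]$, $\mathbb{E}[I_T e(X) B]$, or $\mathbb{E}[\tfrac{I_S}{\pi_S(X)}B]$ for $B$ ranging over $\mu^2$, $\tfrac{I\{A=d(X)\}}{\pi_d^2(X)}K^2$, $\mu V(\eta)$, and $\tfrac{I\{A=d(X)\}}{\pi_d(X)}K\,V(\eta)$, and the calibration identity \eqref{eq:cw.iden} (applied conditionally on $X$ for the terms involving $A$, $U$, $\Delta$) shows all three expectations of each such $B$ coincide. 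That term-by-term application of \eqref{eq:cw.iden} is the entire content of the paper's closing argument, so to finish your proof you only need to make that substitution explicit rather than track sampling fractions separately. Your preliminary lemma (constancy of $\pi_S$ and $e$ under no shift) and your observation that Assumption~\ref{asmp:sme} makes the two estimands agree are both correct and are used implicitly by the paper.
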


Theorem~\ref{thm:st} implies that when there is no covariate shift, our proposed estimator does not lose efficiency in comparison to the original double robust estimator since the augmentation term in EIF~\eqref{eq:eif} from the target population, $I_T \, e(X) \mu(d(X), X)$, is asymptotically equal to this term evaluated on the source population in this case.

Moreover, when the covariate shift exists, we consider the optimal ITR $d^{\text{opt}}$ without restriction on the ITR class.

\begin{theorem}\label{thm:tl}
Under Assumptions \ref{asmp:cnpc} - \ref{asmp:nonp}, If $d^{\text{opt}} \in \mathcal{D}_\eta$, i.e., $d^{\text{opt}} = d_{\eta^{\ast}}$, both the maximizers of $\hat{V}_{DR} (\eta)$ and $\hat{V}_{CF} (\eta)$ converge to $\eta^{\ast}$. However, $\hat{V}_{DR} (\eta)$ is a biased estimator of $V(\eta)$.
\end{theorem}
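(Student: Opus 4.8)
The plan is to identify the two population limits that $\hat{V}_{DR}(\eta)$ and $\hat{V}_{CF}(\eta)$ target, show that they share the same maximizer $\eta^\ast$ precisely because the pointwise-optimal rule is distribution-free, and then read off the asymptotic bias of $\hat{V}_{DR}$ from the discrepancy between those limits. First I would note that, under the double robustness established for Theorems~\ref{thm:para}--\ref{thm:np}, the cross-fitted estimator is uniformly consistent for the target value, $\hat{V}_{CF}(\eta) \to V(\eta)$ in probability uniformly in $\eta$, whereas the standard estimator~\eqref{eq:dr.orig}, which uses only source-population nuisances and carries no sampling or design weights, is uniformly consistent for the \emph{source} value $V_S(\eta) := \mathbb{E}[y(T(d_\eta)) \mid I_S = 1]$. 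Writing $\mu(a,x) = \mathbb{E}[y(T) \mid A=a, X=x, I_S=1]$ and the blip $\tau(x) = \mu(1,x) - \mu(0,x)$, the outcome-regression identification~\eqref{eq:or} together with the identity~\eqref{eq:cw.iden} gives $V(\eta) = \mathbb{E}[\mu(0,X)] + \mathbb{E}[d_\eta(X)\tau(X)]$ and analogously $V_S(\eta) = \mathbb{E}[\mu(0,X)\mid I_S=1] + \mathbb{E}[d_\eta(X)\tau(X)\mid I_S=1]$, so maximizing either value over $\eta$ reduces to maximizing $\mathbb{E}_P[d_\eta(X)\tau(X)]$ under the corresponding covariate law $P$.

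The core step is the shared maximizer. By the survival mean exchangeability Assumption~\ref{asmp:sme}, $\mu(a,x) = \mathbb{E}[y(T(a)) \mid X=x]$, so the blip $\tau(x)$ is \emph{identical} in the source and target populations; consequently the unrestricted optimal rule $d^{\text{opt}}(x) = I\{\tau(x) > 0\}$ maximizes $\mathbb{E}_P[d(X)\tau(X)]$ pointwise for every covariate law $P$, and in particular simultaneously for the source and target laws. Under the hypothesis $d^{\text{opt}} = d_{\eta^\ast} \in \mathcal{D}_\eta$, this pointwise maximizer already lies in the restricted class, so it attains the restricted maximum of both $V(\eta)$ and $V_S(\eta)$; hence $\arg\max_\eta V(\eta) = \arg\max_\eta V_S(\eta) = \eta^\ast$. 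Intuitively, the covariate shift reweights the objective but cannot move the location of its maximum once the unrestricted optimum is expressible as a single linear index.

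It then remains to convert these population identities into convergence of the empirical maximizers and to quantify the bias. For $\hat{\eta}_{CF}$ the conclusion $\hat{\eta}_{CF} \to \eta^\ast$ is already contained in Theorem~\ref{thm:np}(iii); for the maximizer of $\hat{V}_{DR}$ I would invoke the same argmax machinery applied to the source value $V_S$, combining the uniform convergence $\hat{V}_{DR} \to V_S$ with a well-separated-maximum argument at $\eta^\ast$. The bias claim is then immediate: since $\hat{V}_{DR}(\eta) \to V_S(\eta)$ and, under covariate shift, $V_S(\eta) - V(\eta) = \mathbb{E}[\mu(d_\eta(X),X) \mid I_S=1] - \mathbb{E}[\mu(d_\eta(X),X)] \neq 0$ whenever the covariate distributions differ and $\mu(d_\eta(\cdot),\cdot)$ is non-degenerate, the limit of $\hat{V}_{DR}(\eta)$ does not equal $V(\eta)$, so $\hat{V}_{DR}$ is asymptotically biased for the target value even though its maximizer is correct. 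I expect the main obstacle to be the well-separation and uniqueness of $\eta^\ast$ required for argmax consistency: because $d_\eta$ enters through the indicator $I\{\eta^T \tilde{X} \geq 0\}$, the objective is non-smooth in $\eta$ at the decision boundary, and exhibiting a uniquely identified, sharply peaked maximum relies on the margin condition Assumption~\ref{asmp:regu}(ii) together with the normalization $|\eta_{p+1}| = 1$ to rule out flat directions and boundary ties.
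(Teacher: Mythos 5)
Your proposal is correct and follows essentially the same route as the paper's proof: both arguments rest on the observation that, by the survival mean exchangeability assumption, the blip function is identical in the source and target populations, so the unrestricted optimal rule is distribution-free and, once it lies in $\mathcal{D}_\eta$, is simultaneously the restricted maximizer of the source and target values, while the bias of $\hat{V}_{DR}$ arises because its outer expectation over $X$ is taken with respect to the source rather than the target covariate law. Your write-up is in fact more careful than the paper's terse proof --- notably in spelling out the explicit decomposition $V(\eta)=\mathbb{E}[\mu(0,X)]+\mathbb{E}[d_\eta(X)\tau(X)]$ and in flagging the well-separation/uniqueness issue needed for argmax consistency, which the paper leaves implicit.
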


Theorem~\ref{thm:tl} implies if the true optimal ITR belongs to the restricted ITR class $\mathcal{D}_\eta$, standard methods, without accounting for the covariate shift, are still able to recover the optimal ITR but fail to be consistent for the value function, due to the covariate shift. And we can only rely on the proposed method to draw valid inferences.

\section{Simulation}
\label{sec:simu}

In this section, we investigate the finite-sample properties of our method through extensive numerical simulations \footnote{The \texttt{R} code to replicate all results is available at \url{https://github.com/panzhaooo/transfer-learning-survival-ITR}.}.

Consider a target population of sample size $N = 2 \times 10^5$. The covariates $(X_1, X_2, X_3)^T$ are generated from a multivariate normal distribution with mean $0$, unit variance with $corr(X_1, X_3) = 0.2$ and all other pairwise correlations equal to $0$, and further truncated below $-4$ and above $4$ to satisfy regularity conditions. The target sample is a random sample of size $m = 8000$ from the target population. The sampling score follows $\pi_S(X) = \text{expit}(-4.5 - 0.5 X_1 - 0.5 X_2 - 0.4 X_3)$; thus the source sampling rate is around $1.6\%$, and the source sample size around $n = 3000$. The treatment assignment mechanism in the source sample follows $\pi_A(X) = \text{expit}(0.5 + 0.8 X_1 - 0.5 X_2)$.

The counterfactual survival times $T(a)$ are generated according to the hazard functions $\lambda(t \,|\, A = 0, X) = \exp(t) \cdot \exp(-2.5 -1.5 X_1 - X_2 - 0.7 X_3)$ and $\lambda(t \,|\, A = 1, X) = \exp(t) \cdot \exp(-1 - X_1 - 0.9 X_2 - X_3 - 2 X_2^2 + X_1 X_3)$. The censoring time $C$ is generated according to the hazard functions $\lambda_C(t \,|\, A = 0, X) = 0.04 \exp(t) \cdot \exp(-1.6 + 0.8 X_1 - 1.1 X_2 - 0.7 X_3)$ and $\lambda_C(t \,|\, A = 1, X) = 0.04 \exp(t) \cdot \exp(-1.8 - 0.8 X_1 - 1.7 X_2 - 1.4 X_3)$. The resultant censoring rate is approximately $20\%$.

We consider the RMST with the maximal time horizon $L = 4$ as the value function. To evaluate the performance of different estimators for optimal ITRs, we compute the corresponding true value functions and percentages of correct decisions (PCD) for the target population. Specifically, we generate a large sample with size $\tilde{N} = 1 \times 10^5$ from the target population. The true value function of any ITR $d(\cdot \,;\, \eta)$ is computed by $V(\eta) = \tilde{N}^{-1} \sum_{i=1}^{\tilde{N}} \min\{d(X_i \,;\, \eta) T_i(1) + (1 - d(X_i \,;\, \eta)) T_i(0), L\}$ and its associated PCD is computed by $1 - \tilde{N}^{-1} \sum_{i=1}^{\tilde{N}} | d(X_i \,;\, \eta^\ast) - d(X_i \,;\, \eta)|$, where $\eta^\ast = \arg\max_\eta V(\eta)$.

We compare the following estimators for the RMST $\hat{V} (\eta) = \int_0^L \hat{S} (t ; \eta) \mathrm{d}t$:
\begin{itemize}
    \item Naive: $\hat{S}^{\text{Naive}} (t ; \eta) = \frac{1}{n} \sum_{i=1}^n \frac{I\{A_i = d(X_i)\}}{\hat{\pi}_d(X_i)} \frac{\Delta_i Y_i (t)}{\hat{S}_C (U \,|\, A, X)}$; IPW formula~\eqref{eq:ipw} without using the sampling score;
    \item IPSW: $\hat{S}^{\text{IPSW}} (t ; \eta) = \frac{1}{n} \sum_{i=1}^n \frac{I_{S,i}}{\hat{\pi}_S (X_i)} \frac{I\{A_i = d(X_i)\}}{\hat{\pi}_d(X_i)} \frac{\Delta_i Y_i (t)}{\hat{S}_C (U \,|\, A, X)}$; IPW formula~\eqref{eq:ipw} where the sampling score is estimated via logistic regression; 
    \item CW-IPW: $\hat{S}^{\text{CW-IPW}} (t ; \eta) = \sum_{i=1}^n q_i \, \frac{I\{A_i = d(X_i)\}}{\hat{\pi}_d(X_i)} \frac{\Delta_i Y_i (t)}{\hat{S}_C (U \,|\, A, X)}$ IPW formula~\eqref{eq:ipw} where the sampling score is estimated by calibration weighting;
    \item CW-OR: $\hat{S}^{\text{CW-OR}} (t ; \eta) = \sum_{i=1}^n q_i \, \hat{S} (t \,|\, A = d(X_i), X_i)$; OR formula~\eqref{eq:or} in combination with calibration weights by the identity~\eqref{eq:cw.iden};
    \item ORt: $\hat{S}^{\text{ORt}} (t ; \eta) = \frac{1}{m} \sum_{i=n+1}^{n+m} \hat{S} (t \,|\, A = d(X_i), X_i)$; OR formula~\eqref{eq:or} evaluated on the target sample;
    \item ACW: augmented estimator~\eqref{eq:acw}, where the sampling score is estimated by calibration weighting.
\end{itemize}

\begin{remark}
Since the estimated value functions are non-convex and non-smooth, multiple local optimal may exist in the optimization task, and many derivatives-based algorithms do not work for this challenging setting. Here we utilize the genetic algorithm implemented in the R package \texttt{rgenoud} \citep{mebane2011genetic}, which performs well in our numerical experiments. We refer to \citet{mitchell1998introduction} for algorithmic details.  
\end{remark}

\subsection{(Semi)parametric models}

We first consider the setting where the nuisance parameters are estimated by posited (semi)parametric working models as introduced in Section~\ref{subsec:erp}. To assess the performance of these estimators under model misspecification, we consider four scenarios: (1) all models are correct, (2) only the survival outcome model is correct, (3) only the survival outcome model is wrong, (4) all models are wrong. For the wrong sampling model, the weights are estimated using calibration on $e^{X_1}$. The wrong propensity score model is fitted on $e^{X_3}$. The wrong Cox models for survival and censoring times are fitted on $(e^{X_1}, e^{X_2}, e^{X_3})^T$.

Figure~\ref{fig:simu.para} and Table~\ref{tab:sim.para} report the simulation results from $350$ Monte Carlo replications. Variance is estimated by a bootstrap procedure with $B = 200$ bootstrap replicates. The proposed ACW estimator is unbiased in scenarios (1) - (3), and the $95\%$ coverage probabilities approximately achieve the nominal level, which shows the double robustness property. 

\begin{figure}[hp]
    \centering
    \caption{Boxplot of the estimated value, true value and PCD results of estimators under four model specification scenarios. O: survival outcome, S: sampling score, A: propensity score, C: censoring; T: True (correctly specified) model, W: Wrong (misspecified) model.}\bigskip
    \includegraphics{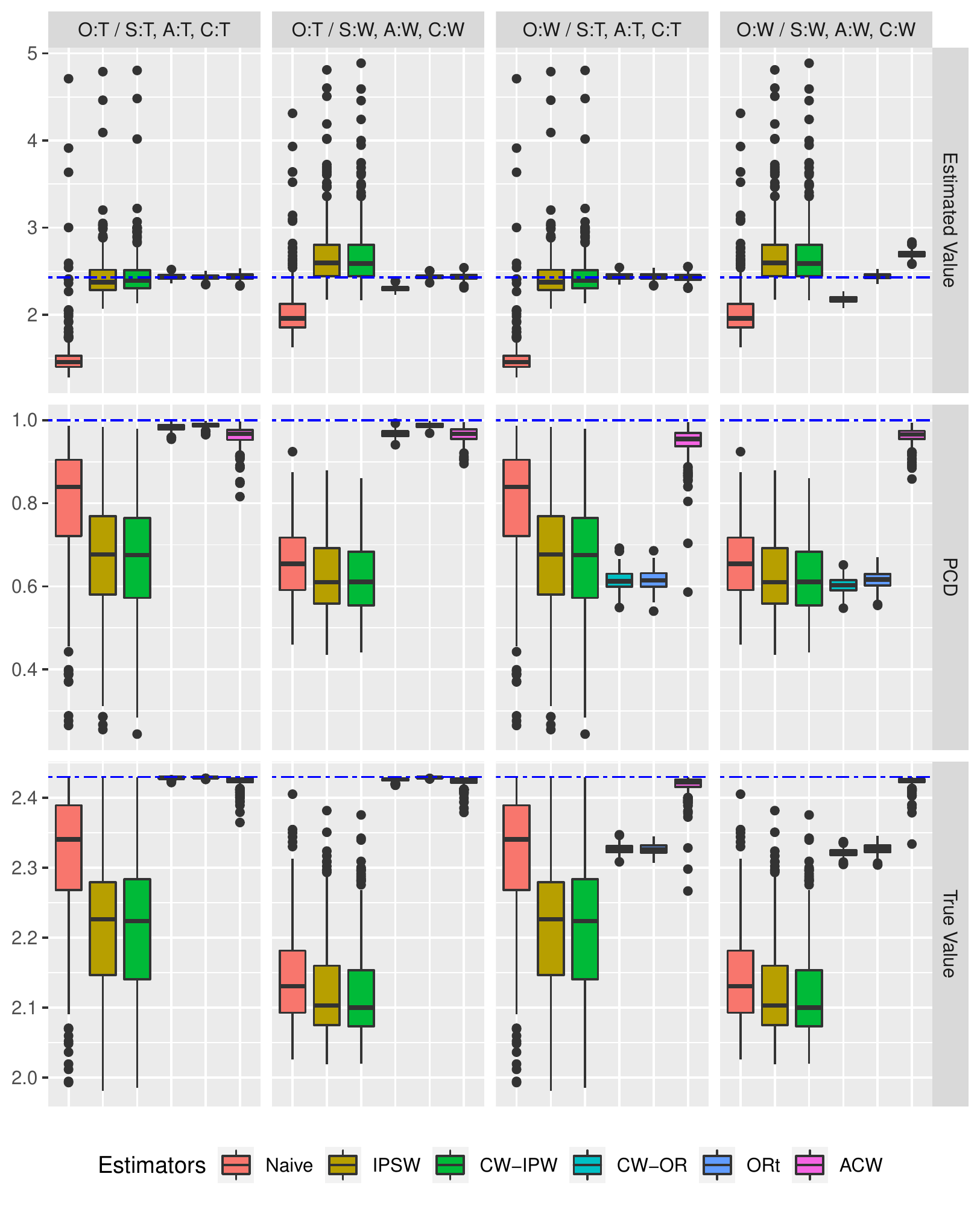}
    \label{fig:simu.para}
\end{figure}

\begin{table}[p]
\centering
\caption{Numerical results under four different model specification scenarios. Bias is the empirical bias of point estimates; SD is the empirical standard deviation of point estimates; SE is the average of bootstrap standard error estimates; CP is the empirical coverage probability of the $95\%$ confidence intervals.}\bigskip
\begin{tabular}{l cccc c cccc}
\hline\hline
 & Bias & SD & SE & CP(\%) && Bias & SD & SE & CP(\%) \\
\hline
 & \multicolumn{4}{c}{O:T / S:T, A:T, C:T} && \multicolumn{4}{c}{O:T / S:W, A:W, C:W} \\
\cline{2-5}\cline{7-10}
Naive & $-0.8801$ & $0.4595$ & $0.2189$ & $7.43$ && $-0.3528$ & $0.5024$ & $0.4598$ & $37.43$ \\
IPSW & $0.0185$ & $0.3685$ & $0.2562$ & $87.14$ && $0.3377$ & $0.7144$ & $0.6958$ & $98.29$ \\
CW-IPW & $0.0378$ & $0.3701$ & $0.2498$ & $88.29$ && $0.3406$ & $0.7144$ & $0.6957$ & $97.71$ \\
CW-OR & $0.0047$ & $0.0273$ & $0.0286$ & $96.29$ && $-0.1312$ & $0.0269$ & $0.0279$ & $0.57$ \\
ORt & $0.0041$ & $0.0258$ & $0.0262$ & $95.14$ && $0.0035$ & $0.0258$ & $0.0262$ & $95.71$ \\
ACW & $0.0070$ & $0.0380$ & $0.0369$ & $94.29$ && $0.0055$ & $0.0316$ & $0.0334$ & $95.43$ \\
\hline
 & \multicolumn{4}{c}{O:W / S:T, A:T, C:T} && \multicolumn{4}{c}{O:W / S:W, A:W, C:W} \\
\cline{2-5}\cline{7-10}
Naive & $-0.8801$ & $0.4595$ & $0.2207$ & $6.86$ && $-0.3528$ & $0.5024$ & $0.5018$ & $38.57$ \\
IPSW & $0.0185$ & $0.3685$ & $0.2486$ & $87.71$ && $0.3377$ & $0.7144$ & $0.7586$ & $99.14$ \\
CW-IPW & $0.0378$ & $0.3701$ & $0.2418$ & $88.86$ && $0.3406$ & $0.7144$ & $0.7570$ & $98.57$ \\
CW-OR & $0.0103$ & $0.0370$ & $0.0362$ & $92.29$ && $-0.2551$ & $0.0366$ & $0.0391$ & $0.00$ \\
ORt & $0.0094$ & $0.0365$ & $0.0355$ & $94.00$ && $0.0115$ & $0.0328$ & $0.0355$ & $95.71$ \\
ACW & $-0.0010$ & $0.0426$ & $0.0419$ & $93.14$ && $0.2644$ & $0.0422$ & $0.0475$ & $0.57$ \\
\hline\hline
\end{tabular}
\label{tab:sim.para}
\end{table}

\subsection{Flexible machine learning methods}

When utilizing flexible ML methods, we construct the cross-fitted ACW estimator as introduced in Section~\ref{subsec:cf}. The data generation process is the same as above, except that the censoring time $C$ is generated according to the hazard functions $\lambda_C(t \,|\, A = 0, X) = 0.2 \exp(t) \cdot \exp(-1.6 + 0.8 X_1 - 1.1 X_2 - 0.7 X_3)$ and $\lambda_C(t \,|\, A = 1, X) = 0.2 \exp(t) \cdot \exp(-1.8 - 0.8 X_1 - 1.7 X_2 - 1.4 X_3)$ which leads to an increased censoring rate of approximately $33\%$, so there are enough observations to get an accurate estimate of the censoring process. The propensity score is estimated by the generalized random forest. The conditional survival and censoring functions are estimated by the random survival forest. The calibration weighting uses calibration on the first- and second-order moments of $X$.

First, we study the impact of sample sizes on the performance of the ML methods, and simulation results are given in the Supplementary Material. With a small sample size, the ACW estimator is largely biased, and the bias diminishes as the sample size increases.

Next, we compare the performance of different estimators with target population size $N = 6 \times 10^5$ and target sample size $m = 24000$. Figure~\ref{fig:sim.np2} shows the simulation results from $200$ Monte Carlo replications. The two IPW-based estimators are biased and perform poorly due to the large variability of weights. The two OR-based estimators have comparable performance as the ACW estimator in terms of PCD and true value function but still suffer from the overfitting bias. Only the ACW estimator is consistent and provides valid inferences.

\begin{figure}[ht]
    \centering
    \caption{Boxplots of the estimated value, true value, and PCD of different estimators using flexible ML methods. }
    \includegraphics{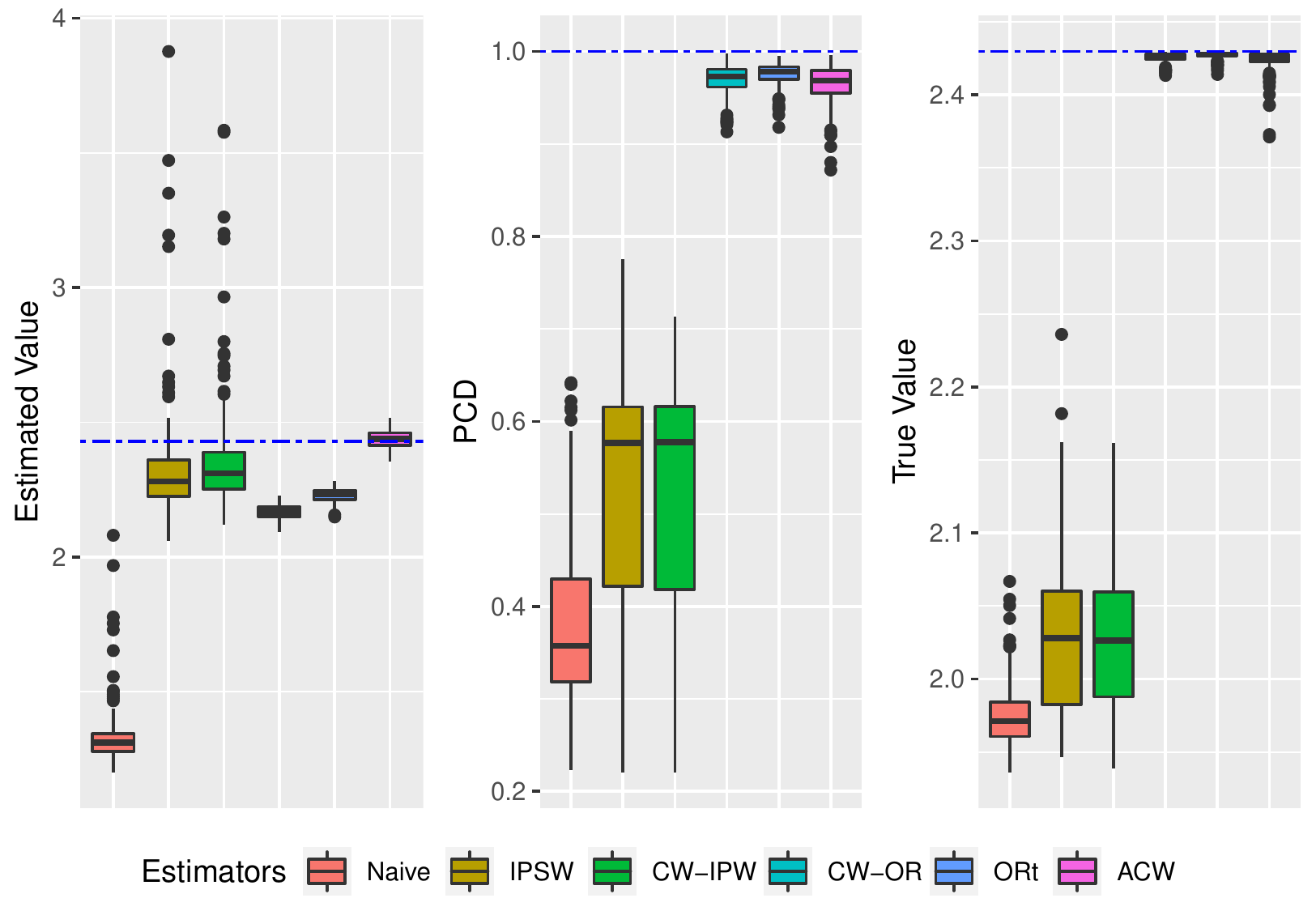}
    \label{fig:sim.np2}
\end{figure}

\section{Real Data Analysis}
\label{sec:rdat}

In this section, to illustrate the proposed method, we study the sodium bicarbonate therapy for patients with severe metabolic acidaemia in the intensive care unit by leveraging the RCT data BICAR-ICU \citep{jaber2018sodium} and the observational study (OS) data from \cite{jung2011severe}. Specifically, we consider the BICAR-ICU data as the source sample and the observational study data as the target sample. The BICAR-ICU is a multi-center, open-label, randomized controlled, phase 3 trial between May 5, 2015, and May 7, 2017, which includes $387$ adult patients admitted within $48$ hours to the ICU with severe acidaemia. The prospective, multiple-center observational study was conducted over thirteen months in five ICUs, consisting of $193$ consecutive patients who presented with severe acidemia within the first $24$ hours of their ICU admission. Some heterogeneity exists between the two populations.

Both the RCT and OS datasets contain detailed measurements of ICU patients with severe acidaemia. Motivated by the clinical practice and existing work in the medical literature, we consider ITRs that depend on the following five variables: SEPSIS, AKIN, SOFA, SEX, and AGE. A detailed description of the data preprocessing and variable selection is given in the Supplementary Material. Table~\ref{tab:data} summarizes the baseline characteristics of the two datasets. The baseline covariates distribution of the patients in the BICAR-ICU differs from the distribution in the observational study; specifically, the BICAR-ICU patients have higher SOFA scores and the more frequent presence of acute kidney injury and sepsis.

\begin{table}[ht]
\centering
\caption{Summary of baseline characteristics of the BICAR-ICU trial sample and the OS sample. Mean (standard deviation) for continuous and number (proportion) for the binary covariate. }\bigskip
\resizebox{\textwidth}{!}{
\begin{tabular}{l ccccc}
\hline\hline
 & SEPSIS & AKIN & SOFA & SEX & AGE \\
\hline
BICAR-ICU $(n = 387)$ & $236 \, (60.98\%)$ & $181 \, (46.77\%)$ & $10.12 \, (3.72)$ & $237 \, (61.24\%)$ & $63.95 \, (14.41)$ \\
OS $(m = 193)$ & $99 (51.30\%)$ & $75 \, (38.86\%)$ & $9.10 \, (4.54)$ & $122 \, (63.21\%)$ & $62.73 \, (17.49)$ \\
\hline\hline
\end{tabular}}
\label{tab:data}
\end{table}

We apply our proposed ACW estimator to learn the optimal ITR for the target population. The calibration weights are estimated based on the means of continuous covariates and the proportions of the binary covariates. The propensity score is estimated using a logistic regression model, and the Cox proportional hazard model is fitted for the survival outcome with all covariates. The censoring only occurred on the $28$th day when the follow-up in ICU ends. We consider the class of linear ITRs that depend on all five variables:
\begin{equation*}
\mathcal{D} = \{I\{\eta_1 + \eta_2 \text{SEPSIS} + \eta_3 \text{AKIN} + \eta_4 \text{SOFA} + \eta_5 \text{SEX} + \eta_6 \text{AGE} > 0 \} : \eta_1, \ldots, \eta_6 \in \mathbb{R}, |\eta_6| = 1 \},
\end{equation*}
with the aim to maximize the RMST within $28$ days in ICU stay. The estimated parameter indexing the optimal ITR is $\hat{\eta}_{\text{ACW}} = (22.9, -36.1, 87.4, -9.8, 33.7, 1.0)^T$, which leads to an estimated value function $\hat{V}(\hat{\eta}_{\text{ACW}}) = 19.52$ days, with confidence interval $[17.74, 21.30]$ given by $200$ bootstraps. In contrast, we also use the standard double robust method to estimate the optimal ITR for the RCT, indexed by $\hat{\eta}_{\text{DR.RCT}}$ which maximize the value function $\hat{V}_{\text{DR}} (\eta)$ in \eqref{eq:dr.orig} with $y(T) = \min(T, 28)$. The estimated value function is $\hat{V}(\hat{\eta}_{\text{DR.RCT}}) = 15.37$ days for the target population.

\section{Discussion}
\label{sec:disc}

In this paper, we present an efficient and robust transfer learning framework for estimating optimal ITR with right-censored survival data that generalizes well to the target population. The proposed method can be improved or extended in several directions for future work. Construction and estimation of optimal ITRs for multiple decision points with censored survival data are challenging, taking into account the timing of censoring, events and decision points \citep{jiang2017estimation,hager2018optimal}, e.g., using a reinforcement learning method \citep{cho2020multi}. Furthermore, besides the class of ITRs indexed by a Euclidean parameter, it may be possible to consider other classes of ITRs, such as tree or list-based ITRs. The current work focus on value functions in the form $V(d) = \mathbb{E}[y(T(d))]$ and can also be modified in case of optimizing certain easy-to-interpret quantile criteria, which does not require specifying an outcome regression model and is robust for heavy-tailed distributions \citep{zhou2022transformation}. And relaxing the restrictive assumptions such as positivity \citep{yang2018trimming,jin2022policy} and unconfoundedness \citep{cui2021semiparametric,qi2021proximal} for learning optimal ITRs is also a fruitful direction.

\subsubsection*{Acknowledgments}

Josse and Zhao gratefully acknowledge the French National Research Agency ANR-16-IDEX-0006.
Yang is partially supported by the USA National Institutes of Health NIA grant 1R01AG066883 and NIEHS grant 1R01ES031651.

The authors thank Maxime Fosset and Boris Jung for their help and support interpreting the BICAR-ICU trial and observational study data.

\bibliographystyle{agsm}
\bibliography{Bibliography}

\appendix\newpage

\bigskip
\begin{center}
{\large\bf SUPPLEMENTARY MATERIAL}
\end{center}

\section{Preliminaries}

\subsection{Counting processes for Cox model}
\label{subsec:cp.cox}

We use the counting process theory of \cite{andersen1982cox} in our theoretical framework to study the large sample properties of Cox model. We state the existing results that are used in our proof.

Let $X^{\otimes l}$ denote $1$ for $l=0$, $X$ for $l=1$, and $X X^T$ for $l=2$. Define
\begin{equation*}
U_a^{(l)}(\beta_a, t) = \frac{1}{n_a}\sum_{i=1}^{n} I\{A_i = a\} X_i^{\otimes l} \exp(\beta_a^T X_i) Y_i(t) \text{ and } u_a^{(l)}(\beta_a, t) = \mathbb{E}\left[X^{\otimes l} \exp(\beta_a^T X) Y(t)\right],
\end{equation*}
where $n_a = \sum_{i=1}^{n} I\{A_i = a\}$, and define 
\begin{equation*}
E_a(\beta_a, t) = \frac{U_a^{(1)}(\beta_a, t)}{U_a^{(0)}(\beta_a, t)} \text{ and } e_a(\beta_a, t) = \frac{u_a^{(1)}(\beta_a, t)}{u_a^{(0)}(\beta_a, t)}.
\end{equation*}

The maximum partial likelihood estimator $\hat{\beta}_a$ for the Cox proportional hazards model solves the estimating equation
\begin{equation*}
\mathcal{S}_{a,n}(\beta_a) = \frac{1}{n_a}\sum_{i=1}^{n} I\{A_i = a\} \int\left\{X_i - \frac{U_1^{(1)}(\beta_a, u)}{U_1^{(0)}(\beta_a, u)}\right\}\mathrm{d}N_i(u) = 0,
\end{equation*}
and the cumulative baseline hazard function $\hat{\Lambda}_{0,a}$ is estimted by the Breslow estimator:
\begin{equation*}
\hat{\Lambda}_{0,a}(t) = \int_0^t \frac{\sum_{i=1}^n I\{A_i=a\} \mathrm{d}N_i (u)}{\sum_{i=1}^n I\{A_i=a\} \exp(\hat{\beta}_a^T X_i) Y_i (u)}, a = 0, 1.
\end{equation*}

Under certain regularity conditions \citep[Conditions A -- D]{andersen1982cox}, $\hat{\beta}_a$ and $\hat{\Lambda}_{0,a}$ converge in probability to the limits $\beta_a^\ast$ and $\Lambda_{0,a}^\ast$, respectively; and we have
\begin{equation*}
\sqrt{n_a}(\hat{\beta}_a - \beta_a^\ast) = \Gamma_a^{-1} \frac{1}{\sqrt{n_a}} \sum_{i=1}^n I\{A_i = a\} H_{a,i} + o_p(1),
\end{equation*}
where $\Gamma_a = \mathbb{E}[-\partial \mathcal{S}_{a,n}(\beta_a^\ast)/\partial \beta_a^{\ast T}]$ is the Fisher information matrix of $\beta_a^\ast$, $H_{a,i} = \int I\{A_i = a\} \{X_i - e_a(\beta_a^\ast,u)\} \mathrm{d}M_{a,i}(u)$ and $\mathrm{d}M_{a,i}(u) = \mathrm{d}N_i(u) - \exp(\beta_a^{\ast T} X_i) Y_i(u) \mathrm{d}\Lambda_{0,a}^\ast(u)$. Moreover, let $S^\ast(t \,|\, a, X) = \exp\{-\Lambda_{0,a}^\ast(t) \exp(\beta_a^{\ast T} X)\}$; it is shown that $\sqrt{n_a} \{\hat{S}(t \,|\, a, X_i) - S^\ast(t \,|\, a, X_i)\}$ converges uniformly to a mean-zero Gaussian process for all $X_i$. 

Specifically, we consider the following expansion that we use in our proof of Theorem~\ref{thm:para} and Corollary~\ref{cor:para.rmst},
\begin{align*}
\hat{S}(t \,|\, a, X_i) - S^\ast(t \,|\, a, X_i) = & -S^\ast(t \,|\, a, X_i) \Lambda_{0,a}^\ast(t) \exp(\beta_a^{\ast T} X_i) X_i^T (\hat{\beta}_a - \beta_a^\ast) \\
& -S^\ast(t \,|\, a, X_i) \exp(\beta_a^{\ast T} X_i) (\hat{\Lambda}_{0,a}(t) - \Lambda_{0,a}^\ast(t)),
\end{align*}
and furthermore
\begin{align*}
\hat{\Lambda}_{0,a}(t) - \Lambda_{0,a}^\ast(t) & = \int_0^t \left\{\frac{n_a^{-1}\sum_{i=1}^n I\{A_i=a\} \mathrm{d}N_i (u)}{U_a^{(0)}(\hat{\beta}_a,u)} - \frac{n_a^{-1}\sum_{i=1}^n I\{A_i=a\} \mathrm{d}N_i (u)}{U_a^{(0)}(\beta_a^\ast,u)}\right\} \\
& \quad + \int_0^t \left\{\frac{n_a^{-1}\sum_{i=1}^n I\{A_i=a\} \mathrm{d}N_i (u)}{U_a^{(0)}(\beta_a^\ast,u)} - \mathrm{d}\Lambda_{0,a}^\ast(t)\right\} \\
& = -\left[\int_0^t \frac{U_a^{(1)}(\beta_a^\ast,u)}{\left\{U_a^{(0)}(\beta_a^\ast,u)\right\}^2}\left\{n_a^{-1}\sum_{i=1}^n I\{A_i=a\} \mathrm{d}N_i (u)\right\}\right]^T \left(\hat{\beta}_a - \beta_a^\ast \right) \\
& \quad + \int_0^t \frac{n_a^{-1}\sum_{i=1}^n I\{A_i=a\} \mathrm{d}M_{a,i}(u)}{U_a^{(0)}(\beta_a^\ast,u)} + o_p(1) \\
& = -\left\{\int_0^t e_a(\beta_a^\ast,u) \mathrm{d}\Lambda_{0,a}^\ast(u) \right\}^T \left(\hat{\beta}_a - \beta_a^\ast \right) \\
& \quad + \int_0^t\frac{n_a^{-1}\sum_{i=1}^n I\{A_i=a\} \mathrm{d}M_{a,i}(u)}{U_a^{(0)}(\beta_a^\ast,u)} + o_p(1).
\end{align*}

Combining the above two equations, we obtain
\begin{equation*}
\begin{split}
& \hat{S}(t \,|\, a, X_i) - S^\ast(t \,|\, a, X_i) \\
& = \left[-S^\ast(t \,|\, a, X_i) \Lambda_{0,a}^\ast(t) \exp(\beta_a^{\ast T} X_i) X_i^T - \left\{\int_0^t e_a(\beta_a^\ast,u) \mathrm{d}\Lambda_{0,a}^\ast(u) \right\}^T\right] \left(\hat{\beta}_a - \beta_a^\ast \right) \\
& \quad + \int_0^t\frac{n_a^{-1}\sum_{i=1}^n I\{A_i=a\} \mathrm{d}M_{a,i}(u)}{U_a^{(0)}(\beta_a^\ast,u)} + o_p(1).
\end{split}
\end{equation*}

\subsection{Cross-fitting}
\label{subsec:prel.cf}

To show the high-level idea of cross-fitting, we state the lemma from \cite{kennedy2020sharp}, which is useful in our proof of Theorem~\ref{thm:np} and Corollary~\ref{cor:np.rmst}.

\begin{lemma}\label{lem:cf}
Consider two independent samples $\mathcal{O}_1 = (O_1, \ldots, O_n)$ and $\mathcal{O}_2 = (O_{n+1}, \ldots, O_{\tilde{n}})$, let $\hat{f}(o)$ be a function estimated from $\mathcal{O}_2$ and $\mathbb{P}_n$ the empirical measure over $\mathcal{O}_1$, then we have
\begin{equation*}
    (\mathbb{P}_n - \mathbb{P})(\hat{f} - f) = O_{\mathbb{P}}\left(\frac{\|\hat{f} - f\|}{\sqrt{n}}\right)
\end{equation*}
\end{lemma}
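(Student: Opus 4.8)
The plan is to exploit the independence of the two samples by conditioning on the auxiliary sample $\mathcal{O}_2$ from which $\hat{f}$ is constructed. The central observation is that, once we condition on $\mathcal{O}_2$, the estimated function $\hat{f}$ is a fixed (non-random) integrand, so the independence of $\mathcal{O}_1$ from $\mathcal{O}_2$ lets us treat $(\mathbb{P}_n - \mathbb{P})(\hat{f} - f)$ as a centered average of i.i.d. terms evaluated at a deterministic function. This is precisely what sidesteps any need for Donsker or entropy control on the class from which $\hat{f}$ is drawn, and is the entire point of sample splitting.

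Concretely, I would first write
\[
(\mathbb{P}_n - \mathbb{P})(\hat{f} - f) = \frac{1}{n} \sum_{i=1}^n \Big\{ (\hat{f} - f)(O_i) - \mathbb{P}(\hat{f} - f) \Big\},
\]
and note that, conditionally on $\mathcal{O}_2$, the summands are i.i.d.\ and mean zero. The key computation is the conditional variance: since the $O_i \in \mathcal{O}_1$ are i.i.d.\ and independent of $\mathcal{O}_2$,
\[
\operatorname{Var}\!\big[ (\mathbb{P}_n - \mathbb{P})(\hat{f} - f) \,\big|\, \mathcal{O}_2 \big] = \frac{1}{n} \operatorname{Var}\!\big[ (\hat{f} - f)(O) \,\big|\, \mathcal{O}_2 \big] \le \frac{1}{n}\, \mathbb{P}\big[ (\hat{f} - f)^2 \big] = \frac{\| \hat{f} - f \|^2}{n},
\]
where $\|\cdot\|$ denotes the $L_2(\mathbb{P})$ norm.

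Next I would apply Chebyshev's inequality conditionally on $\mathcal{O}_2$: for every $t > 0$,
\[
\Pr\!\left( \frac{\sqrt{n}\, \big| (\mathbb{P}_n - \mathbb{P})(\hat{f} - f) \big|}{\| \hat{f} - f \|} > t \,\Big|\, \mathcal{O}_2 \right) \le \frac{1}{t^2}.
\]
Because this bound does not depend on $\mathcal{O}_2$, taking expectation over $\mathcal{O}_2$ via the tower property yields the identical unconditional bound, which is exactly the statement $(\mathbb{P}_n - \mathbb{P})(\hat{f} - f) = O_{\mathbb{P}}(\| \hat{f} - f \| / \sqrt{n})$.

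The only delicate point, and the one I expect to be the main (though mild) obstacle, is that the normalizing factor $\| \hat{f} - f \|$ is itself random through its dependence on $\mathcal{O}_2$. The conditioning argument handles this cleanly: the conditional tail bound $1/t^2$ is free of $\mathcal{O}_2$ and therefore survives integration, so no joint control of the fluctuation and its random normalization is needed. A minor implicit requirement is that $\hat{f} \in L_2(\mathbb{P})$ so that the conditional variance is finite, which holds under the boundedness and integrability conditions maintained throughout the paper.
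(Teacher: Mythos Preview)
Your proposal is correct and follows essentially the same argument as the paper: condition on $\mathcal{O}_2$ so that $\hat{f}$ is fixed, bound the conditional variance by $\|\hat{f}-f\|^2/n$, apply Chebyshev, and integrate out $\mathcal{O}_2$ using that the resulting tail bound is free of $\mathcal{O}_2$. Your presentation is in fact slightly cleaner, since you correctly normalize by $\|\hat{f}-f\|/\sqrt{n}$ in the Chebyshev step and explicitly flag the randomness of the normalizer as the only subtlety.
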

\begin{proof}
First note that by conditioning on $\mathcal{O}_2$ we obtain
\begin{equation*}
    \mathbb{E} \left\{\mathbb{P}_n(\hat{f} - f) \,\big|\, \mathcal{O}_2\right\} = \mathbb{E} (\hat{f} - f \,|\, \mathcal{O}_2) = \mathbb{P}(\hat{f} - f)
\end{equation*}
and the conditional variance is
\begin{equation*}
    var\{(\mathbb{P}_n - \mathbb{P})(\hat{f} - f) \,|\, \mathcal{O}_2\} = var\{\mathbb{P}_n (\hat{f} - f) \,|\, \mathcal{O}_2\} = \frac{1}{n}var(\hat{f} - f \,|\, \mathcal{O}_2) \leq \|\hat{f} - f\|^2 / n
\end{equation*}
therefore by Chebyshev's inequality we have 
\begin{equation*}
    \mathbb{P}\left\{\frac{|(\mathbb{P}_n - \mathbb{P})(\hat{f} - f)|}{\|\hat{f} - f\|^2 / n} \geq t \right\} = \mathbb{E}\left[\mathbb{P}\left\{\frac{|(\mathbb{P}_n - \mathbb{P})(\hat{f} - f)|}{\|\hat{f} - f\|^2 / n} \geq t \,\bigg|\, \mathcal{O}_2\right\}\right] \leq \frac{1}{t^2}
\end{equation*}
thus for any $\epsilon > 0$ we can pick $t = 1 / \sqrt{\epsilon}$ so that the probability above is no more than $\epsilon$, which yields the result.
\end{proof}

\section{Proof of Proposition~\ref{prop:iden}}

We first show the identification by the outcome regression formula.
\begin{align*}
&V(d) = \mathbb{E}[\mathbb{E}[y(T(d)) \,|\, X]] \\
&= \mathbb{E}[d(X)\mathbb{E}[y(T(1)) \,|\, X] + (1 - d(X))\mathbb{E}[y(T(0)) \,|\, X]] \\
&= \mathbb{E}[d(X)\mathbb{E}[y(T(1)) \,|\, X, I_S = 1] + (1 - d(X))\mathbb{E}[y(T(0)) \,|\, X, I_S = 1]] \\
&= \mathbb{E}[d(X)\mathbb{E}[y(T(1)) \,|\, A = 1, X, I_S = 1] & \\
&\qquad\quad + (1 - d(X))\mathbb{E}[y(T(0)) \,|\, A = 0, X, I_S = 1]] \\
&= \mathbb{E}[d(X)\mathbb{E}[y(T) \,|\, A = 1, X, I_S = 1] + (1 - d(X))\mathbb{E}[y(T) \,|\, A = 0, X, I_S = 1]] \\ 
&= \mathbb{E}[\mathbb{E}[y(T) \,|\, A = d(X), X, I_S = 1]] \\
&= \mathbb{E}[I_T \, e(X)\mathbb{E}[y(T) \,|\, A = d(X), X, I_S = 1]].
\end{align*}

Similarly, we show the identification by the IPW formula.
\begin{align*}
&V(d) = \mathbb{E}[\mathbb{E}[y(T) \,|\, A = d(X), X, I_S =1]] \\
&= \mathbb{E}\left[\frac{I_S}{\pi_S (X)} \mathbb{E}[y(T) \,|\, A = d(X), X, I_S = 1] \right] \\
&= \mathbb{E}\left[\frac{I_S}{\pi_S (X)} \frac{I\{A = d(X)\}}{\pi_d(X)} \frac{\Delta\,y(U)}{S_C(U\,|\,A,X)}\right],
\end{align*}
where the last equation follows from the standard IPTW-IPCW formula \citep{van2003unified}.

\section{Proof of Proposition~\ref{prop:eif}}

While \cite{lee2022generalizable} derived the efficient influence function for the treatment specific survival function, here we derive the EIF for the value function $V (d) = \mathbb{E}[I_T \, e(X) \mu(d(X), X)]$.

First consider the full data $Z = (X, A, T, I_S, I_T)$, and we have the factorization as
\begin{equation*}
p(Z) = \{p(X) \pi_S(X) p(A | X, I_S = 1) p(T | A, X, I_S = 1)\}^{I_S} \{p(X)\}^{I_T}.
\end{equation*}

Since $I_S I_T = 0$, the score function is $S(Z) = S(X, A, T, I_S) + I_T S(X)$. Let $V_\epsilon (d) = \mathbb{E}_\epsilon [I_T \, e(X) \mu_{\epsilon}(d(X), X)]$ denote the parameter of interest evaluated under the law $p_\epsilon(Z)$, where $\epsilon$ indexes a regular parametric submodel such that $p_0(Z)$ is the true data generating law. To establish that $V(d)$ is pathwise differentiable with EIF $\phi_d^F$, we need to show that
\begin{equation*}
\frac{\partial}{\partial\epsilon} V_\epsilon (d) \bigg|_{\epsilon=0} = \mathbb{E}[\phi_d^F S(Z)].
\end{equation*}

First, we compute
\begin{equation*}
\frac{\partial}{\partial\epsilon} V_\epsilon (d) \bigg|_{\epsilon=0} = \mathbb{E}[I_T \, e(X) \mu(d(X), X) S(X)] + \mathbb{E} \left[\frac{\partial}{\partial\epsilon} \mu_{\epsilon}(d(X), X) \bigg|_{\epsilon=0} \right],
\end{equation*}
and further write the first term on the right hand side as 
\begin{equation*}
\begin{split}
\mathbb{E}[I_T \, e(X) \mu(d(X), X) S(X)] & = \mathbb{E}[(I_T \, e(X) \mu(d(X), X) - V(d)) S(X)] \\
&= \mathbb{E} [(I_T \, e(X) \mu(d(X), X) - V(d)) S(Z)],
\end{split}
\end{equation*}
and the second term as 
\begin{equation*}
\begin{split}
& \mathbb{E} \left[\frac{\partial}{\partial\epsilon} \mu_{\epsilon}(d(X), X) \bigg|_{\epsilon=0}\right] \\
& = \mathbb{E} \left[d(X) \mathbb{E}[y(T) S(T \,|\, A, X, I_S) \,|\, A = 1, X, I_S = 1] \right. \\
& \left. \qquad + (1 - d(X)) \mathbb{E}[y(T) S(T \,|\, A, X, I_S) \,|\, A = 0, X, I_S = 1]\right] \\
& = \mathbb{E} \left[d(X) \mathbb{E}[(y(T) - \mu(1, X)) S(T \,|\, A, X, I_S) \,|\, A = 1, X, I_S = 1] \right.\\
& \left. \qquad + (1- d(X)) \mathbb{E}[(y(T) - \mu(0, X)) S(T \,|\, A, X, I_S) \,|\, A = 0, X, I_S = 1]\right] \\
& = \mathbb{E}\left[d(X) \mathbb{E}\left[\frac{I_S \, A}{\pi_S(X) \pi_A(X)}(y(T) - \mu(1, X)) S(T \,|\, A, X, I_S) \bigg| X \right] \right.\\
& \left. \qquad + (1- d(X)) \mathbb{E}\left[\frac{I_S \, (1 - A)}{\pi_S(X) (1 - \pi_A(X))} (y(T) - \mu(0, X)) S(T \,|\, A, X, I_S) \bigg| X \right] \right] \\
& = \mathbb{E}\left[\frac{I_S}{\pi_S(X)} \left( d(X) \frac{A}{\pi_A(X)}(y(T) - \mu(1, X)) \right.\right. \\
& \left.\left. \qquad + (1- d(X)) \frac{1 - A}{1 - \pi_A(X)}(y(T) - \mu(0, X)) \right) S(T \,|\, A, X, I_S) \right] \\
& = \mathbb{E} \left[\frac{I_S}{\pi_S(X)} \frac{I\{A = d(X)\}}{\pi_d(X)}(y(T) - \mu(A, X)) S(Z) \right].
\end{split}
\end{equation*}

Therefore, the efficient influence function for the full data is
\begin{equation*}
\phi_d^F = I_T \, e(X) \mu(d(X), X) + \frac{I_S}{\pi_S(X)} \frac{I\{A = d(X)\}}{\pi_d(X)}(y(T) - \mu(A, X)) - V(d).
\end{equation*}

Next, we consider the observed data $O = (X, A, U, \Delta, I_S, I_T)$ due to right censoring. According to \citet[Section 10.4]{tsiatis2006semiparametric}, the EIF based on the observed data is given by 
\begin{equation*}
\phi_d = \frac{\Delta \, \phi_d^F}{S_C(U \,|\, A, X)} + \int_0^\infty \frac{L(u, A, X)}{S_C(u \,|\, A, X)} \mathrm{d}M_C(u \,|\, A, X),
\end{equation*}
where
\begin{equation*}
\begin{split}
& L(u, A, X) = \mathbb{E}[\phi_d^F \,|\, T \geq u, A, X] \\
& = I_T \, e(X) \mu(d(X), X) + \frac{I_S}{\pi_S(X)} \frac{I\{A = d(X)\}}{\pi_d(X)}(Q(u, A, X) - \mu(A, X)) - V(d).
\end{split}
\end{equation*}

Since we have
\begin{equation}\label{eq:mt}
\begin{split}
& \int_0^\infty \frac{\mathrm{d}M_C(u \,|\, A, X)}{S_C(u \,|\, A, X)} = \int_0^\infty \frac{\mathrm{d}N_C(u)}{S_C(u \,|\, A, X)} - \int_0^U \frac{\mathrm{d}\Lambda_C(u \,|\, A, X)}{\exp\{\Lambda_C(u \,|\, A, X)\}} \\
& = 1 - \frac{\Delta}{S_C(U \,|\, A, X)},
\end{split}
\end{equation}
we conclude that 
\begin{align*}
\phi_d = & \frac{I_S}{\pi_S(X)} \frac{I\{A = d(X)\}}{\pi_d(X)} \frac{\Delta\, y(U)}{S_C(U \,|\, A,X)} - V(d) \\ 
& + \left( I_T \,e(X) - \frac{I_S}{\pi_S(X)} \frac{I\{A = d(X)\}}{\pi_d(X)}\right) \mu(d(X), X) \\
& + \frac{I_S}{\pi_S(X)} \frac{I\{A = d(X)\}}{\pi_d(X)} \int_0^\infty \frac{\mathrm{d}M_C(u\,|\,A,X)}{S_C(u\,|\,A,X)} Q(u, A, X).
\end{align*}

\section{Proof of Theorem~\ref{thm:para} and Corollary~\ref{cor:para.rmst}}

\subsection{Double robustness}
\label{pf:dr}

We start with the proof of the double robustness property. We show that EIF-based estimator is consistent when either the survival outcome model or the models for the sampling score, the propensity score and the censoring process are correctly specified. Under some regularity conditions, the nuisance estimators $\hat{\mu}(a, x)$, $\hat{Q}(u,a,x)$, $\hat{\pi}_S (x)$, $\hat{\pi}_A (x)$ and $\hat{S}_C (t \,|\, a, x)$ converge in probability to $\mu^\ast (a, x)$, $Q^\ast (u,a,x)$, $\pi_S^\ast (x)$, $\pi_A^\ast (x)$ and $S_C^\ast (t \,|\, a, x)$, respectively. It suffices to show that $\mathbb{E}[V^\ast (d)] = V(d)$, where
\begin{equation*}
\begin{split}
V^\ast (d) = & I_T \, e(X) \mu^\ast(A = d(X), X) \\
& + \frac{I_S}{\pi^\ast_S(X)} \frac{I\{A = d(X)\}}{\pi^\ast_d(X)} \left\{\frac{\Delta \, y(U)}{S_C^\ast(U \,|\, A, X)} - \mu^\ast(A, X) \right. \\
& \left. + \int_0^\infty \frac{\mathrm{d}M_C^\ast (u \,|\, A, X)}{S_C^\ast(u \,|\, A, X)} Q^\ast (u, A, X) \right\} \\
= & (I) + (II) + (III).
\end{split}
\end{equation*}

First, consider the case when the survival outcome model is correct, thus we have
\begin{equation*}
(I) = \mathbb{E}[I_T \, e(X) \mu^\ast(A = d(X), X)] = V(d)
\end{equation*}
By Equation~\ref{eq:mt}, we obtain
\begin{equation*}
\begin{split}
& (II) + (III) \\
& = \frac{I_S}{\pi^\ast_S(X)} \frac{I\{A = d(X)\}}{\pi^\ast_d(X)} \bigg\{y(T) - \mu^\ast(A, X) - \int_0^\infty \frac{\mathrm{d}M_C^\ast (u \,|\, A, X)}{S_C^\ast(u \,|\, A, X)} (y(T) - Q^\ast (u, A, X)) \bigg\}.
\end{split}
\end{equation*}

In this case, we have 
\begin{equation*}
\begin{split}
& \mathbb{E}\left[\frac{I_S}{\pi^\ast_S(X)} \frac{I\{A = d(X)\}}{\pi^\ast_d(X)} (y(T) - \mu^\ast(A, X))\right] \\
& = \mathbb{E}\left[ \mathbb{E}\left[\frac{I_S}{\pi^\ast_S(X)} \frac{I\{A = d(X)\}}{\pi^\ast_d(X)} (y(T) - \mu^\ast(A, X)) \,\bigg|\, X \right] \right] \\
& = \mathbb{E}\left[ \mathbb{E}\left[\mathbb{E}\left[\frac{I_S}{\pi^\ast_S(X)} \frac{I\{A = d(X)\}}{\pi^\ast_d(X)} (y(T) - \mu^\ast(A, X)) \,\bigg|\, A, X, I_S = 1 \right] \,\bigg|\, X \right] \right] \\
& = \mathbb{E}\left[ \mathbb{E}\left[\frac{I_S}{\pi^\ast_S(X)} \frac{I\{A = d(X)\}}{\pi^\ast_d(X)}\mathbb{E}[(y(T) - \mu^\ast(A, X)) \,|\, A, X, I_S = 1] \,\bigg|\, X \right] \right] \\
&= \mathbb{E}\left[ \mathbb{E}\left[\frac{I_S}{\pi^\ast_S(X)} \frac{I\{A = d(X)\}}{\pi^\ast_d(X)} (\mathbb{E}[y(T) \,|\, A, X, I_S = 1] - \mu^\ast(A, X)) \,\bigg|\, X \right] \right] = 0.
\end{split}
\end{equation*}

Also define $\mathrm{d}\tilde{M}_C(u \,|\, A, X) = \mathrm{d}\tilde{N}_C(u) - I\{C \geq u\} \mathrm{d}\Lambda_C(u \,|\, A, X)$ where $\tilde{N}_C(u) = I\{C \leq u\}$, so we have
\begin{equation*}
\begin{split}
& \mathbb{E}\left[\frac{I_S}{\pi^\ast_S(X)} \frac{I\{A = d(X)\}}{\pi^\ast_d(X)} \int_0^\infty \frac{\mathrm{d}M_C^\ast (u \,|\, A, X)}{S_C^\ast(u \,|\, A, X)} (y(T) - Q^\ast (u, A, X)) \right] \\
& = \mathbb{E}\left[\frac{I_S}{\pi^\ast_S(X)} \frac{I\{A = d(X)\}}{\pi^\ast_d(X)} \int_0^\infty \frac{\mathrm{d}\tilde{M}_C(u \,|\, A, X)}{S_C^\ast(u \,|\, A, X)} I\{T \geq u\} (y(T) - Q^\ast (u, A, X)) \right] \\
& = \mathbb{E}\left[\mathbb{E}\left[\frac{I_S}{\pi^\ast_S(X)} \frac{I\{A = d(X)\}}{\pi^\ast_d(X)} \int_0^\infty \frac{\mathrm{d}\tilde{M}_C^ (u \,|\, A, X)}{S_C^\ast(u \,|\, A, X)} I\{T \geq u\} (y(T) - Q^\ast (u, A, X)) \,\bigg|\, X \right]\right] \\
& = \mathbb{E}\left[\mathbb{E}\left[\mathbb{E}\left[\frac{I_S}{\pi^\ast_S(X)} \frac{I\{A = d(X)\}}{\pi^\ast_d(X)} \int_0^\infty \frac{\mathrm{d}\tilde{M}_C(u \,|\, A, X)}{S_C^\ast(u \,|\, A, X)} I\{T \geq u\} \right.\right.\right. \\
& \left.\left.\left. \qquad\qquad\qquad (y(T) - Q^\ast (u, A, X)) \,\bigg|\, A, X, C, I_S = 1 \right] \,\bigg|\, X \right]\right] \\
& = \mathbb{E}\left[\mathbb{E}\left[\frac{I_S}{\pi^\ast_S(X)} \frac{I\{A = d(X)\}}{\pi^\ast_d(X)} \int_0^\infty \frac{\mathrm{d}\tilde{M}_C(u \,|\, A, X)}{S_C^\ast(u \,|\, A, X)} \mathbb{E}\left[I\{T \geq u\} \right.\right.\right. \\
& \left.\left.\left. \qquad\qquad (y(T) - Q^\ast (u, A, X)) \,\bigg|\, A, X, C, I_S = 1 \right] \,\bigg|\, X \right]\right] \\
& = \mathbb{E}\left[\mathbb{E}\left[\frac{I_S}{\pi^\ast_S(X)} \frac{I\{A = d(X)\}}{\pi^\ast_d(X)} \int_0^\infty \frac{\mathrm{d}\tilde{M}_C(u \,|\, A, X)}{S_C^\ast(u \,|\, A, X)} \left(\mathbb{E}[I\{T \geq u\} y(T) \,|\, A, X, I_S = 1] \right.\right.\right. \\
& \left.\left.\left. \qquad\qquad - \mathbb{E}[I\{T \geq u\} \,|\, A, X, I_S = 1] Q^\ast (u, A, X) \right) \,\bigg|\, X \right]\right] = 0.
\end{split}
\end{equation*}

Next, consider the case when the models for the sampling score, the propensity score and the censoring process are correctly specified. Rearranging the terms of $V^\ast (d)$, we obtain
\begin{equation*}
\begin{split}
V^\ast (d) = & \frac{I_S}{\pi^\ast_S(X)} \frac{I\{A = d(X)\}}{\pi^\ast_d(X)} \frac{\Delta \, y(U)}{S_C^\ast(U \,|\, A, X)} \\
& + \left(I_T \, e(X) - \frac{I_S}{\pi^\ast_S(X)} \right) \mu^\ast(A = d(X), X) \\
& + \frac{I_S}{\pi^\ast_S(X)} \frac{I\{A = d(X)\}}{\pi^\ast_d(X)} \int_0^\infty \frac{\mathrm{d}M_C^\ast (u \,|\, A, X)}{S_C^\ast(u \,|\, A, X)} Q^\ast (u, A, X) \\
= & (I) + (II) + (III).
\end{split}
\end{equation*}

In this case, we have 
\begin{equation*}
(I) = \mathbb{E}\left[\frac{I_S}{\pi^\ast_S(X)} \frac{I\{A = d(X)\}}{\pi^\ast_d(X)} \frac{\Delta\, y(U)}{S_C^\ast(U \,|\, A, X)}\right] = V(d),
\end{equation*}
\begin{equation*}
\begin{split}
(II) & = \mathbb{E}\left[\left(I_T \, e(X) - \frac{I_S}{\pi^\ast_S(X)} \right) \mu^\ast(A = d(X), X) \right] \\
& = \mathbb{E}\left[\mathbb{E}\left[I_T \, e(X) - \frac{I_S}{\pi^\ast_S(X)} \bigg| X\right] \mu^\ast(A = d(X), X)\right] = 0,
\end{split}
\end{equation*}
and $(III)$ is a stochastic integral with respect to the martingale $M_C^\ast(u \,|\, A, X)$, thus equals $0$ as well, which completes the double robustness property.

\subsection{Asymptotic properties}
\label{subsec:asymp}

To establish the asymptotic results, we need some regularity conditions such that the nuisance estimators $\mu(a, x; \hat{\beta}_a,\hat{\Lambda}_{0,a})$, $Q(u,a,x; \hat{\beta}_a,\hat{\Lambda}_{0,a})$, $\pi_S (x; \hat{\lambda})$, $\pi_A (x; \hat{\theta})$ and $S_C (u \,|\, a, x; \hat{\alpha}_a,\hat{\Lambda}_{C0,a})$ converge in probability to $\mu (a, x; \beta_a^\ast,\Lambda_{0,a}^\ast)$, $Q (u,a,x; \beta_a^\ast,\Lambda_{0,a}^\ast)$, $\pi_S (x; \lambda^\ast)$, $\pi_A (x; \theta^\ast)$ and\\ $S_C (t \,|\, a, x; \alpha_a^\ast,\Lambda_{C0,a}^\ast)$, respectively.
\begin{condition}\label{cond:them1}
We assume the following conditions hold: \\
(C1) $X$ is bounded almost surely. \\
(C2) The equation $\mathbb{E}\left[\left\{A - \frac{\exp(\theta^T X)}{1 + \exp(\theta^T X)}\right\} X \right] = 0$ has a unique solution $\theta^\ast$. \\
(C3) For $a = 0, 1$, the equation
\begin{equation*}
\mathbb{E}\left[\int_0^L \left(X_i - \frac{\mathbb{E}[Y_i(u) \exp(\beta_a^T X) X]}{\mathbb{E}[Y_i(u) \exp(\beta_a^T X)]} \right) \times \mathrm{d}N_i(u)\right] = 0,
\end{equation*}
has a unique solution $\beta_a^\ast$, where $L > u$ is a pre-specified time point such that $Pr(U_i > L) > 0$. Moreover, let 
\begin{equation*}
\Lambda_{0,a}^\ast (u) = \mathbb{E}\left[\int_0^u \frac{\mathrm{d}N_i (u)}{\mathbb{E}[Y_i(u) \exp(\beta_a^{\ast T} X_i)]}\right],
\end{equation*}
and assume $\Lambda_{0,a}^\ast (L) < \infty$. \\
(C4) For $a = 0, 1$, the equation
\begin{equation*}
\mathbb{E}\left[\int_0^L \left(X_i - \frac{\mathbb{E}[Y_i(u)\exp(\alpha_a^T X) X]}{\mathbb{E}[Y_i(u) \exp(\alpha_a^T X)]}\right) \times \mathrm{d}N_i(u)\right] = 0,
\end{equation*}
has a unique solution $\alpha_a^\ast$. Moreover, let 
\begin{equation*}
\Lambda_{C0,a}^\ast (u) = \mathbb{E}\left[\int_0^u \frac{\mathrm{d}N_i (u)}{\mathbb{E}[Y_i(u) \exp(\alpha_a^{\ast T} X_i)]}\right],
\end{equation*}
and assume $\Lambda_{C0,a}^\ast (L) < \infty$. \\
(C5) The estimating equation for the sampling score model $\pi_S (X;\lambda)$ has a unique solution $\lambda^\ast$, and achieves root-$n$ rate of convergence. 
\end{condition}

Under Condition~\ref{cond:them1}, we have the following asymptotic representations:
\begin{align*}
\sqrt{n} (\hat{\theta} - \theta^\ast) &= \frac{1}{\sqrt{n}} \sum_{i=1}^n \phi_{\theta i} + o_p(1), & \sqrt{n} (\hat{\lambda} - \lambda^\ast) &= \frac{1}{\sqrt{n}} \sum_{i=1}^n \phi_{\lambda i} + o_p(1), & \\
\sqrt{n} (\hat{\beta}_a - \beta_a^\ast) &= \frac{1}{\sqrt{n}} \sum_{i=1}^n \phi_{\beta_a i} + o_p(1), & \sqrt{n} (\hat{\alpha}_a - \alpha_a^\ast) &= \frac{1}{\sqrt{n}} \sum_{i=1}^n \phi_{\alpha_a i} + o_p(1), & \text{for } a = 0, 1.
\end{align*}

We focus on the estimation of survival functions by our proposed method:
\begin{align*}
\hat{S}(t;\eta) = & \frac{1}{N}\sum_{i=1}^{N} \Bigg[I_{T,i} \, e(X_i) \hat{S}(t \,|\, A = d_\eta(X_i), X_i) \\
& \qquad\qquad + \frac{I_{S,i} I\{A_i = d_\eta(X_i)\}}{\hat{\pi}_S (X_i) \hat{\pi}_{d}(X_i)} \left\{\frac{\Delta_i \, Y_i(t)}{\hat{S}_C(t \,|\, A_i, X_i)} - \hat{S}(t \,|\, A_i, X_i) \right. \\
& \qquad\qquad \left. + \int_0^\infty \frac{\hat{S}(t \,|\, A_i, X_i) \mathrm{d}\hat{M}_C(u \,|\, A_i, X_i)}{\hat{S}(u \,|\, A_i, X_i) \hat{S}_C(u \,|\, A_i, X_i)} \right\} \Bigg],
\end{align*}
and for the ease of notation, define
\begin{equation*}
\hat{J} (t,a,x) = \frac{\Delta_i \, Y_i(t)}{\hat{S}_C(t \,|\, a, x)} - \hat{S}(t \,|\, a, x) + \int_0^\infty \frac{\hat{S}(t \,|\, a, x) \mathrm{d}\hat{M}_C(u \,|\, a, x)}{\hat{S}(u \,|\, a, x) \hat{S}_C(u \,|\, a, x)},
\end{equation*}
\begin{equation*}
J^\ast (t,a,x) = \frac{\Delta_i \, Y_i(t)}{S^\ast_C(t \,|\, a, x)} - S^\ast(t \,|\, a, x) + \int_0^\infty \frac{S^\ast(t \,|\, a, x) \mathrm{d}M_C^\ast(u \,|\, a, x)}{S^\ast(u \,|\, a, x) S_C^\ast(u \,|\, a, x)}.
\end{equation*}

Our proof has three main parts as follows.

{\bf PART 1.} By the double robustness property shown in Section~\ref{pf:dr}, we have, by the strong law of large numbers and uniform consistency, that $\hat{S}(t;\eta) = S(t;\eta) + o_p(1)$, which proves $(i)$ of Theorem~\ref{thm:para}. Moreover, define
\begin{equation*}
S_N^\ast(t;\eta) = \frac{1}{N}\sum_{i=1}^{N} \left[I_{T,i} \, e(X_i) S^\ast(t \,|\, A = d_\eta(X_i), X_i) + \frac{I_{S,i} I\{A_i = d_\eta(X_i)\}}{\pi^\ast_S (X_i) \pi^\ast_{d}(X_i)} J^\ast (t, A_i, X_i) \right],
\end{equation*}
and by applying the Taylor expansion and the counting processes result in Section~\ref{subsec:cp.cox}, we obtain
\begin{align*}
\hat{S}(t;\eta) = & S_n^\ast(t;\eta) + H_\lambda^T (\hat{\lambda} - \lambda^\ast) + H_\theta^T (\hat{\theta} - \theta^\ast) + H_{\beta_0}^T (\hat{\beta}_0 - \beta_0^\ast) + H_{\beta_1}^T (\hat{\beta}_1 - \beta_1^\ast) \\
& + H_{\alpha_0}^T (\hat{\alpha}_0 - \alpha_0^\ast) + H_{\alpha_1}^T (\hat{\alpha}_1 - \alpha_1^\ast) + R_S + o_p(N^{-1/2}),
\end{align*}
where 
\begin{equation*}
H_\lambda = \lim_{N \rightarrow \infty} \frac{1}{N} \sum_{i=1}^N \frac{\partial \hat{S}(t;\eta)}{\partial \lambda^\ast}, H_\theta = \lim_{N \rightarrow \infty} \frac{1}{N} \sum_{i=1}^N \frac{\partial \hat{S}(t;\eta)}{\partial \theta^\ast},
\end{equation*}
\begin{equation*}
\begin{split}
H_{\beta_a} = & \lim_{N \rightarrow \infty} \frac{1}{N} \sum_{i=1}^N \left\{ I_{T,i} \, e(X_i) (-1)^{a+1} G(t, a, X_i) + \frac{I_{S,i} I\{A_i = a\}}{\pi^\ast_S(X_i) \pi^\ast_d(X_i)} \left(\int_0^\infty\frac{G(t,a,X_i)\mathrm{d}M^\ast_C(u \,|\, a, X_i)}{S^\ast(u \,|\, a, X_i)S_C^\ast(u \,|\, a, X_i)} \right.\right. \\ 
& \qquad\qquad\qquad \left.\left. - G(t,a,X_i) - \int_0^\infty\frac{G(u,a,X_i)S^\ast(t \,|\, a, X_i)\mathrm{d}M^\ast_C(u \,|\, a, X_i)}{S^{\ast 2}(u \,|\, a, X_i) S^\ast_C(u \,|\, a, X_i)} \right)\right\},
\end{split}
\end{equation*}
\begin{equation*}
\begin{split}
H_{\alpha_a} = & \lim_{N \rightarrow \infty} \frac{1}{N} \sum_{i=1}^N \frac{I_{S,i} I\{A_i=a\}}{\pi^\ast_S(X_i) \pi^\ast_d(X_i)} \left\{ \frac{-\Delta_i Y_i(t)}{S^\ast_C(t \,|\, a, X_i)} G_C(t, a, X_i) \right. \\ 
& \qquad\qquad\quad \left. - \int_0^\infty\frac{G_C(u,a,X_i) S^\ast(t \,|\, a, X_i) \mathrm{d}M^\ast_C(u \,|\, a, X_i)}{S_C^{\ast 2}(u \,|\, a, X_i) S^\ast(u \,|\, a, X_i)} + \tilde{G}_C(t,a,X_i)\right\},
\end{split}
\end{equation*}
\begin{equation*}
\begin{split}
R_S & = \frac{1}{N} \sum_{i=1}^N \sum_{a=0,1} \bigg\{ I_{T,i} \, e(X_i) (-1)^{a+1} H(t,a,X_i) \\
& \qquad + \frac{I_{S,i} I\{A_i = a\}}{\pi^\ast_S(X_i) \pi^\ast_d(X_i)} \bigg(\int_0^\infty\frac{H(t,a,X_i)\mathrm{d}M^\ast_C(u \,|\, a, X_i)}{S^\ast_C(u \,|\, a, X_i)S^\ast(u \,|\, a, X_i)} - H(t,a,X_i) \\
& \qquad - \int_0^\infty\frac{H(u,a,X_i)S^\ast(t \,|\, a, X_i)\mathrm{d}M^\ast_C(u \,|\, a, X_i)}{S^\ast_C(u\,|\, a, X_i)S^{\ast 2}(u\,|\, a, X_i)} - \frac{\Delta_i Y_i(t)}{S^\ast_C(t|a,X_i)} H_C(t,a,X_i) \\
& \qquad - \int_0^\infty\frac{H_C(u,a,X_i)S^\ast(t \,|\, a, X_i)\mathrm{d}M^\ast_C(u \,|\, a, X_i)}{S_C^{\ast 2}(u \,|\, a, X_i)S^\ast(u \,|\, a, X_i)} - \tilde{H}_C(t, a, X_i) \bigg) \bigg\} \\
& = \frac{1}{N} \sum_{i=1}^N \phi_{Rs,i},
\end{split} 
\end{equation*}
with
\begin{equation*}
G(t, a, x) = -S^\ast(t \,|\, a, x)\Lambda_{0,a}^\ast(t) x^T + S^\ast(t \,|\, a, x) \exp(\beta_a^{\ast T} x) \left\{\int_0^t e_a(\beta_a^\ast,u)\mathrm{d}\Lambda_{0,a}^\ast(u) \right\}^T,
\end{equation*}
\begin{equation*}
H(t, a, x) = -S^\ast(t \,|\, a, x) \exp(\beta_a^{\ast T} x) \int_0^t\frac{n_a^{-1}\sum_{i=1}^n I\{A_i=a\} \mathrm{d}M_{a,i}(u)}{U_a^{(0)}(\beta_a^\ast,u)},
\end{equation*}
\begin{equation*}
G_C(t, a, x) = -S^\ast(t \,|\, a, x) \Lambda^\ast_{0,a}(t) x^T + S^\ast(t \,|\, a, x) \exp(\beta_a^{\ast T} x) \left\{\int_0^t e_a(\beta_a^\ast,u)\mathrm{d}\Lambda_{0,a}^\ast(u) \right\}^T,
\end{equation*}
\begin{equation*}
H_C(t, a, x) = -S^\ast(t \,|\, a, x) \exp(\beta_a^{\ast T} x) \int_0^t\frac{n_a^{-1}\sum_{i=1}^n I\{A_i=a\} \mathrm{d}M_{a,i}(u)}{U_a^{(0)}(\beta_a^\ast,u)},
\end{equation*}
\begin{equation*}
\tilde{G}_C(t, a, x) = \int_0^{U_i}\frac{S^\ast(t \,|\, a, x)\mathrm{d}\Lambda_{C}^\ast(u \,|\, a, x)}{S^\ast_C(u \,|\, a, x) S^\ast(u \,|\, a, x)} x^T + \left\{\int_0^t \frac{S^\ast(t \,|\, a, x) e_a(\beta_a^\ast,u)\mathrm{d}\Lambda_{0,a}^\ast(u)}{S^\ast_C(u \,|\, a, x) S^\ast(u \,|\, a, x)} \right\}^T,
\end{equation*}
\begin{equation*}
\tilde{H}_C(t, a, x) = \int_0^t \frac{S^\ast(t \,|\, a, x) n_a^{-1}\sum_{i=1}^n I\{A_i=a\} \mathrm{d}M_{a,i}(u)}{S^\ast_C(u \,|\, a, x) S^\ast(u \,|\, a, x) U_a^{(0)}(\beta_a^\ast,u)}.
\end{equation*}
Thus, we have
\begin{equation}\label{eq:asym.dist}
\sqrt{N} \left\{\hat{S}(t;\eta) - S(t;\eta)\right\} = \frac{1}{\sqrt{N}} \sum_{i=1}^N (\xi_{1,i}(t;\eta) + \xi_{2,i}(t;\eta)) + o_p(1),
\end{equation}
where
\begin{equation*}
\xi_{1,i}(t;\eta) = S_n^\ast(t;\eta) - S(t;\eta),
\end{equation*}
\begin{equation*}
\xi_{2,i}(t;\eta) = H_\lambda^T \phi_{\lambda^\ast, i} + H_\theta^T \phi_{\theta^\ast,i} + \sum_{a=0,1} H_{\beta_a}^T \phi_{\beta_0^\ast,i} + \sum_{a=0,1} H_{\alpha_a}^T \phi_{\alpha_a^\ast,i} + H_{\alpha_1}^T + \phi_{Rs,i},
\end{equation*}
and $\xi_{1,i}(t;\eta), \xi_{2,i}(t;\eta)$ are independent mean-zero processes. Therefore, we obtain that $\sqrt{N} \left\{\hat{S}(t;\eta) - S(t;\eta)\right\}$ converges weakly to a mean-zero Gaussian process, which proves $(ii)$ of Theorem~\ref{thm:para}.

{\bf PART 2.} We show that $N^{1/3} \|\hat{\eta} - \eta^\ast\|_2 = O_p(1)$. Recall that
\begin{equation*}
\hat{\eta} = \arg\max_\eta \hat{S}(t;\eta) \text{ and } \eta^\ast = \arg\max_\eta S(t;\eta).
\end{equation*}

By Assumption~\ref{asmp:regu} $(i)$, $S(t;\eta)$ is twice continuously differentiable at a neighborhood of $\eta^\ast$; in Step 1, we show that $\hat{S}(t;\eta) = S(t;\eta) + o_p(1), \forall \eta$; since $\hat{\eta}$ maximizes $\hat{S}(t;\eta)$, we have that $\hat{S}(t;\hat{\eta}) \geq \sup_{\eta} \hat{S}(t;\eta)$, thus by the Argmax theorem, we have $\hat{\eta} \overset{p}{\to} \eta^\ast$ as $N \to \infty$.

In order to establish the $N^{-1/3}$ rate of convergence of $\hat{\eta}$, we apply Theorem 14.4 (Rate of convergence) of \citet{kosorok2008introduction}, and need to find the suitable rate that satisfies three conditions below.

{\bf Condition 1} For every $\eta$ in a neighborhood of $\eta^\ast$ such that $\|\eta - \eta^\ast\|_2 < \delta$, by Assumption~\ref{asmp:regu} $(i)$, we apply the second-order Taylor expansion,
\begin{align*}
S(t;\eta) - S(t;\eta^\ast) & = S'(\eta^\ast) \|\eta - \eta^\ast\|_2 + \frac{1}{2} S''(\eta^\ast) \|\eta - \eta^\ast\|_2^2 + o(\|\eta - \eta^\ast\|_2^2) \\
& = \frac{1}{2} S''(\eta^\ast)\|\eta - \eta^\ast\|_2^2 + o(\|\eta - \eta^\ast\|_2^2),
\end{align*}
and as $S''(\eta^\ast) < 0$, there exists $c_0 = -\frac{1}{2} S''(\eta^\ast) > 0$ such that $S(t;\eta) - S(t;\eta^\ast) \leq -c_0 \|\eta - \eta^\ast\|_2^2$. 

{\bf Condition 2} For all $N$ large enough and sufficiently small $\delta$, we consider the centered process $\hat{S} - S$, and have that 
\begin{align*}
& \mathbb{E} \left[ \sqrt{N} \sup_{\|\eta - \eta^\ast\|_2 < \delta} \left| \hat{S}(t;\eta) - S(t;\eta) - \left\{\hat{S}(t;\eta^\ast) - S(t;\eta^\ast)\right\} \right| \right] \\
& = \mathbb{E} \Bigg[ \sqrt{N} \sup_{\|\eta - \eta^\ast\|_2 < \delta} \left| \hat{S}(t;\eta) - S_n^\ast(t;\eta) + S_n^\ast(t;\eta) - S(t;\eta) \right. \\
& \left. \qquad\quad - \left\{ \hat{S}(t;\eta^\ast) - S_n^\ast(t;\eta^\ast) + S_n^\ast(t;\eta^\ast) - S(t;\eta^\ast)\right\} \right| \Bigg] \\
& \leq \mathbb{E} \left[ \sqrt{N} \sup_{\|\eta - \eta^\ast\|_2 < \delta} \left| \hat{S}(t;\eta) - S_n^\ast(t;\eta) - \left\{ \hat{S}(t;\eta^\ast) - S_n^\ast(t;\eta^\ast) \right\} \right| \right] \tag{$I$} \\
& \quad + \mathbb{E} \left[ \sqrt{N} \sup_{\|\eta - \eta^\ast\|_2 < \delta} \left| S_n^\ast(t;\eta) - S(t;\eta) - \left\{ S_n^\ast(t;\eta^\ast) - S(t;\eta^\ast)\right\} \right| \right], \tag{$II$}
\end{align*}
and we bound $(I)$ and $(II)$ respectively as follows.

{\bf Condition 2.1} To bound $(II)$, we need the useful facts that
\begin{equation*}
I\{A = d_\eta(X)\} - I\{A = d_{\eta\ast}(X)\} = (2A - 1)(d_\eta(X) - d_{\eta\ast}(X)),
\end{equation*}
\begin{equation*}
S^\ast(t \,|\, d_\eta(X_i), X_i) - S^\ast(t \,|\, d_{\eta^\ast}(X_i), X_i) = (S^\ast(t \,|\, 1, X_i) - S^\ast(t \,|\, 0, X_i))(d_\eta(X_i) - d_{\eta\ast}(X_i)),
\end{equation*}
and obtain 
\begin{align*}
& S_n^\ast (t;\eta) - S_n^\ast (t;\eta^\ast) = \frac{1}{N} \sum_{i=1}^{N} (d_\eta(X_i) - d_{\eta\ast}(X_i)) \\
& \quad \times \left\{I_{T,i} \, e(X_i) (S^\ast(t \,|\, 1, X_i) - S^\ast(t \,|\, 0, X_i)) + \frac{(2A_i - 1) I_{S,i}}{\pi_S^\ast(X_i) \pi_d^\ast(X_i)} J^\ast(t, A_i, X_i) \right\}.
\end{align*}

Define a class of functions
\begin{align*}
\mathcal{F}_\eta^1 = & \bigg\{ (d_\eta(x) - d_{\eta\ast}(x)) \bigg(I_T \, e(x) (S^\ast(t \,|\, 1, x) - S^\ast(t \,|\, 0, x)) + \frac{(2a - 1) I_S}{\pi_a^\ast(x) \pi_S^\ast (x)} J^\ast(t,a,x) \bigg): \\
& \qquad \|\eta - \eta^\ast\|_2 < \delta \bigg\},
\end{align*}
and let $M_1 = \sup \left| I_T \, e(x) (S^\ast(t \,|\, 1, x) - S^\ast(t \,|\, 0, x)) + \frac{(2a - 1) I_S}{\pi_a^\ast(x) \pi_S^\ast (x)} J^\ast(t,a,x)\right|$. By Assumption~\ref{asmp:cnpc}, \ref{asmp:psi} and Condition~\ref{cond:them1}, we have that $M_1 < \infty$. 

When $\|\eta -\eta^\ast\|_2 < \delta$, by Condition~\ref{cond:them1} \textit{(C1)}, there exists a constant $0 < k_0 < \infty$ such that $|(1,x^T) (\eta - \eta^\ast)| < k_0 \delta$; furthermore, we show that $|d_\eta(x) - d_{\eta\ast}(x)| = |I\{(1,x^T) \eta > 0\} - I\{(1,x^T) \eta^\ast > 0\}| \leq I\{-k_0\delta\leq (1,x^T)\eta^\ast\leq k_0\delta\}$, by considering the three cases:
\begin{itemize}
\item when $-k_0\delta \leq (1,x^T)\eta^\ast \leq k_0\delta$, we have $|d_\eta(x) - d_{\eta\ast}(x)| \leq 1 = I\{-k_0\delta\leq (1,x^T)\eta^\ast\leq k_0\delta\}$;
\item when $(1,x^T)\eta^\ast > k_0 \delta > 0$, we have $(1,x^T)\eta = (1,x^T)(\eta - \eta^\ast) + (1,x^T)\eta^\ast > 0$, so $|d_\eta(x) - d_{\eta\ast}(x)| = 0 = I\{-k_0\delta\leq (1,x^T)\eta^\ast\leq k_0\delta\}$; 
\item when $(1,x^T)\eta^\ast < -k_0 \delta < 0$, we have $(1,x^T)\eta = (1,x^T)(\eta - \eta^\ast) + (1,x^T)\eta^\ast < 0$, so $|d_\eta(x) - d_{\eta\ast}(x)| = 0 = I\{-k_0\delta\leq (1,x^T)\eta^\ast\leq k_0\delta\}$.
\end{itemize}

Thus we can define the envelope of $\mathcal{F}_\eta^1$ as $F_1 = M_1 I\{-k_0 \delta \leq (1, x^T) \eta^\ast \leq k_0 \delta\}$. By Assumption~\ref{asmp:regu} \textit{(ii)}, there exists a constant $0 < k_1 < \infty$ such that
\begin{equation*}
\|F_1\|_{P,2} \leq M_1 \sqrt{Pr(-k_0 \delta \leq (1, x^T) \eta^\ast \leq k_0 \delta)} \leq M_1 \sqrt{2k_0 k_1} \delta^{1/2}.
\end{equation*}

By Lemma 9.6 and Lemma 9.9 of \cite{kosorok2008introduction}, we have that $\mathcal{F}_\eta^1$, a class of indicator functions, is a Vapnik-Cervonenkis (VC) class with bounded bracketing entropy $J_{[]}^\ast(1,\mathcal{F}_\eta^1) < \infty$.

Since we have the fact that
\begin{align*}
\mathbb{G}_N \mathcal{F}_\eta^1 & = N^{-1/2} \sum_{i=1}^N \left\{\mathcal{F}_\eta^1 - \mathbb{E}[\mathcal{F}_\eta^1] \right\} \\
& = \sqrt{N} \left(S_n^\ast(t;\eta) - S_n^\ast(t;\eta^\ast) - \left\{S(t;\eta) - S(t;\eta^\ast)\right\}\right),
\end{align*}
By Theorem 11.2 of \cite{kosorok2008introduction}, we obtain that there exists a constant $0 < c_1 < \infty$,
\begin{equation*}
(II) = \mathbb{E}\left[\sup_{\|\eta - \eta^\ast\|_2 < \delta} |\mathbb{G}_N \mathcal{F}_\eta^1|\right] \leq c_1 J_{[]}^\ast(1,\mathcal{F}_\eta^1) \|F_1\|_{P,2} \leq c_1 J_{[]}^\ast(1,\mathcal{F}_\eta^1)M_1\sqrt{2k_0 k_1} \delta^{1/2} = \tilde{c}_1 \delta^{1/2},
\end{equation*}
so we conclude that $(II) \leq \tilde{c}_1 \delta^{1/2}$ where $\tilde{c}_1 > 0$ is a finite constant.

{\bf Condition 2.2} To bound $(I)$, first we have 
\begin{align*}
& \hat{S}(t;\eta) - S_n^\ast(t;\eta) - \{\hat{S}(t;\eta^\ast) - S_n^\ast(t;\eta^\ast)\} = \hat{S}(t;\eta) - \hat{S}(t;\eta^\ast) - \{S_n^\ast(t;\eta) - S_n^\ast(t;\eta^\ast)\} \\
& = \frac{1}{N} \sum_{i=1}^{N} (d_\eta(X_i) - d_{\eta\ast}(X_i)) \left[I_{T,i} \, e(X_i) \{\hat{S}(t|1,X_i) - \hat{S}(t|0,X_i) - (S^\ast(t|1,X_i) - S^\ast(t|0,X_i)) \} \right. \\
& \left. \quad + \frac{(2A_i - 1) I_{S,i}}{\hat{\pi}_{A_i}(X_i) \hat{\pi}_S(X_i)} \hat{J}(t,A_i,X_i) - \frac{(2A_i - 1) I_{S,i}}{\pi_{A_i}^\ast(X_i) \pi^\ast_S(X_i)} J^\ast(t,A_i,X_i) \right],
\end{align*}
and then apply the Taylor expansion and counting processes result in Section~\ref{subsec:cp.cox},
\begin{equation}\label{eq:cond2.2}
\begin{split}
& \hat{S}(t;\eta) - S_n^\ast(t;\eta) - \{\hat{S}(t;\eta^\ast) - S_n^\ast(t;\eta^\ast)\} \\
& = \frac{1}{N} \sum_{i=1}^{N} (d_\eta(X_i) - d_{\eta\ast}(X_i)) \times \left\{ D_\lambda (\hat{\lambda} - \lambda^\ast) + D_\theta (\hat{\theta} - \theta^\ast) + D_{\beta_0} (\hat{\beta}_0 - \beta_0^\ast) \right. \\
& \left. \quad + D_{\beta_1} (\hat{\beta}_1 - \beta_1^\ast) + D_{\alpha_0} (\hat{\alpha}_0 - \alpha_0^\ast) + D_{\alpha_1} (\hat{\alpha}_1 - \alpha_1^\ast) + R_{S,i} \right\} + o_p(N^{-1/2}),
\end{split}
\end{equation}
where
\begin{align*}
D_\lambda = -\frac{(2A_i - 1)I_{S,i}}{\pi^\ast_{A_i}(X_i) \pi^{\ast 2}_S(X_i)} J^\ast(t,A_i,X_i) \left(\frac{\partial \pi_S^\ast(X_i)}{\partial\lambda}\right)^T,
\end{align*}
\begin{align*}
D_\theta = -\frac{I_{S,i}}{\pi_{A_i}^{\ast 2}(X_i) \pi^\ast_S(X_i)} J^\ast(t,A_i,X_i) \left(\frac{\partial \pi_A^\ast(X_i)}{\partial\theta}\right)^T,
\end{align*}
\begin{align*}
D_{\beta_a} = & I_{T,i} \, e(X_i) (-1)^{a+1} G(t, a, X_i) + \frac{(2A_i - 1)I\{A_i = a\} I_{S,i}}{\pi_{A_i}^\ast(X_i) \pi_S^\ast(X_i)} \left(\int_0^\infty\frac{G(t,a,X_i)\mathrm{d}M_C^\ast(u \,|\, a, X_i)}{S_C^\ast(u \,|\, a, X_i)S^\ast(u \,|\, a, X_i)} \right. \\ 
& \left. - G(t,a,X_i) - \int_0^\infty \frac{G(u,a,X_i) S^\ast(t \,|\, a, X_i) \mathrm{d}M_C^\ast(u \,|\, a, X_i)}{S_C^\ast(u \,|\, a, X_i) S^{\ast 2}(u \,|\, a, X_i)} \right),
\end{align*}
\begin{align*}
D_{\alpha_a} = & \frac{(2A_i - 1) I\{A_i = a\} I_{S,i}}{\pi^\ast_{A_i}(X_i) \pi_S^\ast(X_i)} \left\{-\frac{\Delta_i \, Y_i(t)}{S^\ast_C(t \,|\, a, X_i)} G_C(t,a,X_i) \right. \\ 
& \left. \quad - \int_0^\infty \frac{G_C(u,a,X_i) S^\ast(t \,|\, a, X_i) \mathrm{d}M_C^\ast(u \,|\, a, X_i)}{S_C^{\ast 2}(u \,|\, a, X_i) S^\ast(u \,|\, a, X_i)} + \tilde{G}_C(t,a,X_i) \right\},
\end{align*}
\begin{align*}
R_{S,i} = & \sum_{a=0,1} \bigg[ I_{T,i} \, e(X_i) (-1)^{a+1} H(t,a,X_i) + \frac{(2A_i - 1) I\{A_i = a\} I_{S,i}}{\pi_{A_i}^\ast(X_i) \pi_S^\ast(X_i)} \bigg(\int_0^\infty\frac{H(t,a,X_i)\mathrm{d}M_C^\ast(u \,|\, a, X_i)}{S_C^\ast(u \,|\, a, X_i)S^\ast(u \,|\, a, X_i)} \\
& \quad - H(t,a,X_i) - \int_0^\infty\frac{H(u,a,X_i)S^\ast(t \,|\, a, X_i)\mathrm{d}M_C^\ast(u \,|\, a, X_i)}{S_C^\ast(u \,|\, a, X_i)S^{\ast 2}(u \,|\, a, X_i)} \\
& \quad - \frac{\Delta_i \, Y_i(t)}{S^\ast_C(t \,|\, a, X_i)} H_C(t,a,X_i) - \int_0^\infty\frac{H_C(u,a,X_i)S^\ast(t \,|\, a, X_i)\mathrm{d}M_C^\ast(u \,|\, a, X_i)}{S_C^{\ast 2}(u \,|\, a, X_i)S^\ast(u \,|\, a, X_i)} - \tilde{H}_C(t, a, X_i) \bigg) \bigg].
\end{align*}

Similarly, we define the following classes of functions:
\begin{align*}
\mathcal{F}_\eta^2 = \left\{ (d_\eta(x) - d_{\eta^\ast}(x)) \frac{(2a-1) I_{S,i}}{\pi_{a}^\ast(x) \pi_S^{\ast 2}(x)} J^\ast(t,a,x) \left(\frac{\partial \pi_S^\ast(x)}{\partial\lambda}\right)^T : \|\eta - \eta^\ast\|_2 < \delta \right\},
\end{align*}
\begin{align*}
\mathcal{F}_\eta^3 = \left\{ (d_\eta(x) - d_{\eta^\ast}(x)) \frac{-I_{S,i}}{\pi_{a}^{\ast 2}(x) \pi_S^\ast(x)} J^\ast(t,a,x) \left(\frac{\partial \pi_A^\ast(x)}{\partial\theta}\right)^T : \|\eta - \eta^\ast\|_2 < \delta \right\},
\end{align*}
\begin{align*}
\mathcal{F}_\eta^4 = \Bigg\{&(d_\eta(x) - d_{\eta^\ast}(x)) \Bigg[ I_{T} \, e(x) (-1)^{a+1} G(t, a, x) + \frac{(2a - 1) I_{S}}{\pi_{a}^\ast(x) \pi_S^\ast(x)} \\
& \quad \times \left(\int_0^\infty\frac{G(t,a,x)\mathrm{d}M_C^\ast(u \,|\, a, x)}{S_C^\ast(u \,|\, a, x)S^\ast(u \,|\, a, x)} - G(t,a,x) \right. \\
& \left. \quad - \int_0^\infty \frac{G(u,a,x) S^\ast(t \,|\, a, x) \mathrm{d}M_C^\ast(u \,|\, a, x)}{S_C^\ast(u \,|\, a, x) S^{\ast 2}(u \,|\, a, x)} \right) \Bigg] : \|\eta - \eta^\ast\|_2 < \delta \Bigg\},
\end{align*}
\begin{align*}
\mathcal{F}_\eta^5 = \Bigg\{&(d_\eta(x) - d_{\eta^\ast}(x)) \Bigg[ I_{T} \, e(x) (-1)^{a+1} G(t, a, x) + \frac{(2a - 1) I_{S}}{\pi_{a}^\ast(x) \pi_S^\ast(x)} \\
& \quad \times \left(\int_0^\infty\frac{G(t,a,x)\mathrm{d}M_C^\ast(u \,|\, a, x)}{S_C^\ast(u \,|\, a, x)S^\ast(u \,|\, a, x)} - G(t,a,x) \right. \\
& \left. \quad - \int_0^\infty \frac{G(u,a,x) S^\ast(t \,|\, a, x) \mathrm{d}M_C^\ast(u \,|\, a, x)}{S_C^\ast(u \,|\, a, x) S^{\ast 2}(u \,|\, a, x)} \right) \Bigg] : \|\eta - \eta^\ast\|_2 < \delta \Bigg\},
\end{align*}
\begin{align*}
\mathcal{F}_\eta^6 = \Bigg\{ & (d_\eta(x) - d_{\eta^\ast}(x)) \Bigg[ \frac{(2a - 1) I_{S}}{\pi_{a}^\ast(x) \pi_S^\ast(x)} \left\{-\frac{\Delta \, Y(t)}{S^\ast_C(t \,|\, a, x)} G_C(t,a,x) \right. \\ 
& \left. \quad - \int_0^\infty \frac{G_C(u,a,x) S^\ast(t \,|\, a, x) \mathrm{d}M_C^\ast(u \,|\, a, x)}{S_C^{\ast 2}(u \,|\, a, x) S^\ast(u \,|\, a, x)} + \tilde{G}_C(t,a,x) \right\} \Bigg] : \|\eta - \eta^\ast\|_2 < \delta \Bigg\},
\end{align*}
\begin{align*}
\mathcal{F}_\eta^7 = \Bigg\{ & (d_\eta(x) - d_{\eta^\ast}(x)) \Bigg[ \frac{(2a - 1) I_{S}}{\pi_{a}^\ast(x) \pi_S^\ast(x)} \left\{-\frac{\Delta \, Y(t)}{S^\ast_C(t \,|\, a, x)} G_C(t,a,x) \right. \\ 
& \left. \quad - \int_0^\infty \frac{G_C(u,a,x) S^\ast(t \,|\, a, x) \mathrm{d}M_C^\ast(u \,|\, a, x)}{S_C^{\ast 2}(u \,|\, a, x) S^\ast(u \,|\, a, x)} + \tilde{G}_C(t,a,x) \right\} \Bigg] : \|\eta - \eta^\ast\|_2 < \delta \Bigg\},
\end{align*}
\begin{align*}
\mathcal{F}_\eta^8 = \Bigg\{& (d_\eta(x) - d_{\eta^\ast}(x)) \Bigg[ \sum_{a=0,1} \bigg[ I_{T} \, e(x)^{a+1} H(t,a,x) + \frac{(2a - 1) I_{S}}{\pi_{a}^\ast(x) \pi_S^\ast(x)} \\
& \quad \times \bigg(\int_0^\infty\frac{H(t,a,x)\mathrm{d}M_C^\ast(u \,|\, a, x)}{S_C^\ast(u \,|\, a, x)S^\ast(u \,|\, a, x)} - H(t,a,x) \\
& \quad - \int_0^\infty\frac{H(u,a,x)S^\ast(t \,|\, a, x)\mathrm{d}M_C^\ast(u \,|\, a, x)}{S_C^\ast(u \,|\, a, x)S^{\ast 2}(u \,|\, a, x)} - \frac{\Delta \, Y(t)}{S^\ast_C(t \,|\, a, x)} H_C(t,a,x) \\
& \quad - \int_0^\infty\frac{H_C(u,a,x)S^\ast(t \,|\, a, x)\mathrm{d}M_C^\ast(u \,|\, a, x)}{S_C^{\ast 2}(u \,|\, a, x)S^\ast(u \,|\, a, x)} - \tilde{H}_C(t, a, x) \bigg) \bigg] \Bigg] : \|\eta - \eta^\ast\|_2 < \delta \Bigg\}.
\end{align*}

Let
\begin{equation*}
M_2 = \sup \left|\frac{(2a-1)}{\pi_{a}^\ast(x)} J^\ast(t,a,x) \left(\frac{\partial \pi_S^\ast(x)}{\partial\lambda}\right)^T\right|,
\end{equation*}
where $M_2 \in \mathbb{R}^+$ and the supremum is taken over all the coordinates; and $M_3, \ldots, M_8$ are defined accordingly for $\mathcal{F}_\eta^3, \ldots, \mathcal{F}_\eta^8$. By Assumption~\ref{asmp:cnpc}, \ref{asmp:psi} and Condition~\ref{cond:them1}, we have that $M_2, \ldots, M_8 < \infty$.

Using the same technique as in {\bf Condition 2.1}, we define the envelop of $\mathcal{F}_\eta^j$ as $F_j = M_j I\{-k_0 \delta \leq (1, x^T) \eta^\ast \leq k_0 \delta\}$ for $j = 2, \ldots, 8$, and obtain that
\begin{equation*}
\|F_j\|_{P,2} \leq \tilde{M}_j \delta^{1/2} < \infty, \quad j = 2, \ldots, 8,
\end{equation*}
where $\tilde{M}_2, \ldots, \tilde{M}_8$ are some finite constants, and that $\mathcal{F}_\eta^j$ is a VC class with bounded bracketing entropy $J_{[]}^\ast(1,\mathcal{F}_\eta^j) < \infty$, for $j = 2, \ldots, 8$. By Theorem 11.2 of \cite{kosorok2008introduction}, we obtain
\begin{equation*}
\mathbb{E}\left[\sup_{\|\eta - \eta^\ast\|_2<\delta} \left|\mathbb{G}_N \mathcal{F}_\eta^j\right| \right] \leq c_j J_{[]}^\ast(1,\mathcal{F}_\eta^j) \|F_j\|_{P,2}, \quad j = 2, \ldots, 8,
\end{equation*}
where $c_2, \ldots, c_8$ are some finite constants. That is, we have 
\begin{equation*}
\mathbb{E}\left[\sup_{\|\eta - \eta^\ast\|_2<\delta} \left| \mathbb{G}_N \mathcal{F}_\eta^8 \right| \right] \leq \tilde{c}_8 \delta^{1/2},
\end{equation*}
and furthermore by Theorem 2.14.5 of \cite{van1996weak}, we obtain
\begin{align*}
\left\{\mathbb{E} \left[\sup_{\|\eta - \eta^\ast\|_2<\delta} \|\mathbb{G}_n \mathcal{F}_\eta^j\|_2^2\right]\right\}^{1/2}
& \leq l_j \left\{\mathbb{E}\left[\sup_{\|\eta - \eta^\ast\|_2<\delta} |\mathbb{G}_n \mathcal{F}_\eta^j|\right] + \|F_j\|_{P,2}\right\} && \\
& \leq l_j \{c_j J_{[]}^\ast(1,\mathcal{F}_\eta^j) + 1\} \|F_j\|_{P,2} && \\
& \leq \tilde{c}_j \delta^{1/2}, && j = 2, \ldots, 7,
\end{align*}
where $l_2, \ldots, l_7$ and $\tilde{c}_2, \ldots, \tilde{c}_7$ are some finite constants.

By Equation~\eqref{eq:cond2.2}, we have that
\begin{align*}
(I) & = \mathbb{E}\left[N^{1/2} \sup_{\|\eta - \eta^\ast\|_2 < \delta} \left|\hat{S}(t;\eta) - S_N^\ast(t;\eta) - \{\hat{S}(t;\eta^\ast) - S_N^\ast(t;\eta^\ast)\}\right| \right] \\
& \leq \mathbb{E} \Bigg[\sup_{\|\eta - \eta^\ast\|_2 < \delta} \bigg\{ |\mathbb{G}_n \mathcal{F}_\eta^2(\hat{\lambda} - \lambda^\ast)| + |\mathbb{G}_n \mathcal{F}_\eta^3(\hat{\theta} - \theta^\ast)| + |\mathbb{G}_n \mathcal{F}_\eta^4(\hat{\beta}_0 - \beta_0^\ast)| + |\mathbb{G}_n \mathcal{F}_\eta^5(\hat{\beta}_1 - \beta_1^\ast)| \\
& \qquad + |\mathbb{G}_n \mathcal{F}_\eta^6(\hat{\alpha}_0 - \alpha_0^\ast)| + |\mathbb{G}_n \mathcal{F}_\eta^7(\hat{\alpha}_1 - \alpha_1^\ast)| + |\mathbb{G}_n \mathcal{F}_\eta^8| \bigg\} + o_p(1) \Bigg] \\
& \leq N^{-1/2} \,\Bigg\{ \mathbb{E} \left[\sup_{\|\eta - \eta^\ast\|_2 < \delta} |\mathbb{G}_n \mathcal{F}_\eta^2 \cdot N^{1/2} (\hat{\lambda} - \lambda^\ast)|\right] + \mathbb{E} \left[\sup_{\|\eta - \eta^\ast\|_2 < \delta}|\mathbb{G}_n \mathcal{F}_\eta^3 \cdot N^{1/2}(\hat{\theta} - \theta^\ast)|\right] \\
& \qquad + \mathbb{E} \left[\sup_{\|\eta - \eta^\ast\|_2 < \delta}|\mathbb{G}_n \mathcal{F}_\eta^4 \cdot N^{1/2} (\hat{\beta}_0 - \beta_0^\ast)|\right] + \mathbb{E} \left[\sup_{\|\eta - \eta^\ast\|_2 < \delta}|\mathbb{G}_n \mathcal{F}_\eta^5 \cdot N^{1/2}(\hat{\beta}_1 - \beta_1^\ast)|\right] \\
& \qquad + \mathbb{E} \left[\sup_{\|\eta - \eta^\ast\|_2 < \delta}|\mathbb{G}_n \mathcal{F}_\eta^6 \cdot N^{1/2}(\hat{\alpha}_0 - \alpha_0^\ast)|\right] + \mathbb{E} \left[\sup_{\|\eta - \eta^\ast\|_2 < \delta}|\mathbb{G}_n \mathcal{F}_\eta^7 \cdot N^{1/2}(\hat{\alpha}_1 - \alpha_1^\ast)|\right]  \Bigg\} \\
& \quad + \mathbb{E}\left[\sup_{\|\eta - \eta^\ast\|_2<\delta} \left| \mathbb{G}_N \mathcal{F}_\eta^8 \right| \right] + o_p(1),
\end{align*}
and then by the Cauchy-Schwarz inequality, we obtain
\begin{align*}
(I) \leq & \, N^{-1/2} \left\{\mathbb{E}[N \|\hat{\lambda} - \lambda^\ast\|_2^2]\right\}^{1/2} \left\{\mathbb{E} \left[\sup_{\|\eta - \eta^\ast\|_2 < \delta} \|\mathbb{G}_N \mathcal{F}_\eta^2\|_2^2\right]\right\}^{1/2} \\
& + N^{-1/2} \left\{\mathbb{E}[N \|\hat{\theta} - \theta^\ast\|_2^2]\right\}^{1/2} \left\{\mathbb{E} \left[\sup_{\|\eta - \eta^\ast\|_2<\delta} \|\mathbb{G}_N \mathcal{F}_\eta^3\|_2^2\right]\right\}^{1/2} \\
& + N^{-1/2} \left\{\mathbb{E}[N \|\hat{\beta}_0 - \beta_0^\ast\|_2^2]\right\}^{1/2} \left\{\mathbb{E} \left[\sup_{\|\eta - \eta^\ast\|_2<\delta} \|\mathbb{G}_N \mathcal{F}_\eta^4\|_2^2\right]\right\}^{1/2} \\
& + N^{-1/2} \left\{\mathbb{E}[N \|\hat{\beta}_1 - \beta_1^\ast\|_2^2]\right\}^{1/2} \left\{\mathbb{E} \left[\sup_{\|\eta - \eta^\ast\|_2<\delta} \|\mathbb{G}_N \mathcal{F}_\eta^5\|_2^2\right]\right\}^{1/2} \\
& + N^{-1/2} \left\{\mathbb{E}[N \|\hat{\alpha}_0 - \alpha_0^\ast\|_2^2]\right\}^{1/2} \left\{\mathbb{E} \left[\sup_{\|\eta - \eta^\ast\|_2<\delta} \|\mathbb{G}_N \mathcal{F}_\eta^6\|_2^2\right]\right\}^{1/2} \\
& + N^{-1/2} \left\{\mathbb{E}[N \|\hat{\alpha}_1 - \alpha_1^\ast\|_2^2]\right\}^{1/2} \left\{\mathbb{E} \left[\sup_{\|\eta - \eta^\ast\|_2<\delta} \|\mathbb{G}_N \mathcal{F}_\eta^7\|_2^2\right]\right\}^{1/2} \\
& + \mathbb{E}\left[\sup_{\|\eta - \eta^\ast\|_2<\delta} \left| \mathbb{G}_N \mathcal{F}_\eta^8 \right| \right].
\end{align*}

Let $M_\lambda = \left\{\mathbb{E}[N \|\hat{\lambda} - \lambda^\ast\|_2^2]\right\}^{1/2}$, and $M_\theta, M_{\beta_0}, M_{\beta_1}, M_{\alpha_0}, M_{\alpha_1}$ are defined accordingly. By Condition~\ref{cond:them1}, we have that $M_\lambda, M_\theta, M_{\beta_0}, M_{\beta_1}, M_{\alpha_0}, M_{\alpha_1} < \infty$, and therefore
\begin{equation*}
(I) \leq N^{-1/2} (M_\lambda \tilde{c}_2 + M_\theta \tilde{c}_3 + M_{\beta_0} \tilde{c}_4 + M_{\beta_1} \tilde{c}_5 + M_{\alpha_0} \tilde{c}_6 + M_{\alpha_1} \tilde{c}_7) \delta^{1/2} + \tilde{c}_8 \delta^{1/2}.
\end{equation*}

In summary, we obtain that, let $N \to\infty$, the centered process satisfies
\begin{equation}\label{eq:cond2}
\begin{split}
&\mathbb{E}\left[\sqrt{N}\sup_{\|\eta - \eta^\ast\|_2 < \delta} \left|\hat{S}(t;\eta) - S(t;\eta) - \{\hat{S}(t;\eta^\ast) - S(t;\eta^\ast)\}\right| \right] \\
& \leq (I) + (II) \leq (\tilde{c}_1 + \tilde{c}_8) \delta^{1/2}.
\end{split}
\end{equation}

Let $\phi_N(\delta) = \delta^{1/2}$ and $\alpha = \frac{3}{2} < 2$, thus we have $\frac{\phi_n(\delta)}{\delta^\alpha} = \delta^{-1}$ is decreasing, and $\alpha$ does not depend on $N$. That is, the second condition holds.

{\bf Condition 3} By the facts that $\hat{\eta} \overset{p}{\to} \eta^\ast$ as $N \to \infty$, and that $\hat{S}(t;\hat{\eta}) \geq \sup_{\eta} \hat{S}(t;\eta)$, we choose $r_N = N^{1/3}$ such that $r_N^2 \phi_N(r_N^{-1}) = N^{2/3} \phi_N(N^{-1/3}) = N^{1/2}$. The third condition holds.

In the end, the three conditions are satisfied with $r_N = N^{1/3}$; thus we conclude that $N^{1/3} \|\hat{\eta} - \eta^\ast\|_2 = O_p(1)$, which completes the proof of $(iii)$ of Theorem~\ref{thm:para}.

{\bf PART 3.} We characterize the asymptotic distribution of $\hat{S}(t;\hat{\eta})$. Since we have
\begin{equation*}
\sqrt{N} \{\hat{S}(t;\hat{\eta}) - S(t;\eta^\ast)\} = \sqrt{N} \{\hat{S}(t;\hat{\eta}) - \hat{S}(t;\eta^\ast)\} + \sqrt{N}\{\hat{S}(t;\eta^\ast) - S(t;\eta^\ast)\},
\end{equation*}
we study the two terms in two steps.

{\bf Step 3.1} To establish $\sqrt{N} \{\hat{S}(t;\hat{\eta}) - \hat{S}(t;\eta^\ast)\} = o_p(1)$, it suffices to show that $\sqrt{N} \{S(t;\hat{\eta}) - S(t;\eta^\ast)\} = o_p(1)$ and $\sqrt{N} (\hat{S}(t;\hat{\eta}) - \hat{S}(t;\eta^\ast) - \{S(t;\hat{\eta}) - S(t;\eta^\ast)\}) = o_p(1)$.

First, as $N^{1/3} \|\hat{\eta} - \eta^\ast\|_2 = O_p(1)$, we take the second-order Taylor expansion 
\begin{align*}
\sqrt{N} \{S(t;\hat{\eta}) - S(t;\eta^\ast)\} & = \sqrt{N} \left\{S'(\eta^\ast) \|\hat{\eta} - \eta^\ast\|_2 + \frac{1}{2} S''(\eta^\ast) \|\hat{\eta} - \eta^\ast\|_2^2 + o_p(\|\hat{\eta} - \eta^\ast\|_2^2) \right\} \\
& = \sqrt{N} \left\{\frac{1}{2} S''(\eta^\ast)\|\hat{\eta} - \eta^\ast\|_2^2 + o_p(\|\hat{\eta} - \eta^\ast\|_2^2) \right\} \\
& = \sqrt{N} \left\{\frac{1}{2} S''(\eta^\ast) O_p(N^{-2/3}) + o_p(N^{-2/3})\right\} = o_p(1).
\end{align*}

Next, we follow the result~\eqref{eq:cond2} obtained in \textbf{PART 2}. As $N^{1/3} \|\hat{\eta} - \eta^\ast\|_2 = O_p(1)$, there exists $\tilde{\delta} = c_9 N^{-1/3}$, where $c_9 < \infty$ is a finite constant, such that $\|\hat{\eta} - \eta^\ast\|_2 \leq \tilde{\delta}$. Therefore we have
\begin{align*}
& \sqrt{N} (\hat{S}(t;\hat{\eta}) - \hat{S}(t;\eta^\ast) - \{S(t;\hat{\eta}) - S(t;\eta^\ast)\}) \\
& \leq \mathbb{E}\left[\sqrt{N} \sup_{\|\hat{\eta} - \eta^\ast\|_2 < \tilde{\delta}} \left|\hat{S}(t;\hat{\eta}) - S(t;\hat{\eta}) - \{\hat{S}(t;\eta^\ast) - S(t;\eta^\ast)\}\right| \right] \\
& \leq (\tilde{c}_1 + \tilde{c}_8) \tilde{\delta}^{1/2} = (\tilde{c}_1 + \tilde{c}_8) \sqrt{c_9} N^{-1/6} = o_p(1),
\end{align*}
which yields the result.

{\bf Step 3.2} To derive the asymptotic distribution of $\sqrt{n}\{\hat{S}(t;\eta^\ast) - S(t;\eta^\ast)\}$, we follow the result~\eqref{eq:asym.dist} obtained in \textbf{PART 1} and have that
\begin{equation*}
\sqrt{N} \left\{\hat{S}(t;\eta^\ast) - S(t;\eta^\ast)\right\} \overset{D}{\to} \mathcal{N}(0, \sigma_{t,1}^2),
\end{equation*}
where $\sigma_{t,1}^2 = \mathbb{E}[(\xi_{1,i}(t;\eta^\ast) + \xi_{2,i}(t;\eta^\ast))^2]$. Therefore we obtain in the end
\begin{align*}
\sqrt{N} \{\hat{S}(t;\hat{\eta}) - S(t;\eta^\ast)\} & = \sqrt{N} \{\hat{S}(t;\hat{\eta}) - \hat{S}(t;\eta^\ast)\} + \sqrt{N}\{\hat{S}(t;\eta^\ast) - S(t;\eta^\ast)\} \\
& = o_p(1) + \sqrt{N}\{\hat{S}(t;\eta^\ast) - S(t;\eta^\ast)\} \\
& \overset{D}{\to} \mathcal{N}(0, \sigma_{t,1}^2),
\end{align*}
which completes the proof.

For Corollary~\ref{cor:para.rmst} where we consider RMST, the proof can follow the same steps as before, and is thus omitted here.

\section{Proof of Theorem~\ref{thm:np} and Corollary~\ref{cor:np.rmst}}

Our proof has three main parts below.

{\bf PART 1.} Recall that the cross-fitting technique, at a high level as exemplified in Lemma~\ref{lem:cf}, uses sample splitting to avoid bias due to over-fitting. For simplicity, consider that the datasets $\mathcal{O}_s$ and $\mathcal{O}_t$ are randomly split into $2$ folds with equal size respectively such that $\mathcal{O}_s = \mathcal{O}_{s,1} \cup \mathcal{O}_{s,2}, \mathcal{O}_t = \mathcal{O}_{t,1} \cup \mathcal{O}_{t,2}$. The extension to $K$-folds as described in Algorithm 1 is straightforward. Here the subscript $CF$ is omitted to simplify the notation. Define $\mathcal{I}_1 = \mathcal{O}_{s,1} \cup \mathcal{O}_{t,1}, \mathcal{I}_2 = \mathcal{O}_{s,2} \cup \mathcal{O}_{t,2}$, and $N_1 = |\mathcal{I}_1|, N_2 = |\mathcal{I}_2|$. The cross-fitted estimator for the value function under the ITR $d_\eta$ is
\begin{equation*}
\hat{V}(\eta) = \frac{N_1}{N} \hat{V}^{\mathcal{I}_1}(\eta) + \frac{N_2}{N} \hat{V}^{\mathcal{I}_2}(\eta),
\end{equation*}
where
\begin{align*}
\hat{V}^{\mathcal{I}_1}(\eta) = & \frac{1}{N_1} \sum_{\mathcal{I}_1} \Bigg\{ I_{T,i} \, e(X_i) \hat{\mu}(d_\eta(X_i), X_i) + \frac{I_{S,i}}{\hat{\pi}_S(X_i)} \frac{I\{A_i = d_\eta(X_i)\}}{\hat{\pi}_{d}(X_i)} \\
& \quad \times \left(\frac{\Delta_i \, y(U_i)}{\hat{S}_C(U_i \,|\, A_i, X_i)} - \hat{\mu}(A_i, X_i) + \int_0^\infty \frac{\mathrm{d}\hat{M}_C(u \,|\, A_i, X_i)}{\hat{S}_C(u \,|\, A_i, X_i)} \hat{Q}(u, A_i, X_i) \right) \Bigg\},
\end{align*}
and the nuisance parameters are estimated from $\mathcal{I}_2$. $\hat{V}^{\mathcal{I}_2}(\eta)$ is defined accordingly.

In this step, we show that 
\begin{equation*}
\hat{V}(\eta) - V_N(\eta) = o_p(N^{-1/2}),
\end{equation*}
and essentially it suffices to prove that
\begin{equation*}
\hat{V}^{\mathcal{I}_1}(\eta) - V_N^{\mathcal{I}_1}(\eta) = o_p(N^{-1/2}),
\end{equation*}
where
\begin{align*}
V_N(\eta) = & \frac{1}{N} \sum_{i=1}^N \Bigg\{ I_{T,i} \, e(X_i) \mu(d_\eta(X_i), X_i) + \frac{I_{S,i}}{\pi_S(X_i)} \frac{I\{A_i = d_\eta(X_i)\}}{\pi_{d}(X_i)} \\
& \quad \times \left(\frac{\Delta_i \, y(U_i)}{S_C(U_i \,|\, A_i, X_i)} - \mu(A_i, X_i) + \int_0^\infty \frac{\mathrm{d}M_C(u \,|\, A_i, X_i)}{S_C(u \,|\, A_i, X_i)} Q(u, A_i, X_i) \right) \Bigg\},
\end{align*}
and $V_N^{\mathcal{I}_1}(\eta)$ is defined accordingly.

First, we have the following decomposition 
\begin{equation}\label{eq.cf.decomp}
\begin{split}
& \hat{V}^{\mathcal{I}_1}(\eta) - V_N^{\mathcal{I}_1}(\eta) \\
& = \frac{1}{N_1} \sum_{\mathcal{I}_1} \Bigg\{ I_{T,i} \, e(X_i) (\hat{\mu}(d_\eta(X_i), X_i) - \mu(d_\eta(X_i), X_i)) \\
& \quad + I_{S,i} \left(\frac{1}{\pi_S(X_i)} - \frac{1}{\hat{\pi}_S(X_i)}\right) \frac{I\{A_i = d_\eta(X_i)\}}{\pi_{d}(X_i)} K(A_i, X_i) \\
& \quad + \frac{I_{S,i} I\{A_i = d_\eta(X_i)\}}{\pi_S(X_i)} \left(\frac{1}{\pi_{d}(X_i)} - \frac{1}{\hat{\pi}_{d}(X_i)}\right) K(A_i, X_i) \\
& \quad + \frac{I_{S,i}}{\pi_S(X_i)} \frac{I\{A_i = d_\eta(X_i)\}}{\pi_{d}(X_i)} (\hat{K}(A_i, X_i) - K(A_i, X_i)) \\
& \quad + I_{S,i} I\{A_i = d_\eta(X_i)\} \left(\frac{1}{\pi_S(X_i)} - \frac{1}{\hat{\pi}_S(X_i)}\right) \left(\frac{1}{\pi_{d}(X_i)} - \frac{1}{\hat{\pi}_{d}(X_i)}\right) K(A_i, X_i) \\
& \quad + \frac{I_{S,i} I\{A_i = d_\eta(X_i)\}}{\pi_{d}(X_i)}\left(\frac{1}{\pi_S(X_i)} - \frac{1}{\hat{\pi}_S(X_i)}\right) (\hat{K}(A_i, X_i) - K(A_i, X_i)) \\
& \quad + \frac{I_{S,i} I\{A_i = d_\eta(X_i)\}}{\pi_S(X_i)} \left(\frac{1}{\pi_{d}(X_i)} - \frac{1}{\hat{\pi}_{d}(X_i)}\right) (\hat{K}(A_i, X_i) - K(A_i, X_i)) \\
& \quad + I_{S,i} I\{A_i = d_\eta(X_i)\} \left(\frac{1}{\pi_S(X_i)} - \frac{1}{\hat{\pi}_S(X_i)}\right) \left(\frac{1}{\pi_{d}(X_i)} - \frac{1}{\hat{\pi}_{d}(X_i)}\right) (\hat{K}(A_i, X_i) - K(A_i, X_i)) \Bigg\},
\end{split}
\end{equation}
where
\begin{equation*}
\hat{K}(A_i, X_i) = \frac{\Delta_i \, y(U_i)}{\hat{S}_C(U_i \,|\, A_i, X_i)} - \hat{\mu}(A_i, X_i) + \int_0^\infty \frac{\mathrm{d}\hat{M}_C(u \,|\, A_i, X_i)}{\hat{S}_C(u \,|\, A_i, X_i)} \hat{Q}(u, A_i, X_i), 
\end{equation*}
\begin{equation*}
K(A_i, X_i) = \frac{\Delta_i \, y(U_i)}{S_C(U_i \,|\, A_i, X_i)} - \mu(A_i, X_i) + \int_0^\infty \frac{\mathrm{d}M_C(u \,|\, A_i, X_i)}{S_C(u \,|\, A_i, X_i)} Q(u, A_i, X_i).
\end{equation*}

In summary, the decomposition~\eqref{eq.cf.decomp} consists of two types of terms: four mean-zero terms and four product terms. For the mean-zero terms, we utilize the method introduced in Section~\ref{subsec:prel.cf}; since 
\begin{equation*}
\mathbb{E}[I_{T,i} \, e(X_i) (\hat{\mu}(d_\eta(X_i), X_i) - \mu(d_\eta(X_i), X_i))] = 0,
\end{equation*}
by applying Lemma~\ref{lem:cf}, we obtain
\begin{equation*}
\frac{1}{N_1} \sum_{\mathcal{I}_1} I_{T,i} \, e(X_i) (\hat{\mu}(d_\eta(X_i), X_i) - \mu(d_\eta(X_i), X_i))= o_p(N^{-1/2}).
\end{equation*}

Similarly we have
\begin{equation*}
\mathbb{E} \left[I_{S,i} \left(\frac{1}{\pi_S(X_i)} - \frac{1}{\hat{\pi}_S(X_i)}\right) \frac{I\{A_i = d_\eta(X_i)\}}{\pi_{d}(X_i)} K(A_i, X_i)\right] = 0,
\end{equation*}
so we obtain
\begin{align*}
& \mathbb{E} \left[\left(\frac{1}{N_1} \sum_{\mathcal{I}_1} I_{S,i} \left(\frac{1}{\pi_S(X_i)} - \frac{1}{\hat{\pi}_S(X_i)}\right) \frac{I\{A_i = d_\eta(X_i)\}}{\pi_{d}(X_i)} K(A_i, X_i) \right)^2\right] \\
& = \mathbb{E} \left[ \mathbb{E} \left[ \left(\frac{1}{N_1} \sum_{\mathcal{I}_1} I_{S,i} \left(\frac{1}{\pi_S(X_i)} - \frac{1}{\hat{\pi}_S(X_i)}\right) \frac{I\{A_i = d_\eta(X_i)\}}{\pi_{d}(X_i)} K(A_i, X_i) \right)^2 \Bigg| \mathcal{I}_2 \right] \right] \\
& = \mathbb{E} \left[ var \left[ \frac{1}{N_1} \sum_{\mathcal{I}_1} I_{S,i} \left(\frac{1}{\pi_S(X_i)} - \frac{1}{\hat{\pi}_S(X_i)}\right) \frac{I\{A_i = d_\eta(X_i)\}}{\pi_{d}(X_i)} K(A_i, X_i) \Bigg| \mathcal{I}_2 \right] \right] \\
& = \frac{1}{N_1} \mathbb{E} \left[ var \left[ I_{S,i} \left(\frac{1}{\pi_S(X_i)} - \frac{1}{\hat{\pi}_S(X_i)}\right) \frac{I\{A_i = d_\eta(X_i)\}}{\pi_{d}(X_i)} K(A_i, X_i) \bigg| I_2 \right] \right] \\
& \leq \frac{O_p(1)}{N_1} = o_p(\frac{1}{N}).
\end{align*}

We also have
\begin{equation*}
\mathbb{E} \left[\frac{I_{S,i} I\{A_i = d_\eta(X_i)\}}{\pi_S(X_i)} \left(\frac{1}{\pi_{d}(X_i)} - \frac{1}{\hat{\pi}_{d}(X_i)}\right) K(A_i, X_i)\right] = 0,
\end{equation*}
\begin{equation*}
\mathbb{E} \left[\frac{I_{S,i}}{\pi_S(X_i)} \frac{I\{A_i = d_\eta(X_i)\}}{\pi_{d}(X_i)} (\hat{K}(A_i, X_i) - K(A_i, X_i))\right] = 0,
\end{equation*}
and using the same technique, we conclude that these two mean-zero terms are $o_p(N^{-1/2})$ as well.

The product terms can be handled simply by the Cauchy-Schwarz inequality and the rate of convergence conditions in Assumption~\ref{asmp:nonp}. Additionally we have the decomposition as follows
\begin{align*}
& \frac{1}{N_1} \sum_{\mathcal{I}_1} (\hat{K}(A_i, X_i) - K(A_i, X_i)) \\
& = \frac{1}{N_1} \sum_{\mathcal{I}_1} \Bigg\{-(\hat{\mu}(A_i, X_i) - \mu(A_i, X_i)) + \frac{1 - \Delta_i}{S_C(U_i \,|\, A_i, X_i)}(\hat{Q}(U_i \,|\, A_i, X_i) - Q(U_i \,|\, A_i, X_i)) \\
& \quad - \int_0^{U_i} \frac{\lambda_C(u \,|\, A_i, X_i)}{S_C(u \,|\, A_i, X_i)} (\hat{Q}(U_i \,|\, A_i, X_i) - Q(U_i \,|\, A_i, X_i)) \mathrm{d}u \\
& \quad + (1 - \Delta_i)\left(\frac{1}{\hat{S}_C(U_i \,|\, A_i, X_i)} - \frac{1}{S_C(U_i \,|\, A_i, X_i)}\right) Q(U_i \,|\, A_i, X_i) \\
& \quad + \left(\frac{1}{\hat{S}_C(U_i \,|\, A_i, X_i)} - \frac{1}{S_C(U_i \,|\, A_i, X_i)}\right) \Delta_i \, y(U_i) \\
& \quad - \int_0^{U_i} \left(\frac{\hat{\lambda}_C(u \,|\, A_i, X_i)}{\hat{S}_C(u \,|\, A_i, X_i)} - \frac{\lambda_C(u \,|\, A_i, X_i)}{S_C(u \,|\, A_i, X_i)} \right) Q(U_i \,|\, A_i, X_i) \mathrm{d}u \\
& \quad + (1 - \Delta_i)\left(\frac{1}{\hat{S}_C(U_i \,|\, A_i, X_i)} - \frac{1}{S_C(U_i \,|\, A_i, X_i)}\right) (\hat{Q}(U_i \,|\, A_i, X_i) - Q(U_i \,|\, A_i, X_i)) \\
& \quad - \int_0^{U_i} \left(\frac{\hat{\lambda}_C(u \,|\, A_i, X_i)}{\hat{S}_C(u \,|\, A_i, X_i)} - \frac{\lambda_C(u \,|\, A_i, X_i)}{S_C(u \,|\, A_i, X_i)} \right) (\hat{Q}(U_i \,|\, A_i, X_i) - Q(U_i \,|\, A_i, X_i)) \mathrm{d}u,
\end{align*}
and similarly we have three mean-zero terms which are $o_p(N^{-1/2})$ by the same technique in Section~\ref{subsec:prel.cf} and the facts that  
\begin{equation*}
\mathbb{E}[\hat{\mu}(A_i, X_i) - \mu(A_i, X_i)] = 0,
\end{equation*}
\begin{align*}
\mathbb{E} & \left[ \frac{1 - \Delta_i}{S_C(U_i \,|\, A_i, X_i)} (\hat{Q}(U_i \,|\, A_i, X_i) - Q(U_i \,|\, A_i, X_i)) \right. \\
& \left. \quad - \int_0^{U_i} \frac{\lambda_C(u \,|\, A_i, X_i)}{S_C(u \,|\, A_i, X_i)} (\hat{Q}(u \,|\, A_i, X_i) - Q(u \,|\, A_i, X_i)) \mathrm{d}u \right] = 0,
\end{align*}
\begin{align*}
\mathbb{E} & \bigg[(1 - \Delta_i)\left(\frac{1}{\hat{S}_C(U_i \,|\, A_i, X_i)} - \frac{1}{S_C(U_i \,|\, A_i, X_i)}\right) Q(U_i \,|\, A_i, X_i) \\
& \quad + \left(\frac{1}{\hat{S}_C(U_i \,|\, A_i, X_i)} - \frac{1}{S_C(U_i \,|\, A_i, X_i)}\right) \Delta_i \, y(U_i) \\
& \quad - \int_0^{U_i} \left(\frac{\hat{\lambda}_C(u \,|\, A_i, X_i)}{\hat{S}_C(u \,|\, A_i, X_i)} - \frac{\lambda_C(u \,|\, A_i, X_i)}{S_C(u \,|\, A_i, X_i)} \right) Q(U_i \,|\, A_i, X_i) \mathrm{d}u \bigg] = 0,
\end{align*}
and we can bound the two product terms as well
\begin{align*}
& \frac{1}{N_1} \sum_{\mathcal{I}_1} \left[ (1 - \Delta_i)\left(\frac{1}{\hat{S}_C(U_i \,|\, A_i, X_i)} - \frac{1}{S_C(U_i \,|\, A_i, X_i)}\right) (\hat{Q}(U_i \,|\, A_i, X_i) - Q(U_i \,|\, A_i, X_i)) \right.\\
& \left. \quad - \int_0^{U_i} \left(\frac{\hat{\lambda}_C(u \,|\, A_i, X_i)}{\hat{S}_C(u \,|\, A_i, X_i)} - \frac{\lambda_C(u \,|\, A_i, X_i)}{S_C(u \,|\, A_i, X_i)} \right) (\hat{Q}(U_i \,|\, A_i, X_i) - Q(U_i \,|\, A_i, X_i)) \mathrm{d}u \right] \\
& \leq \left[\frac{1}{N_1} \sum_{\mathcal{I}_1} (1 - \Delta_i)\left(\frac{1}{\hat{S}_C(U_i \,|\, A_i, X_i)} - \frac{1}{S_C(U_i \,|\, A_i, X_i)}\right)^2 \right]^{1/2} \\
& \quad \times \left[\frac{1}{N_1} \sum_{\mathcal{I}_1}(1 - \Delta_i)(\hat{Q}(U_i \,|\, A_i, X_i) - Q(U_i \,|\, A_i, X_i))^2\right]^{1/2} \\
& \quad - \int_0^{U_i} \left[\frac{1}{N_1} \sum_{\mathcal{I}_1} \left(\frac{\hat{\lambda}_C(u \,|\, A_i, X_i)}{\hat{S}_C(u \,|\, A_i, X_i)} - \frac{\lambda_C(u \,|\, A_i, X_i)}{S_C(u \,|\, A_i, X_i)} \right)^2\right]^{1/2} \\
& \quad \times \left[\frac{1}{N_1} \sum_{\mathcal{I}_1} (\hat{Q}(U_i \,|\, A_i, X_i) - Q(U_i \,|\, A_i, X_i))^2\right]^{1/2} \mathrm{d}u \\
& = o_p(N^{-1/2}),
\end{align*}
which proves that $\frac{1}{N_1} \sum_{\mathcal{I}_1} (\hat{K}(A_i, X_i) - K(A_i, X_i)) = o_p(N^{-1/2})$.

Therefore, we conclude that the four product terms in \eqref{eq.cf.decomp} are $o_p(N^{-1/2})$ as well, which completes the proof of $(i)$ in Theorem~\ref{thm:np}.

{\bf PART 2:} We show that $N^{1/3} \|\hat{\eta} - \eta^\ast\|_2 = O_p(1)$.

By Assumption~\ref{asmp:regu} $(i)$, $V(\eta)$ is twice continuously differentiable at a neighborhood of $\eta^\ast$; in \textbf{PART 1}, we show that $\hat{V}(\eta) = V(\eta) + o_p(1), \forall \eta$; since $\hat{\eta}$ maximizes $\hat{V}(\eta)$, we have that $\hat{V}(\hat{\eta}) \geq \sup_{\eta} \hat{V}(\eta)$, thus by the Argmax theorem, we have $\hat{\eta} \overset{p}{\to} \eta^\ast$ as $N \to \infty$.

In order to establish the $N^{-1/3}$ rate of convergence of $\hat{\eta}$, we apply Theorem 14.4 (Rate of convergence) of \citet{kosorok2008introduction}, and need to find the suitable rate that satisfies three conditions below.

{\bf Condition 1} For every $\eta$ in a neighborhood of $\eta^\ast$ such that $\|\eta - \eta^\ast\|_2 < \delta$, by Assumption~\ref{asmp:regu} $(i)$, we apply the second-order Taylor expansion,
\begin{align*}
V(\eta) - V(\eta^\ast) & = V'(\eta^\ast) \|\eta - \eta^\ast\|_2 + \frac{1}{2} V''(\eta^\ast) \|\eta - \eta^\ast\|_2^2 + o(\|\eta - \eta^\ast\|_2^2) \\
& = \frac{1}{2} V''(\eta^\ast)\|\eta - \eta^\ast\|_2^2 + o(\|\eta - \eta^\ast\|_2^2),
\end{align*}
and as $V''(\eta^\ast) < 0$, there exists $c_{10} = -\frac{1}{2} V''(\eta^\ast) > 0$ such that $V(\eta) - V(\eta^\ast) \leq -c_{10} \|\eta - \eta^\ast\|_2^2$. 

{\bf Condition 2} For all $N$ large enough and sufficiently small $\delta$, we consider the centered process $\hat{V} - V$, and have that 
\begin{align*}
& \mathbb{E} \left[\sqrt{N}\sup_{\|\eta - \eta^\ast\|_2 < \delta} \left|\hat{V}(\eta) - V(\eta) - \{\hat{V}(\eta^\ast) - V(\eta^\ast)\}\right| \right] \\
& = \mathbb{E}\left[\sqrt{N}\sup_{\|\eta - \eta^\ast\|_2 < \delta} \left|\hat{V}(\eta) - V_n(\eta) + V_n(\eta) - V(\eta) - \{\hat{V}(\eta^\ast) - V_n(\eta^\ast) + V_n(\eta^\ast) - V(\eta^\ast)\}\right| \right] \\
& \leq \mathbb{E}\left[\sqrt{N}\sup_{\|\eta - \eta^\ast\|_2 < \delta} \left|\hat{V}(\eta) - V_n(\eta) - \{\hat{V}(\eta^\ast) - V_n(\eta^\ast)\}\right| \right] \tag{$I$} \\
& \quad + \mathbb{E}\left[\sqrt{N}\sup_{\|\eta - \eta^\ast\|_2 < \delta} \left|V_n(\eta) - V(\eta) - \{V_n(\eta^\ast) - V(\eta^\ast)\}\right| \right] \tag{$II$}
\end{align*}

It follows from the result in \textbf{PART 1} that $(I) = o_p(1)$. To bound $(II)$, we have
\begin{align*}
& V_n(\eta) - V_n(\eta^\ast) \\
& = \frac{1}{N} \sum_{i=1}^{N} (d_{\eta}(X_i) - d_{\eta^\ast}(X_i)) \times \left( I_{T,i} \, e(X_i) (\mu(1, X_i) - \mu(0, X_i)) + \frac{(2A_i - 1) I_{S,i}}{\pi_{A_i}(X_i)\pi_S(X_i)} K(A_i, X_i) \right).
\end{align*}

Define a class of functions
\begin{equation*}
\mathcal{F}_\eta^9 = \bigg\{ (d_{\eta}(x) - d_{\eta^\ast}(x)) \times \bigg(I_{T} \, e(x) (\mu(1, x) - \mu(0, x)) + \frac{(2a - 1) I_{S}}{\pi_{a}(x)\pi_S(x)} K(a, x) \bigg) :\|\eta - \eta^\ast\|_2 < \delta \bigg\},
\end{equation*}
and let $M_9 = \sup\left|I_{T} \, e(x) (\mu(1, x) - \mu(0, x)) + \frac{(2a - 1) I_{S}}{\pi_{a}(x)\pi_S(x)} K(a, x)\right|$. By Assumption~\ref{asmp:cnpc}, \ref{asmp:psi} and Condition~\ref{cond:them1}, we have that $M_9 < \infty$. Using the same technique as in Section~\ref{subsec:asymp} {\bf Condition 2.1}, we define the envelop of $\mathcal{F}_\eta^9$ as $F_9 = M_9 I\{-k_0 \delta \leq (1, x^T) \eta^\ast \leq k_0 \delta\}$, and obtain that $\|F_9\|_{P,2} \leq \tilde{M}_9 \delta^{1/2} < \infty$, where $\tilde{M}_9$ is a finite constant, and that $\mathcal{F}_\eta^9$ is a VC class with bounded entropy $J_{[]}^\ast(1,\mathcal{F}_\eta^9) < \infty$. By Theorem 11.2 of \cite{kosorok2008introduction}, we obtain
\begin{equation*}
\mathbb{E}\left[\sup_{\|\eta - \eta^\ast\|_2<\delta} \left|\mathbb{G}_N \mathcal{F}_\eta^9\right| \right] \leq \tilde{c}_9 \delta^{1/2},
\end{equation*}
where $\tilde{c}_9$ is a finite constant. Therefore, we obtain
\begin{equation*}
\begin{split}
(II) & = \mathbb{E}\left[\sqrt{N}\sup_{\|\eta - \eta^\ast\|_2 < \delta} \left|V_N(\eta) - V(\eta) - \{V_N(\eta^\ast) - V(\eta^\ast)\} \right| \right] \\
&= \mathbb{E}\left[\sup_{\|\eta - \eta^\ast\|_2 < \delta} |\mathbb{G}_n\mathcal{F}_\eta^9|\right] \leq \tilde{c}_9 \delta^{1/2}.
\end{split}
\end{equation*}

In summary, we obtain that the centered process satisfies
\begin{equation}\label{eq:cond2.thm2}
\begin{split}
&\mathbb{E}\left[\sqrt{N}\sup_{\|\eta - \eta^\ast\|_2 < \delta} \left|\hat{S}(t;\eta) - S(t;\eta) - \{\hat{S}(t;\eta^\ast) - S(t;\eta^\ast)\}\right| \right] \\
& \leq (I) + (II) \leq \tilde{c}_9 \delta^{1/2}.
\end{split}
\end{equation}

Let $\phi_N(\delta) = \delta^{1/2}$ and $\alpha = \frac{3}{2} < 2$, thus we have $\frac{\phi_n(\delta)}{\delta^\alpha} = \delta^{-1}$ is decreasing, and $\alpha$ does not depend on $N$. That is, the second condition holds.

{\bf Condition 3} By the facts that $\hat{\eta} \overset{p}{\to} \eta^\ast$ as $N \to \infty$, and that $\hat{S}(t;\hat{\eta}) \geq \sup_{\eta} \hat{S}(t;\eta)$, we choose $r_N = N^{1/3}$ such that $r_N^2 \phi_N(r_N^{-1}) = N^{2/3} \phi_N(N^{-1/3}) = N^{1/2}$. The third condition holds.

In the end, the three conditions are satisfied with $r_N = N^{1/3}$; thus we conclude that $N^{1/3} \|\hat{\eta} - \eta^\ast\|_2 = O_p(1)$, which completes the proof of $(ii)$ in Theorem~\ref{thm:np}.

{\bf PART 3:} We characterize the asymptotic distribution of $\hat{V}(\hat{\eta})$. Since we have
\begin{equation*}
\sqrt{N} \{\hat{V}(\hat{\eta}) - V(\eta^\ast)\} = \sqrt{N} \{\hat{V}(\hat{\eta}) - \hat{V}(\eta^\ast)\} + \sqrt{N}\{\hat{V}(\eta^\ast) - V(t;\eta^\ast)\},
\end{equation*}
we study the two terms in two steps.

{\bf Step 3.1} To establish $\sqrt{N} \{\hat{V}(\hat{\eta}) - \hat{V}(\eta^\ast)\} = o_p(1)$, it suffices to show that $\sqrt{N} \{V(\hat{\eta}) - V(\eta^\ast)\} = o_p(1)$ and $\sqrt{N} (\hat{V}(\hat{\eta}) - \hat{V}(\eta^\ast) - \{V(\hat{\eta}) - V(\eta^\ast)\}) = o_p(1)$.

First, as $N^{1/3} \|\hat{\eta} - \eta^\ast\|_2 = O_p(1)$, we take the second-order Taylor expansion 
\begin{align*}
\sqrt{N} \{V(\hat{\eta}) - V(\eta^\ast)\} & = \sqrt{N} \left\{V'(\eta^\ast) \|\hat{\eta} - \eta^\ast\|_2 + \frac{1}{2} V''(\eta^\ast) \|\hat{\eta} - \eta^\ast\|_2^2 + o_p(\|\hat{\eta} - \eta^\ast\|_2^2) \right\} \\
& = \sqrt{N} \left\{\frac{1}{2} V''(\eta^\ast)\|\hat{\eta} - \eta^\ast\|_2^2 + o_p(\|\hat{\eta} - \eta^\ast\|_2^2) \right\} \\
& = \sqrt{N} \left\{\frac{1}{2} V''(\eta^\ast) O_p(N^{-2/3}) + o_p(N^{-2/3})\right\} = o_p(1).
\end{align*}

Next, we follow the result~\eqref{eq:cond2.thm2} obtained in \textbf{PART 2}. As $N^{1/3} \|\hat{\eta} - \eta^\ast\|_2 = O_p(1)$, there exists $\tilde{\delta}_2 = c_{11} N^{-1/3}$, where $c_{11} < \infty$ is a finite constant, such that $\|\hat{\eta} - \eta^\ast\|_2 \leq \tilde{\delta}_2$. Therefore we have
\begin{align*}
& \sqrt{N} (\hat{V}(\hat{\eta}) - \hat{V}(\eta^\ast) - \{V(\hat{\eta}) - V(\eta^\ast)\}) \\
& \leq \mathbb{E}\left[\sqrt{N} \sup_{\|\hat{\eta} - \eta^\ast\|_2 < \tilde{\delta}_2} \left|\hat{V}(\hat{\eta}) - V(\hat{\eta}) - \{\hat{V}(\eta^\ast) - V(\eta^\ast)\}\right| \right] \\
& \leq \tilde{c}_9 \tilde{\delta}^{1/2} = \tilde{c}_9 \sqrt{c_{11}} N^{-1/6} = o_p(1),
\end{align*}
which yields the result.

{\bf Step 3.2} To derive the asymptotic distribution of $\sqrt{N}\{\hat{V}(\eta^\ast) - V(\eta^\ast)\}$, we follow the result obtained in \textbf{PART 1} that $\hat{V}(\eta^\ast) = V_N(\eta^\ast) + o_p(N^{-1/2})$, and thus
\begin{equation*}
\sqrt{N} \left\{\hat{V}(\eta^\ast) - V(\eta^\ast)\right\} \overset{D}{\to} \mathcal{N}(0, \sigma_{2}^2),
\end{equation*}
where $\sigma_{2}^2 = \mathbb{E}[\phi_{d_{\eta^\ast}}^2]$ is the semiparametric efficiency bound. 

Therefore we obtain in the end
\begin{align*}
\sqrt{N} \{\hat{V}(\hat{\eta}) - v(\eta^\ast)\} & = \sqrt{N} \{\hat{V}(\hat{\eta}) - \hat{V}(\eta^\ast)\} + \sqrt{N}\{\hat{V}(\eta^\ast) - V(\eta^\ast)\} \\
& = o_p(1) + \sqrt{N}\{\hat{V}(\eta^\ast) - V(\eta^\ast)\} \\
& \overset{D}{\to} \mathcal{N}(0, \sigma_{2}^2),
\end{align*}
which completes the proof of Theorem~\ref{thm:np} and Corollary~\ref{cor:np.rmst}.

\section{Proof of Theorem~\ref{thm:st} and Theorem~\ref{thm:tl}}

When the source and target populations have the same distributions, both $\hat{V}_{DR} (\eta)$ and $\hat{V}_{CF} (\eta)$ converge to $V(\eta)$. The asymptotic variance of $\hat{V}_{DR} (\eta)$ is
\begin{align*}
\sigma_{DR}^2 & = \mathbb{E} \left[\frac{I_S}{\mathbb{P}(I_S = 1)} \left(\mu(d(X), X) + \frac{I\{A = d(X)\}}{\pi_d (X)} K(A, X) - V(\eta)\right)^2 \right] \\
& = \mathbb{E} \left[\frac{I_S}{\mathbb{P}(I_S = 1)} \left(\mu^2(d(X), X) + \frac{I\{A = d(X)\}}{\pi_d^2 (X)} K^2(A, X) - V^2(\eta) \right.\right. \\
& \qquad\quad \left.\left. + \frac{2 I\{A = d(X)\}}{\pi_d (X)} K(A, X)\mu(d(X), X) - 2 \mu(d(X), X)V(\eta) \right.\right. \\
&\qquad\quad \left.\left. - \frac{2 I\{A = d(X)\}}{\pi_d (X)} K(A, X)V(\eta) \right) \right],
\end{align*}
while the asymptotic variance of $\hat{V}_{CF} (\eta)$ is
\begin{align*}
\sigma_{CF}^2 & = \mathbb{E} \left[\left(I_T \, e(X) \mu(d(X), X) + \frac{I_S \, I\{A = d(X)\}}{\pi_S (X) \pi_d (X)} K(A, X) - V(\eta)\right)^2\right] \\
& = \mathbb{E} \left[\left(I_T \, e^2(X)\mu^2(d(X), X) + \frac{I_S \, I\{A = d(X)\}}{\pi_S^2 (X) \pi_d^2 (X)} K^2(A, X) - V^2(\eta) \right.\right. \\
& \qquad\quad \left.\left. - 2 I_T \, e^2(X) \mu(d(X), X)V(\eta) - \frac{2 I_S \, I\{A = d(X)\}}{\pi_S (X) \pi_d (X)} K(A, X)V(\eta) \right) \right],
\end{align*}
where
\begin{equation*}
K(A, X) = \frac{\Delta \, y(U)}{S_C(U \,|\, A, X)} - \mu(A, X) + \int_0^\infty \frac{\mathrm{d}M_C(u \,|\, A, X)}{S_C(u \,|\, A, X)} Q(u, A, X).
\end{equation*}

Since we have that
\begin{equation*}
\mathbb{E} \left[\frac{I_S}{\mathbb{P}(I_S = 1)} \frac{2 I\{A = d(X)\}}{\pi_d (X)} K(A, X)\mu(d(X), X) \right] = 0,   
\end{equation*}
and for
\begin{equation*}
B \in \left\{\mu^2(d(X), X), \frac{I\{A = d(X)\}}{\pi_d^2 (X)} K^2(A, X), \mu(d(X), X)V(\eta), \frac{I\{A = d(X)\}}{\pi_d^2 (X)} K(A, X)V(\eta) \right\},
\end{equation*}
we also have that 
\begin{equation*}
\mathbb{E} \left[\frac{I_S}{\mathbb{P}(I_S = 1)} B \right] = \mathbb{E} [I_T \, e(X) B] = \mathbb{E} \left[\frac{I_S}{\pi_S (X)} B\right],
\end{equation*}
we conclude that $\sigma_{DR}^2 = \sigma_{CF}^2$.

By the law of iterated expectations, the value function $V_d = \mathbb{E} [y(T(d))] = \mathbb{E}_X [\mathbb{E} [y(T(d)) \,|\, X]]$. When there is no restriction on the class of ITRs, the true optimal ITR is
\begin{align*}
d^{\ast\ast} (X) & = \arg\max_d V_d = \arg\max_d \mathbb{E}_X [\mathbb{E} [y(T(d)) \,|\, X]] \\
& = I\{\mathbb{E} [y(T(1)) \,|\, X] > \mathbb{E} [y(T(0)) \,|\, X]\}.
\end{align*}
That is, the optimal ITR does not depend on the covariate distributions, but only the bilp function which is the same in both the source and target populations by Assumption~\ref{asmp:sme}. Thus both the maximizers of $\hat{V}_{DR} (\eta)$ and $\hat{V}_{CF} (\eta)$ converge to the true population parameter $\eta^{\ast\ast}$. However, $\hat{V}_{DR} (\eta)$ is biased since the expectation $\mathbb{E}_X$ is taken with respect to the source population.

\section{Additional simulations}

We first investigate the performance of the cross-fitted ACW estimator with different sample sizes $(N, m) = (5 \times 10^4, 2000), (1 \times 10^5, 4000), (2 \times 10^5, 8000), (4 \times 10^5, 16000), (6 \times 10^5, 24000), (8 \times 10^5, 32000)$. Figure~\ref{fig:simu.np1} and Table~\ref{tab:simu.np1} report the results from $200$ Monte Carlo replications. The variance is computed using the EIF.

\begin{figure}[p]
    \centering
    \caption{Boxplot of estimated value by ACW estimator with different sample sizes.}
    \includegraphics{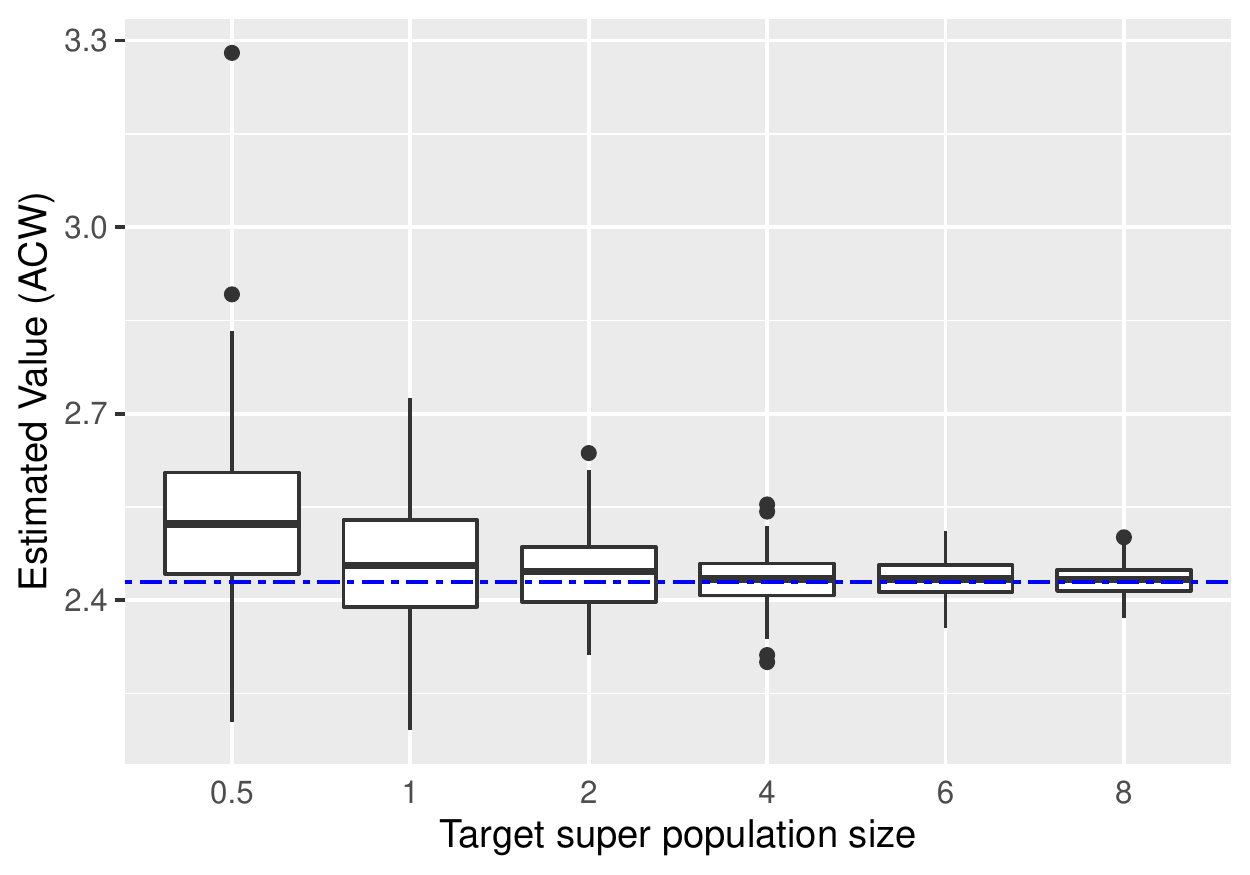}
    \label{fig:simu.np1}
\end{figure}

\begin{table}[p]
\centering
\caption{Numeric results of the ACW estimator. Bias is the empirical bias of point estimates; SD is the empirical standard deviation of point estimates; SE is the average of standard error estimates; CP is the empirical coverage probability of the 95\% Wald confidence intervals.} \bigskip
\begin{tabular}{l cccccc}
\hline\hline
$n; m (\times 10^3)$ & $\sim 780; 2$ & $\sim 1560; 4$ & $\sim 3120; 8$ & $\sim 6240; 16$ & $\sim 9360; 24$ & $\sim 12480; 32$ \\
\hline
Bias & 0.1041 & 0.0253 & 0.0134 & 0.0046 & 0.0031 & 0.0030 \\
SD & 0.1394 & 0.0985 & 0.0635 & 0.0419 & 0.0317 & 0.0267 \\
SE & 0.1611 & 0.0942 & 0.0627 & 0.0417 & 0.0330 & 0.0284 \\
CP(\%) & 97.5 & 93.5 & 96.0 &94.5 & 97.5 & 97.0 \\
\hline\hline
\end{tabular}
\label{tab:simu.np1}
\end{table}

\section{Details of real data analysis}

There are around $0.5\%$ and $1.6\%$ missing values in the RCT and OS data, respectively. We use the \texttt{mice} function in the \texttt{R} package \texttt{mice} \citep{van2011mice} to impute the missing values. 

Motivated by the clinical practice and existing work in the medical literature, we consider ITRs that depend on the following five variables:

\begin{itemize}
    \item AGE, SEX and Sequential Organ Failure Assessment (SOFA) score: these three baseline variables are well related to mortality in ICUs, so we consider them as important risk factors.
    \item Acute Kidney Injury Network (AKIN) score: \cite{jaber2018sodium} observed that the infusion of sodium bicarbonate improved survival outcomes and mortality rate in critically ill patients with severe metabolic acidemia and acute kidney injury. In the observational data, the AKIN score was not recorded, so we computed the score using serum creatinine measurement \citep{zavada2010comparison}.
    \item SEPSIS: we consider the presence of sepsis as a risk factor because it is the main condition associated with severe acidemia at the arrival in ICU. The effect of sodium bicarbonate infusion on patients with acidema and acute kidney injury was also observed in septic patients \citep{zhang2018effectiveness}.
\end{itemize}

\end{document}